  \theoremstyle{plain}
  \newtheorem{thm}{\protect\theoremname}
  \theoremstyle{plain}
  \newtheorem{lyxalgorithm}[thm]{\protect\algorithmname}
  \theoremstyle{definition}
  \newtheorem{defn}[thm]{\protect\definitionname}
  \theoremstyle{remark}
  \newtheorem{rem}[thm]{\protect\remarkname}
  \theoremstyle{plain}
  \newtheorem{lem}[thm]{\protect\lemmaname}
 \theoremstyle{definition}
  \newtheorem{example}[thm]{\protect\examplename}
  \theoremstyle{plain}
  \newtheorem{cor}[thm]{\protect\corollaryname}
  \theoremstyle{plain}
  \newtheorem{prop}[thm]{\protect\propositionname}
  \providecommand{\algorithmname}{Algorithm}
  \providecommand{\definitionname}{Definition}
  \providecommand{\examplename}{Example}
  \providecommand{\lemmaname}{Lemma}
  \providecommand{\propositionname}{Proposition}
  \providecommand{\remarkname}{Remark}
\providecommand{\corollaryname}{Corollary}
\providecommand{\theoremname}{Theorem}
\newcommand{\defeq}{\mathrel{\mathop:}=}
\newcommand{\uarg}{\,\cdot\,}
\newcommand{\ud}{\mathrm{d}}
\newcommand{\R}{\mathbb{R}}
\newcommand{\E}{\mathbb{E}}
\newcommand{\B}{\mathcal{B}}
\newcommand{\var}{\mathrm{var}}
\newcommand{\lecx}{\le_{\mathrm{cx}}}
\title{Establishing some order amongst exact approximations of MCMCs }
\author{Christophe Andrieu}
\address{Christophe Andrieu, School of Mathematics, University of Bristol, BS8 1TW, United
Kingdom}
\email{C.Andrieu@bristol.ac.uk}
\author{Matti Vihola}
\address{Matti Vihola, Department of Mathematics and Statistics, University of
Jyväskylä P.O.Box 35, FI-40014 Univ.~of Jyväskylä, Finland}
\email{matti.vihola@iki.fi}
\subjclass[2010]{Primary 60J22;
; secondary 60J05, 60E15, 65C05}
\keywords{
asymptotic variance, convex order, Markov chain Monte Carlo,
martingale coupling, pseudo-marginal algorithm
}
\begin{document}

\begin{abstract} 
Exact approximations of Markov chain Monte Carlo (MCMC) algorithms
are a general emerging class of sampling algorithms. One of the main
ideas behind exact approximations consists of replacing intractable
quantities required to run standard MCMC algorithms, such as the target
probability density in a Metropolis-Hastings algorithm, with estimators.
Perhaps surprisingly, such approximations lead to powerful algorithms
which are exact in the sense that they are guaranteed to have correct
limiting distributions. In this paper we discover a general framework
which allows one to compare, or order, performance measures of two
implementations of such algorithms. In particular, we establish an
order with respect to the mean acceptance probability, the first autocorrelation
coefficient, the asymptotic variance and the right spectral
gap. The key notion to guarantee the ordering is that of the convex
order between estimators used to implement the algorithms. We believe
that our convex order condition is close to optimal, and this is supported
by a counter-example which shows that a weaker variance
order is not sufficient. The
convex order plays a central role by allowing us to construct a martingale
coupling which enables the comparison of performance measures of Markov
chain with differing invariant distributions, contrary to existing results. 
We detail applications of our result
by identifying extremal distributions within given classes of approximations,
by showing that averaging replicas improves performance in a monotonic
fashion and that stratification is guaranteed to improve performance
for the standard implementation of the Approximate Bayesian Computation
(ABC) MCMC method. 
\end{abstract}

\sloppy
\maketitle

\section{Introduction}
\label{sec:Introduction} 

Consider a probability distribution $\pi$ defined on some measurable
space $\bigl(\mathsf{X},\mathcal{X}\bigr)$ and assume that sampling
realisations from this probability distribution is of interest. A
generic and popular way of achieving this consists of using Markov
chain Monte Carlo algorithms (MCMC), of which the Metropolis-Hastings
(MH) algorithm is the main workhorse. 

Let $\{q(x,\cdot)\}{}_{x\in\mathsf{X}}$
be a family of probability distributions on $\bigl(\mathsf{X},\mathcal{X}\bigr)$
such that $x\mapsto q(x,A)$ is measurable for all $A\in \mathcal{X}$;
in other words $(x,A)\mapsto q(x,A)$ defines a kernel.
(We assume hereafter implicitly that such a measurability condition is
satisfied by given families of probability measures, and 
that given functions are measurable.)
For any $x,y\in\mathsf{X}$
we define the ``acceptance ratio'' $r\bigl(x,y\bigr)$ as follows:
for $(x,y)\in\mathsf{R}\subset\mathsf{X}^{2}$ (with $\mathsf{R}$
the symmetric set as defined in \citep[Proposition 1]{tierney-note})
we let $r\bigl(x,y\bigr)$ be the Radon-Nikodym derivative 
\[
0<r(x,y):=\frac{\pi(\mathrm{d}y)q(y,\mathrm{d}x)}{\pi(\mathrm{d}x)q(x,\mathrm{d}y)}<\infty,
\]
and $r(x,y)=0$ otherwise. The MH algorithm defines a Markov chain
with the following transition kernel
\begin{equation}
P(x,{\rm d}y)=q(x,{\rm d}y)\min\left\{ 1,r(x,y)\right\}
+\delta_{x}\left({\rm d}y\right)\rho(x),
\label{eq:marginalMH}
\end{equation}
where
\[
\rho(x):=1-\int\min\left\{ 1,r(x,y)\right\} q(x,{\rm d}y),
\]
and $\delta_{x}$ is the Dirac measure at $x\in\mathsf{X}$. It is
standard to show that $P$ is reversible with respect to $\pi$ and
hence leaves $\pi$ invariant.

In some situations, evaluation of the ratio $r\bigl(x,y\bigr)$ is
either impossible or overly expensive, therefore rendering the algorithm
non-viable. A canonical example is when $\pi$ is the marginal of
a probability density, say $\pi(x)=\int\bar{\pi}(x,z)\mathrm{d}z$
for some latent variables $z$, where the integral is intractable
and the ``marginal MCMC'' targeting $\pi$ is therefore not implementable.
A classical way of addressing this problem consists of running an
MCMC targeting the joint distribution $\bar{\pi}$, which may however
become prohibitively inefficient in situations where the size of the
latent variable is high--this is for example the case in general state-space
models \citep{andrieu-doucet-holenstein}. A powerful alternative
which has recently attracted substantial interest consists, in simple
terms, of replacing the value of $\pi(x)$ with a noisy, but computationally
cheap, measurement whenever it is required in the implementation of
the MH algorithm above. Although this idea may at first appear naive,
it turns out to lead to exact algorithms, that is sampling from $\pi$
is guaranteed at equilibrium under mild assumptions on the nature
of the noise involved.

We now describe in more detail a modification of the ``exact'',
or marginal, MH algorithm above, where one replaces the density values
$\pi(x)$ with estimators $\hat{\pi}\bigl(x\bigr)$ for all $x\in\mathsf{X}$.
The estimators $\hat{\pi}\bigl(x\bigr)$ are assumed to be non-negative
and unbiased (up to a constant), in a sense that there exists $C>0$
such that for all $x\in\mathsf{X}$, $\mathbb{E}[\hat{\pi}\bigl(x\bigr)]=C\times\pi(x)$.
To fix ideas we first present the method in an algorithmic form, and
later define precisely the related mathematical construction. We call
this method a pseudo-marginal approximation of $P$ or simply the
pseudo-marginal algorithm, following \citep{andrieu-roberts,andrieu-vihola-2012}.
\begin{lyxalgorithm}[Pseudo-marginal algorithm] 
\label{alg:pseudo-marginal}Assume $(X_{0},\hat{\pi}\bigl(X_{0}\bigr))\in\mathsf{X}\times(0,\infty)$
are some initial values. Then, for $n=1,2,\ldots$ 
\begin{enumerate}[label=(\alph*)]
\item Propose a transition $Y_{n}\sim q(X_{n},\uarg).$
\item Given $Y_{n},$ generate a random variable $\hat{\pi}(Y_{n})\ge0$
(such that $\mathbb{E}[\hat{\pi}(Y_{n})]=C\pi(Y_{n})$).
\item With probability
  \[
      \min\bigg\{1,\frac{\hat{\pi}(Y_{n})}{\hat{\pi}(X_{n-1})}\frac{q(Y_{n,}X_{n-1})}{q(X_{n-1},Y_{n})}\bigg\},
  \]
accept and set
$\big(X_{n},\hat{\pi}(X_{n})\big)=\big(Y_{n,}\hat{\pi}(Y_{n})\big)$;
otherwise reject and set $\big(X_{n},\hat{\pi}(X_{n})\big)
=\big(X_{n-1},\hat{\pi}(X_{n-1})\big)$.
\end{enumerate}
\end{lyxalgorithm} 
For example, in the situation where $\pi(x)=\int\bar{\pi}(x,z)\mathrm{d}z$
one could use an importance sampling estimator of the integral and
define
\[
\hat{\pi}\bigl(x\bigr)=\frac{1}{N}\sum_{i=1}^{N}\frac{\bar{\pi}\bigl(x,Z_{i}\bigr)}{g_{x}(Z_{i})},
\]
where $Z_{1},\ldots,Z_{N}$ are samples from some instrumental probability
density $g_{x}$. The algorithm resulting from this choice is known
as the grouped independence Metropolis-Hastings \citep{beaumont}
and has proved useful in the context of inference in phylogenetic
trees. Other important instances of pseudo-marginal algorithms include
approximate Bayesian computation (ABC) MCMC \citep{marjoram-molitor-plagnol-tavare}
(see also Section \ref{sub:Stratification}), the particle marginal
MH \citep{andrieu-doucet-holenstein} and algorithms for inference
in diffusions \citep{beskos-papaspiliopoulos-roberts-fearnhead}.
Note that Algorithm \ref{alg:pseudo-marginal} is intact if we multiplied
all our estimators by a constant $C'>0$. Therefore, without loss
of generality, we will set $C=1$ in the remainder of the paper for
notational simplicity.

We now turn to an abstract representation of the pseudo-marginal algorithm,
used in the remainder of this paper, which has the advantage that
it simplifies notation and highlights the key structure and quantities
underpinning their behaviour \citep{andrieu-roberts,andrieu-vihola-2012,sherlock-thiery-roberts-rosenthal}.
The main idea is to introduce the normalised estimator $\varpi(x):=\hat{\pi}(x)/\pi(x)$
and to view this as a multiplicative perturbation, or noise, of the
true density $\pi(x)$, since then $\hat{\pi}(x)=\pi(x)\varpi(x)$.
An immediate consequence is that with $W_{n-1}:=\varpi(X_{n-1})$
and $U_{n}:=\varpi(Y_{n})$ the acceptance ratio in Algorithm \ref{alg:pseudo-marginal}
takes the form $r(X_{n-1},Y_{n})U_{n}/W_{n-1}$ which highlights the
structure of this approximation, a multiplicative perturbation of
the acceptance ratio of the exact algorithm, and the fact that $\bigl\{ X_{n},W_{n}\bigr\}$
is a Markov chain. This leads us to the following abstract model.
Let $\{Q_{x}\}_{x\in\mathsf{X}}$ be a family of probability measures
on the non-negative reals $\bigl(\R_{+},\mathcal{B}(\R_{+})\bigr)$
indexed by $x\in\mathsf{X}$ and such that $\int wQ_{x}(\mathrm{d}w)=1$
for any $x\in\mathsf{X}$. Consider the following probability distribution
on $\bigl(\mathsf{X}\times\R_{+},\mathcal{X}\times\mathcal{B}(\R_{+})\bigr)$
defined through

\[
\tilde{\pi}(\mathrm{d}x\times\mathrm{d}w):=\pi(\mathrm{d}x)Q_{x}(\mathrm{d}w)w.
\]
This formalises the multiplicative perturbation idea and includes
the distribution of the estimators in the target distribution. Now
one can check that the Markov transition probability $\tilde{P}$
of the pseudo-marginal approximation of the marginal kernel $P$ is
a MH algorithm targeting $\tilde{\pi}$ and can be written as follows
for any $(x,w)\in\mathsf{X}\times(0,\infty)$ \citep{andrieu-roberts,andrieu-vihola-2012}
\begin{align}
\label{eq:kernelpseudomarginal}
\tilde{P}(x,w;\mathrm{d}y\times\mathrm{d}u)&:=q(x,\mathrm{d}y)Q_{y}(\mathrm{d}u)\min\left\{
1,r(x,y)\frac{u}{w}\right\} \\
&\phantom{:=}+\delta_{x,w}(\mathrm{d}y\times\mathrm{d}u)\tilde{\rho}(x,w),\nonumber
\end{align}
with rejection probability

\[
\tilde{\rho}(x,w):=1-\int\min\left\{ 1,r(x,y)\frac{u}{w}\right\} q(x,\mathrm{d}y)Q_{y}(\mathrm{d}u).
\]
Note that while the distribution of the normalised estimators $Q_{y}(\mathrm{d}u)$
appears in the kernel and the rejection probability, it does not appear
explicitly in the algorithm. The fundamental marginal identity $\tilde{\pi}(\mathrm{d}x\times\mathbb{R}_{+})=\pi(\mathrm{d}x)$
is important since it tells us that despite being approximations of
$P$, pseudo-marginal algorithms are exact in the sense that they
sample marginally from the desired distribution $\pi$ at equilibrium.

Often in practice there are various possible choices for the estimators
in Algorithm \ref{alg:pseudo-marginal}, or equivalently the family
$\{Q_{x}\}_{x\in\mathsf{X}}$, depending on the specific application.
For example, in case of the simple importance sampling estimator presented
above one may choose between various families of importance sampling
distributions $\{g_{x}\}_{x\in\mathsf{X}}$, or vary the number of
samples $N$. The ultimate aim of the present work is to develop simple
and general tools for the characterisation of the performance of pseudo-marginal
algorithms as a function of the family $\{Q_{x}\}_{x\in\mathsf{X}}$,
therefore allowing comparisons. We refer the reader to \citep{andrieu-vihola-2012}
for other quantitative and qualitative properties of such algorithms.

We now recall standard performance measures relevant to the MCMC
context. For a generic Markov transition kernel $\Pi$ with invariant
distribution $\mu$, both defined on some measurable space
$\left(\mathsf{E},\mathcal{E}\right)$, and any function
$f:\mathsf{E}\rightarrow\mathbb{R}$, we define the asymptotic variance
of $f$ for $\Pi$ as follows. Denote
$\mathbb{E}_{\mu}\bigl(\cdot\bigr)$ and ${\rm var_{\mu}(\cdot)}$ the
expectation and variance operators corresponding to the Markov chain
$\{\Phi_{i}\}_{i\ge0}$ with transition kernel $\Pi$ and such that
$\Phi_{0}\sim\mu$, and denote ${\rm var}_{\mu}(f):={\rm
var}_{\mu}\bigl(f(\Phi_{0})\bigr)$). Then the asymptotic variance is
defined as 
\begin{equation*}
\mathrm{var}(f,\Pi)\defeq 
\lim_{M\rightarrow\infty}\mathrm{var}_{\mu}\left(
  M^{-1/2}{\textstyle \sum}_{i=1}^{M}f(\Phi_{i})\right),
\end{equation*}
where the limit is guaranteed to exist for reversible Markov chains,
but may be infinite \citep[cf.~][]{tierney-note}. When finite, 
the asymptotic variance naturally appears in the central limit theorem
\cite[e.g.][]{jones}, but it also
characterises the finite sample efficiency of MCMC
algorithms; see for instance the finite sample bounds given in
\cite{latuszynski-miasojedow-niemiro,rudolf}. When $\Pi$ is
reversible with respect to $\mu$, the so-called right spectral gap
${\rm Gap}_{R}\bigl(\Pi\bigr)$, defined precisely in Section \ref{sec:pseudo-marginals-results},
turns out in some scenarios to be an indicative criterion of performance.
For instance, when $\Pi$ is
a positive operator, the right spectral gap is also the absolute spectral
gap, which characterises the geometric convergence to
equilibrium \cite{roberts-tweedie}.  

Before focusing on the comparison of pseudo-marginal algorithms we
briefly review here what is known about the standard MH algorithm,
since the result will be referred to in several places in this paper
and helps to motivate our work. The result is in its first form due
to Peskun \citep{peskun} and was later extended to more general set-ups
in \citep{caracciolo-pelissetto-sokal,tierney-note,leisen-mira}.
\begin{thm}[Peskun] 
\label{th:peskun}
Let $\mu$ be a probability distribution on some
measurable space $\left(\mathsf{E},\mathcal{E}\right)$ and let $\Pi_{1},\Pi_{2}$
be two $\mu-$reversible Markov transition kernels. Assume that for
any $x\in\mathsf{E}$ and any $A\in\mathcal{E}$ such that $x\notin A$,
the transitions satisfy $\Pi_{1}(x,A)\geq\Pi_{2}(x,A)$. 
\begin{enumerate}[label=(\alph*)]
\item If $f:\mathsf{X}\rightarrow\mathbb{R}$ satisfies ${\rm var}_{\mu}(f)<\infty$,
then $\mathrm{var}(f,\Pi_{1})\leq\mathrm{var}(f,\Pi_{2})$.
\item The right spectral gaps satisfy ${\rm Gap}_{R}\bigl(\Pi_{1}\bigr)\geq{\rm {\rm Gap}}_{R}\bigl(\Pi_{2}\bigr)$. 
\end{enumerate}
\end{thm}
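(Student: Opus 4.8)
The plan is to reduce both claims to a single comparison of quadratic forms on $L^{2}(\mu)$ and then invoke operator monotonicity. The first step is to attach to each $\mu$-reversible kernel $\Pi$ its Dirichlet form
\[
\mathscr{E}_{\Pi}(f):=\tfrac{1}{2}\iint\bigl(f(y)-f(x)\bigr)^{2}\,\Pi(x,\mathrm{d}y)\,\mu(\mathrm{d}x)=\inner{f}{(I-\Pi)f}_{\mu},\qquad f\in L^{2}(\mu),
\]
the second equality being a consequence of reversibility. The crucial remark is that $\mathscr{E}_{\Pi}(f)$ only sees the off-diagonal part of $\Pi$, since the integrand vanishes on $\{y=x\}$. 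For each fixed $x$ the Peskun hypothesis states precisely that the off-diagonal measure $\Pi_{1}(x,\uarg)$ dominates $\Pi_{2}(x,\uarg)$ on $\mathsf{E}\setminus\{x\}$; integrating the non-negative integrand $y\mapsto(f(y)-f(x))^{2}$, which vanishes at $y=x$, against these dominated measures and then in $x$ against $\mu$ yields $\mathscr{E}_{\Pi_{1}}(f)\ge\mathscr{E}_{\Pi_{2}}(f)$ for every $f\in L^{2}(\mu)$. Equivalently $\inner{f}{\Pi_{1}f}_{\mu}\le\inner{f}{\Pi_{2}f}_{\mu}$, i.e.\ $\Pi_{1}\le\Pi_{2}$ in the sense of self-adjoint operators on $L^{2}(\mu)$.

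Claim (b) then drops out of the variational characterisation of the right spectral gap,
\[
\Gap_{R}(\Pi)=\inf\Bigl\{\mathscr{E}_{\Pi}(f)/\var_{\mu}(f)\given\var_{\mu}(f)>0\Bigr\}.
\]
Since $\mathscr{E}_{\Pi_{1}}(f)\ge\mathscr{E}_{\Pi_{2}}(f)$ for every non-constant $f$, the ratios obey the same inequality pointwise, and this survives the passage to the infimum, giving $\Gap_{R}(\Pi_{1})\ge\Gap_{R}(\Pi_{2})$.

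For claim (a) I would pass through the spectral representation of the asymptotic variance. Reducing to mean-zero $f$ by subtracting $\mu(f)$, and writing $\inner{f}{E^{\Pi}(\mathrm{d}\lambda)\,f}_{\mu}$ for the scalar spectral measure of the self-adjoint contraction $\Pi$ associated with $f$, the classical identity reads
\[
\var(f,\Pi)=\int_{[-1,1]}\frac{1+\lambda}{1-\lambda}\,\inner{f}{E^{\Pi}(\mathrm{d}\lambda)\,f}_{\mu}\in[0,\infty].
\]
To order these two integrals I would not work with the (possibly unbounded) resolvent directly, but with the regularised forms $v_{\beta}(f,\Pi):=\inner{f}{\phi_{\beta}(\Pi)f}_{\mu}$ for $\beta\in(0,1)$, where $\phi_{\beta}(\lambda):=(1+\beta\lambda)/(1-\beta\lambda)=-1+2(1-\beta\lambda)^{-1}$ and $\phi_{\beta}(\Pi)$ is bounded because $\|\beta\Pi\|\le\beta<1$. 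The decisive point is that $\phi_{\beta}$ is operator monotone on $[-1,1]$: from $\Pi_{1}\le\Pi_{2}$ one gets $I-\beta\Pi_{1}\ge I-\beta\Pi_{2}>0$, hence $(I-\beta\Pi_{1})^{-1}\le(I-\beta\Pi_{2})^{-1}$ by the anti-monotonicity of operator inversion, and the affine rescaling $t\mapsto-1+2t$ preserves this order; consequently $v_{\beta}(f,\Pi_{1})\le v_{\beta}(f,\Pi_{2})$ for every $\beta\in(0,1)$. Letting $\beta\uparrow1$ and noting that $\phi_{\beta}(\lambda)\to(1+\lambda)/(1-\lambda)\ge0$, with monotone increase on $\{\lambda\ge0\}$ and bounded convergence on $\{\lambda<0\}$, gives $v_{\beta}(f,\Pi)\uparrow\var(f,\Pi)$ and therefore $\var(f,\Pi_{1})\le\var(f,\Pi_{2})$.

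I expect the main obstacle to lie entirely in (a), namely the analytic control of the limit $\beta\uparrow1$ and the justification of the spectral identity for $\var(f,\Pi)$. The difficulty is that $(I-\Pi)^{-1}$ need not be bounded --- the asymptotic variance can be infinite, which corresponds to spectral mass piling up at $\lambda=1$ --- so the comparison genuinely has to be carried out at the level of the $\beta$-regularised forms and only then passed to the limit. By contrast, the Dirichlet-form domination and claim (b) are essentially formal consequences of the Peskun hypothesis.
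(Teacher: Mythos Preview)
Your proof is correct. Note that the paper does not give its own proof of this theorem---it is quoted as known background, with references to Peskun, Tierney, and Caracciolo--Pelissetto--Sokal. The paper does, however, remark that its Lemma~\ref{lem:variationalrepinverseoperator} and Theorem~\ref{thm:ordervariancealaTierney} furnish ``a quick route to Peskun type ordering'', so the natural comparison is with that implied proof.

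Both routes share the same skeleton: derive the Dirichlet-form domination $\mathcal{E}_{\Pi_1}(f)\ge\mathcal{E}_{\Pi_2}(f)$ from the off-diagonal hypothesis, deduce $\var(f,\lambda\Pi_1)\le\var(f,\lambda\Pi_2)$ for $\lambda\in[0,1)$, and then let $\lambda\uparrow1$ using Tierney's limit result. The difference lies only in how the middle step is carried out. The paper's route uses Bellman's variational formula $\langle f,A^{-1}f\rangle=\sup_g\{2\langle f,g\rangle-\langle g,Ag\rangle\}$, which immediately turns the operator inequality $I-\lambda\Pi_1\ge I-\lambda\Pi_2$ into an inequality of quadratic forms for the inverses. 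You instead invoke the anti-monotonicity of operator inversion on positive operators directly. These are essentially the same fact in different clothing---indeed Bellman's formula is one standard way to \emph{prove} operator monotonicity of $t\mapsto t^{-1}$---so neither approach is materially simpler. The paper's variational route has the mild advantage of yielding the two-sided quantitative bounds of Theorem~\ref{thm:ordervariancealaTierney}, which it later needs for the pseudo-marginal analysis; your spectral-calculus argument is perhaps more transparent if one only wants the Peskun inequality itself.

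One small slip: you write $v_\beta(f,\Pi)\uparrow\var(f,\Pi)$, but on the spectral region $\{\lambda<0\}$ the map $\beta\mapsto\phi_\beta(\lambda)$ is \emph{decreasing}, so $v_\beta$ need not be monotone in $\beta$. This is harmless: your own splitting into a monotone-convergence piece on $\{\lambda\ge0\}$ and a dominated-convergence piece on $\{\lambda<0\}$ already gives plain convergence $v_\beta(f,\Pi)\to\var(f,\Pi)$, which is all that is needed to pass the inequality $v_\beta(f,\Pi_1)\le v_\beta(f,\Pi_2)$ to the limit.
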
 
In fact, as pointed out by \citep{caracciolo-pelissetto-sokal}, the
off-diagonal order $\Pi_{1}(x,A)\geq\Pi_{2}(x,A)$ is stronger than
needed for Theorem \ref{th:peskun} to hold, and a weaker condition
is that 
\[
\int f\bigl(x\bigr)f\bigl(y\bigr)\mu\bigl({\rm d}x\bigr)\Pi_{1}\bigl(x,{\rm d}y\bigr)\leq\int f\bigl(x\bigr)f\bigl(y\bigr)\mu\bigl({\rm d}x\bigr)\Pi_{2}\bigl(x,{\rm d}y\bigr),
\]
for any $f$ such that ${\rm var}_{\mu}(f)<\infty$. This condition
turns out to be a necessary and sufficient condition. However the
popularity of Theorem \ref{th:peskun} stems from the simplicity of
its statement providing a simple and intuitive criterion for the comparison
of performance of algorithms, which can be checked in practice. As
we shall see, Peskun's result is however not directly relevant to
pseudo-marginal algorithms when the aim is to compare different approximation
strategies. This stems from the fact that changing $\{Q_{x}\}_{x\in\mathsf{X}}$
changes the invariant distribution of the Markov transition kernel
involved.

More specifically, consider two families of distributions, $\bigl\{ Q_{x}^{(1)}\bigr\}_{x\in\mathsf{X}}$
and $\bigl\{ Q_{x}^{(2)}\bigr\}_{x\in\mathsf{X}}$ leading to two
competing pseudo-marginal approximations of $P$, say $\tilde{P}_{1}$
and $\tilde{P}_{2}$, with distinct invariant distributions $\tilde{\pi}_{1}$
and $\tilde{\pi}_{2}$ respectively. Note that both algorithms target
$\pi(\cdot)$ marginally and share the same family of proposal distributions
$\{q(x,\cdot)\}_{x\in\mathsf{X}}$. In the light of Peskun's result,
a natural question is to find useful conditions on the families $\bigl\{ Q_{x}^{(1)}\bigr\}_{x\in\mathsf{X}}$
and $\bigl\{ Q_{x}^{(2)}\bigr\}_{x\in\mathsf{X}}$ which would ensure
that $\mathrm{var}(f,\tilde{P}_{1})\leq\mathrm{var}(f,\tilde{P}_{2})$
for certain classes of functions $f:\mathsf{X}\times\R_{+}\rightarrow\mathbb{R}$,
or that ${\rm Gap}_{R}\bigl(\tilde{P}_{1}\bigr)\leq{\rm {\rm Gap}}_{R}\bigl(\tilde{P}_{2}\bigr)$. 

As we shall see, despite the difficulty pointed out earlier, it is
possible to prove such results, assuming a classical convex stochastic
order between the unit mean random variables $W_{x}^{(1)}\sim Q_{x}^{(1)}$
and $W_{x}^{(2)}\sim Q_{x}^{(2)}$ for $x\in\mathsf{X}$ (see Theorem
\ref{thm:orderingpseudomarginals}). Because the convex order is stronger
than the simpler variance order between $W_{x}^{(1)}$ and $W_{x}^{(2)}$,
one may wonder whether the variance order could be sufficient to imply
a similar result. We show, by means of a counterexample, that the
variance order is not sufficient to guarantee such results (see Example
\ref{ex:counterexamplevarianceorder}). 

Our framework benefits from a considerable body of work on the convex
order (see \citep{mullercomparison,shaked-shanthikumar} for recent
reviews) which allows one to import known results and establish ordering
results for pseudo-marginal algorithms with minimal effort. Some applications
are discussed in some detail in Sections \ref{sec:Averaging-in-the},
\ref{sub:Stratification} and \ref{sec:Extremal-properties}. We also
point to a very recent application of our result, behind an interesting
reasoning aimed at deriving a type of quantitative bounds for pseudo-marginal
algorithms in some contexts \citep[Theorem 3]{2014arXiv1404.6298B}.

We do not know of earlier works with similar Peskun-type orders for
Markov chains with different invariant distributions. The recent work
\citep{maire-douc-olsson} establishes Peskun-type orders for inhomogeneous
Markov chains, which is applicable to some algorithms which are discussed
in the present work as well. However, we emphasise that our results
are much more general when applied in the context of pseudo-marginal
algorithms.

The remainder of the paper is organised as follows. In section \ref{sec:Exact-approximations-ratios},
before embarking on the analysis of the pseudo-marginal algorithms
described above, we first focus on related and practically relevant
algorithms for which convex ordering is pertinent but the comparison
is mathematically much simpler. This allows a gentle and progressive
introduction of the material, and helps to explain also why pseudo-marginal
algorithms are much more difficult to analyse. In Section \ref{sec:pseudo-marginals-results},
we formulate our main findings whose proofs are postponed to Section
\ref{sec:Proofs}. We prepare the proofs by recording variational
bounds for the difference of asymptotic variances in Section \ref{sec:Variational-bounds-for}
which we could not find formulated elsewhere in the literature. In
Section \ref{sec:Proofs} we develop our main arguments, which rely
on an embedding of the probability distributions involved into a single
probability distribution: this is possible thanks to a martingale
coupling which is itself a by-product of the convex order. In Section
\ref{sec:Applications}, we illustrate the usefulness of the framework
in practice. We conclude by discussing other applications and extensions
of the present work in Section \ref{sec:Discussion-and-perspectives}.


\section{Convex order and a simple application}
\label{sec:Exact-approximations-ratios} 

In this section we briefly review well known equivalent characterisations
of the convex order which we will use in the remainder of the paper.
We also provide an example of algorithms, related to pseudo-marginal
algorithms but different in an essential manner, for which the notion
of convex order characterises performance without the need of more
sophisticated mathematical developments.

The convex order is a natural way of comparing the ``variability''
or ``dispersion'' of two random variables or distributions \citep{shaked-shanthikumar,mullercomparison}.
\begin{defn}
\label{def:cx} The random variables $W_{1}\sim F_{1}$ and $W_{2}\sim F_{2}$
are \emph{convex ordered}, denoted $W_{1}\lecx W_{2}$ or $F_{1}\lecx F_{2}$
hereafter, if for any convex function $\phi:\mathbb{R}\to\mathbb{R}$,
\[
\mathbb{E}[\phi(W_{1})]\le\mathbb{E}[\phi(W_{2})],
\]
 whenever the expectations are well-defined.\end{defn}
\begin{rem}
\label{rem:cx-vs-var}For integrable $W_{1}$ and $W_{2}$, the convex
order $W_{1}\lecx W_{2}$ clearly implies $\mathbb{E}[W_{1}]=\mathbb{E}[W_{2}]$
from the convexity of $x\mapsto x$ and $x\mapsto-x$, and if $W_{1}$
and $W_{2}$ are square integrable, then ${\rm var}(W_{1})\le{\rm var}(W_{2})$,
since $x\mapsto x^{2}$ is convex. The converse, however, does not
generally hold true. This turns out to be an important point when
discussing the characterisation of performance of pseudo-marginal
approximations.\end{rem}
\begin{lem}
\label{lem:cx-char} Suppose that $\mathbb{E}[W_{1}]=\mathbb{E}[W_{2}]\in\mathbb{R}$.
Then, $W_{1}\lecx W_{2}$ is equivalent to: 
\begin{enumerate}[label=(\alph*)]
\item \label{item:cx-plus} $\mathbb{E}[(W_{1}-t)_{+}]\le\mathbb{E}[(W_{2}-t)_{+}]$
for all $t\in\mathbb{R}$, where $(\,\cdot\,)_{+}:=\max\{\,\cdot\,,0\}$. 
\item \label{item:cx-min} $\mathbb{E}[\min\{W_{1},t\}]\ge\mathbb{E}[\min\{W_{2},t\}]$
for all $t\in\mathbb{R}$.
\end{enumerate}
\end{lem}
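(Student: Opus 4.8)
The plan is to prove the two implications $W_{1}\lecx W_{2}\Rightarrow$\ref{item:cx-plus} and \ref{item:cx-plus}$\Rightarrow W_{1}\lecx W_{2}$, together with the equivalence \ref{item:cx-plus}$\Leftrightarrow$\ref{item:cx-min}. The first implication is immediate: for each fixed $t$ the map $x\mapsto(x-t)_{+}$ is convex, so Definition \ref{def:cx} applies directly. The equivalence of \ref{item:cx-plus} and \ref{item:cx-min} is a one-line identity: since $\min\{x,t\}=x-(x-t)_{+}$ and $\mathbb{E}[W_{1}]=\mathbb{E}[W_{2}]$, one has $\mathbb{E}[\min\{W_{1},t\}]-\mathbb{E}[\min\{W_{2},t\}]=-\bigl(\mathbb{E}[(W_{1}-t)_{+}]-\mathbb{E}[(W_{2}-t)_{+}]\bigr)$, so the inequality in \ref{item:cx-min} holds for all $t$ if and only if that in \ref{item:cx-plus} does. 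The substance of the lemma is therefore the remaining implication \ref{item:cx-plus}$\Rightarrow W_{1}\lecx W_{2}$.

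To prove it, I would fix a convex $\phi:\mathbb{R}\to\mathbb{R}$ and write $m:=\mathbb{E}[W_{1}]=\mathbb{E}[W_{2}]$. First I would record the companion inequality $\mathbb{E}[(t-W_{1})_{+}]\le\mathbb{E}[(t-W_{2})_{+}]$ for all $t$, which follows from \ref{item:cx-plus} and the equal means via the identity $(t-x)_{+}=(x-t)_{+}-(x-t)$. Next, choosing a subgradient $c\in\partial\phi(m)$, I would set $\psi(x):=\phi(x)-\phi(m)-c(x-m)$, so that $\psi$ is convex, nonnegative, and attains its minimum value $0$ at $m$. Because $\phi$ is bounded below by an affine function and each $W_{i}$ is integrable (its mean $m$ being finite), the expectation $\mathbb{E}[\phi(W_{i})]$ is automatically well-defined in $(-\infty,\infty]$, and the equal means give $\mathbb{E}[\phi(W_{i})]=\mathbb{E}[\psi(W_{i})]+\phi(m)$; hence it suffices to show $\mathbb{E}[\psi(W_{1})]\le\mathbb{E}[\psi(W_{2})]$.

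The key step is to approximate $\psi$ from below by piecewise-linear convex functions built from the hinge functions already controlled by \ref{item:cx-plus} and its companion. Taking a countable dense set of contact points and including the horizontal supporting line at $m$, I would form $\psi_{n}:=\max\{0,\ell_{1},\dots,\ell_{n}\}$, where the $\ell_{i}$ are supporting lines of $\psi$; each $\psi_{n}$ is nonnegative, convex, piecewise linear, satisfies $\psi_{n}\le\psi$, and $\psi_{n}\uparrow\psi$ pointwise. A nonnegative convex piecewise-linear function decomposes as a finite nonnegative combination $\psi_{n}=\sum_{i}c_{i}(x-t_{i})_{+}+\sum_{j}d_{j}(s_{j}-x)_{+}$ with $c_{i},d_{j}\ge0$, so linearity of expectation together with \ref{item:cx-plus} and the companion inequality yields $\mathbb{E}[\psi_{n}(W_{1})]\le\mathbb{E}[\psi_{n}(W_{2})]$ for every $n$. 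Letting $n\to\infty$ and invoking the monotone convergence theorem, valid since $\psi_{n}\ge0$ and $\psi_{n}\uparrow\psi$, gives $\mathbb{E}[\psi(W_{1})]\le\mathbb{E}[\psi(W_{2})]$, completing the argument.

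I expect the only delicate point to be this last approximation step: one must verify that the $\psi_{n}$ can be chosen to increase to $\psi$ and that they admit the hinge decomposition with nonnegative coefficients. The monotone convergence framing is what makes the argument robust to the possibility that $\mathbb{E}[\psi(W_{i})]=+\infty$, so no separate integrability hypotheses are needed; this is precisely why reducing to the nonnegative function $\psi$ at the outset is worthwhile.
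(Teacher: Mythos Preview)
Your argument is correct. The approach differs from the paper's, which does not give a self-contained proof but instead cites results from \citep{shaked-shanthikumar}: condition~\ref{item:cx-plus} is identified as the characterisation of the \emph{increasing convex order}, and then the fact that increasing convex order together with equal means implies convex order is quoted; condition~\ref{item:cx-min} is handled analogously via the increasing concave order. Your route is more elementary and entirely self-contained: you reduce \ref{item:cx-plus}$\Leftrightarrow$\ref{item:cx-min} to the one-line identity $\min\{x,t\}=x-(x-t)_{+}$, and you prove the substantive implication \ref{item:cx-plus}$\Rightarrow W_{1}\lecx W_{2}$ by subtracting a supporting line at the common mean, approximating the resulting nonnegative convex $\psi$ from below by maxima of finitely many supporting lines (including the zero line), decomposing each such piecewise-linear approximant as a nonnegative combination of hinges $(x-t)_{+}$ and $(s-x)_{+}$, and passing to the limit by monotone convergence. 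The hinge decomposition is valid because each $\psi_{n}$ attains the value $0$ at $m$ (all supporting lines of $\psi$ lie below $\psi(m)=0$ there), so the left and right halves of $\psi_{n}$ are separately convex and can be written with nonnegative slope increments. The paper's proof is shorter on the page but defers the work to the reference; yours makes the mechanism explicit and handles the possibility $\mathbb{E}[\phi(W_{i})]=+\infty$ cleanly via the reduction to $\psi\ge 0$.
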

\begin{proof}
Condition \ref{item:cx-plus} is a characterisation of the \emph{increasing
convex order} (Definition \ref{def:cx} restricted to non-decreasing
convex $\phi$) \citep[Theorem 4.A.2]{shaked-shanthikumar}, and the
increasing convex order with identical expectations implies the convex
order \citep[Theorem 4.A.35]{shaked-shanthikumar}. Similarly, \ref{item:cx-min}
implies $\mathbb{E}[(W_{1}-t)_{-}]\ge\mathbb{E}[(W_{2}-t)_{-}]$ which
is equivalent to the \emph{increasing concave order} (Definition \ref{def:cx}
with non-decreasing concave functions $\phi$), implying the desired
convex order \citep[Theorem 4.A.35]{shaked-shanthikumar}. \end{proof}
\begin{rem}
If $-\infty\le\underline{w}\le\bar{w}\le\infty$ are constants such
that $W_{1},W_{2}\in[\underline{w},\bar{w}]$ almost surely, then
it is sufficient to consider the conditions in Lemma \ref{lem:cx-char}
\ref{item:cx-plus} or \ref{item:cx-min} with $t\in[\underline{w},\bar{w}]$
only.
\end{rem}
It should be clear by application of Jensen's inequality that if $W_{1}$
and $W_{2}$ are defined on the same probability space and $\mathbb{E}[W_{2}\mid W_{1}]=W_{1}$,
then $W_{1}\lecx W_{2}$. The following characterisation of the
convex order, often referred to as Strassen's theorem \citep[Theorem 8]{strassen},
establishes the converse, that is that the convex order implies the
existence of this type of martingale representation. This characterisation
turns out to be central to our analysis as it allows us to eventually
``embed'' $\tilde{\pi}_{1}$ and $\tilde{\pi}_{2}$ into a unique
probability distribution and open up the possibility to use Hilbert
space techniques on a common space.
\begin{thm}
\label{th:strassen}Suppose that $\mathbb{E}[W_{1}]$ and $\mathbb{E}[W_{2}]$
are well-defined. Then, $W_{1}\lecx W_{2}$ if and only if there
exists a probability space with random variables $\check{W}_{1}$
and $\check{W}_{2}$ coinciding with $W_{1}$ and $W_{2}$ in distribution
respectively, such that $(\check{W}_{1},\check{W}_{2})$ is a martingale
pair, that is, $\mathbb{E}[\check{W}_{2}\mid\check{W}_{1}]=\check{W}_{1}$
(a.s.). 
\end{thm}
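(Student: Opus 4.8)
The reverse implication is the substance of the statement; the forward one is immediate, as already noted before the theorem: if $\mathbb{E}[\check{W}_{2}\mid\check{W}_{1}]=\check{W}_{1}$, then for convex $\phi$ the conditional Jensen inequality gives $\mathbb{E}[\phi(\check{W}_{1})]=\mathbb{E}[\phi(\mathbb{E}[\check{W}_{2}\mid\check{W}_{1}])]\le\mathbb{E}[\mathbb{E}[\phi(\check{W}_{2})\mid\check{W}_{1}]]=\mathbb{E}[\phi(\check{W}_{2})]$, so $W_{1}\lecx W_{2}$. For the converse I would recast the existence of a martingale pair as the existence of a Markov kernel $K$ on $\R$ that is \emph{mean-preserving}, $\int u\,K(w,\mathrm{d}u)=w$ for $F_{1}$-a.e.\ $w$, and reproduces $F_{2}$, i.e.\ $F_{1}K=F_{2}$; indeed, drawing $(\check{W}_{1},\check{W}_{2})$ from $F_{1}(\mathrm{d}w)K(w,\mathrm{d}u)$ then yields exactly the desired martingale coupling.

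First I would treat the case where $W_{1},W_{2}$ are supported in a fixed compact interval $I=[a,b]$, recovering the general case by truncation afterwards. Let $\mathcal{C}\subset\mathcal{P}(I)$ be the set of second marginals of couplings $\gamma=F_{1}(\mathrm{d}w)K(w,\mathrm{d}u)$ as $K$ ranges over mean-preserving kernels. The set of such couplings is convex and, since the martingale constraint $\int g(w)(u-w)\,\gamma(\mathrm{d}w,\mathrm{d}u)=0$ (for all $g\in C(I)$) is weak-$*$ closed inside the weak-$*$ compact set of probability measures on $I^{2}$ with first marginal $F_{1}$, it is weak-$*$ compact; hence its continuous image $\mathcal{C}$ is a weak-$*$ compact convex subset of $\mathcal{P}(I)$. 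The goal becomes to show $F_{2}\in\mathcal{C}$.

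The heart of the argument is a separation computation. Suppose, for contradiction, that $F_{2}\notin\mathcal{C}$. Since $\mathcal{C}$ is weak-$*$ compact and convex, the Hahn--Banach theorem furnishes $\phi\in C(I)$ strictly separating $F_{2}$, that is $\int\phi\,\mathrm{d}F_{2}>\sup_{\nu\in\mathcal{C}}\int\phi\,\mathrm{d}\nu$. I would then identify the right-hand side: maximising $\int\phi\,\mathrm{d}K(w,\cdot)$ over probability measures on $I$ with mean $w$ yields, for each $w$, the concave envelope $\bar{\phi}(w):=\inf\{c(w):c\ge\phi\text{ concave on }I\}$ (a measurable selection of optimal kernels makes this rigorous), so that $\sup_{\nu\in\mathcal{C}}\int\phi\,\mathrm{d}\nu=\int\bar{\phi}\,\mathrm{d}F_{1}$. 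Now $\bar{\phi}$ is concave, so $W_{1}\lecx W_{2}$ applied to the convex function $-\bar{\phi}$ gives $\int\bar{\phi}\,\mathrm{d}F_{1}\ge\int\bar{\phi}\,\mathrm{d}F_{2}$, while $\bar{\phi}\ge\phi$ gives $\int\bar{\phi}\,\mathrm{d}F_{2}\ge\int\phi\,\mathrm{d}F_{2}$. Chaining these yields $\int\phi\,\mathrm{d}F_{2}\le\int\bar{\phi}\,\mathrm{d}F_{1}$, contradicting the strict separation. Hence $F_{2}\in\mathcal{C}$, producing the kernel $K$ and the martingale pair.

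The main obstacle is the passage from compactly supported variables to the general integrable case needed here, since the estimators are nonnegative with unit mean but need not be bounded, and the martingale identity involves first moments and is sensitive to mass escaping to infinity. I would handle this by truncating to $I_{n}=[-n,n]$, extracting a weak-$*$ limit of the resulting martingale couplings $\gamma_{n}$ (tightness being inherited from the fixed marginals), and checking that the martingale property survives the limit; the quantitative input is that the convex order controls the tails, since by Lemma~\ref{lem:cx-char} the quantities $\mathbb{E}[(\check{W}_{2}^{(n)}-t)_{+}]$ are dominated uniformly, giving the uniform integrability needed to pass $\mathbb{E}[(u-w)g(w)]$ through the limit for bounded continuous $g$. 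Alternatively, for the unbounded case one may invoke the cited general forms of Strassen's theorem and the Cartier--Fell--Meyer dilation theorem directly; the compact-support separation above is what I would present as the conceptual core, as it shows transparently that the convex order, via concave envelopes, is precisely the obstruction to the existence of a mean-preserving dilation.
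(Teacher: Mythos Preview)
The paper does not prove this theorem. It is stated as a classical result and attributed to Strassen \citep[Theorem 8]{strassen}; no argument is given, and the theorem is used only as a tool (via Theorem~\ref{thm:conditional-martingale-coupling}) in the later coupling construction. So there is nothing in the paper to compare your proof against.

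That said, your sketch is a recognisable and essentially correct route to Strassen's theorem: the Hahn--Banach separation in $\mathcal{P}(I)$ combined with the identification of $\sup_{\nu\in\mathcal{C}}\int\phi\,\mathrm{d}\nu$ as $\int\bar{\phi}\,\mathrm{d}F_1$ via the concave envelope is the Cartier--Fell--Meyer argument, and the contradiction you derive is the right one. Two points you would need to tighten in a full proof: first, the measurable selection of an optimal kernel $w\mapsto K(w,\cdot)$ attaining $\bar\phi(w)$ is not needed for the separation step itself (you only need the supremum, and $\bar\phi$ is automatically upper semicontinuous and hence measurable), but it \emph{is} needed once you know $F_2\in\mathcal{C}$ to extract an actual kernel from the coupling $\gamma$---disintegration of $\gamma$ with respect to its first marginal handles this directly. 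Second, your truncation-and-limit argument for the noncompact case is the delicate part; uniform integrability of the second marginal is indeed what is required, and it does follow from the convex order (e.g.\ via $\mathbb{E}[(W_2-t)_+]\to 0$), but the passage of the martingale constraint through the weak limit needs the test functions $g$ to be bounded \emph{and} the integrand $u\,g(w)$ to be uniformly integrable under $\gamma_n$, which your tail control on $W_2$ provides. Your closing remark that one may simply cite Strassen or Cartier--Fell--Meyer for the general case is exactly what the paper does.
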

Before turning to the study of pseudo-marginal algorithms, we consider
first another class of related algorithms, which are much simpler
to analyse, yet practically relevant \citep{ceperley1999penalty,nicholls2012coupled,karagiannis-andrieu-2013}.
These algorithms are closely related to pseudo-marginal algorithms
in that they use noisy measurements of the acceptance ratio $r(x,y)$
of $P$, but the way the approximation is obtained differs in a fundamental
way. For pseudo-marginal algorithms the approximation of $r(x,y)$
stems at each iteration from a previously sampled, and ``recycled'',
approximation of $\pi$ at $x\in\mathsf{X}$ and a fresh approximation
of $\pi$ at $y\in\mathsf{X}$ (see Algorithm \ref{alg:pseudo-marginal}).
In contrast for the algorithms we are now concerned with $r(x,y)$
is approximated afresh whenever it is needed. More precisely, with
$(\mathsf{X},\mathcal{X})$ and $\pi$ as earlier and using a formalism
similar to that used for pseudo-marginal algorithms, we consider the
following Markov transition probability on $(\mathsf{X},\mathcal{X})$,
\[
\mathring{P}(x;\mathrm{d}y)=q(x,\mathrm{d}y)\int\min\left\{ 1,r(x,y)\varpi\right\} Q_{xy}(\mathrm{d}\varpi)+\delta_{x}(\mathrm{d}y)\mathring{\rho}(x),
\]
where $\{Q_{xy}\}{}_{(x,y)\in\mathsf{X}^{2}}$ is a family of probability
measures on positive reals, $r(x,y)$ is as in Section \ref{sec:Introduction}
and $\mathring{\rho}(x)\in[0,1]$ ensures that this is a Markov transition
probability. We stress again on the fact that while $r(x,y)$ is intractable,
the product $r(x,y)\varpi$ is assumed to be computable. It can be
shown that the condition 
\[
\int Q_{xy}(\mathrm{d}\varpi)\varpi\mathbb{I}\left\{ \varpi\in A\right\} 
=\int Q_{yx}(\mathrm{d}\varpi)\mathbb{I}\left\{ \frac{1}{\varpi}\in A\right\}
\]
for all $x,y\in\mathsf{X}$ and all
$A\in\B(\R_+)$
ensures that $\mathring{P}$ is reversible with respect to $\pi;$
Lemma \ref{lem:circlePreversible}
in Appendix \ref{sec:perturbed-mh} details 
a slightly more general statement.
\begin{example} 
\label{ex:toyP_ring} 
Suppose $a_{xy}=a_{yx}>0$ for all $x,y\in\mathsf{X}$,
then the distributions
\[
Q_{xy}(\ud \varpi)\defeq 
\frac{\delta_{a_{xy}}\left({\rm d}\varpi\right)+a_{xy}\delta_{a_{xy}^{-1}}\left({\rm
d}\varpi\right)}{1+a_{xy}}
\]
satisfy the condition above. Another more practically relevant case is
the log-normal distribution with suitable parameters, which corresponds
to the so-called penalty method \citep{ceperley1999penalty}.
\end{example} 
A fundamental consequence of the fact that the acceptance ratio is
approximated afresh for $x,y\in\mathsf{X}$ whenever it is needed
is that $\mathring{P}$ is a Markov chain on $(\mathsf{X},\mathcal{X})$
and has $\pi$ as invariant distribution, independently of $\{Q_{xy}\}{}_{(x,y)\in\mathsf{X}^{2}}$.
This is in contrast with pseudo-marginal algorithms, for which $\{X_{n}\}_{n\geq0}$
is not a Markov chain, but $\{X_{n},W_{n}\}_{n\geq0}$ is, and the
invariant distribution of the latter depends on $\{Q_{x}\}{}_{x\in\mathsf{X}}.$
As a result, the algorithm corresponding to $\mathring{P}$ is particularly
simple to analyse in the context of the convex ordering. Indeed if
for some $x,y\in\mathsf{X}$ we have $Q_{xy}^{(1)}\lecx Q_{xy}^{(2)}$
then from Lemma \ref{lem:cx-char} \ref{item:cx-min} we have the
following inequality for the acceptance probabilities, 
\[
\int\min\left\{ 1,r(x,y)\varpi_{2}\right\} Q_{xy}^{(2)}(\mathrm{d}\varpi_{2})\leq\int\min\left\{ 1,r(x,y)\varpi_{1}\right\} Q_{xy}^{(1)}(\mathrm{d}\varpi_{1}).
\]
In the situation where this inequality holds for any $x,y\in\mathsf{X}$,
we can apply Peskun's result stated in Theorem \ref{th:peskun} directly.
More specifically, if $\mathring{P}_{1}$ and $\mathring{P}_{2}$
are the algorithms corresponding to $\{Q_{xy}^{(1)}\}{}_{(x,y)\in\mathsf{X}^{2}}$
and $\{Q_{xy}^{(2)}\}{}_{(x,y)\in\mathsf{X}^{2}}$, the above inequality
implies that the probability of leaving any state $x\in\mathsf{X}$
is larger for $\mathring{P}_{1}$ than for $\mathring{P}_{2}$, and
therefore we conclude that $\mathrm{var}(f,\mathring{P}_{1})\le\mathrm{var}(f,\mathring{P}_{2})$
and ${\rm Gap}_{R}\bigl(\mathring{P}_{1}\bigr)\ge{\rm Gap}_{R}\bigl(\mathring{P}_{2}\bigr)$. 

One interest of identifying the convex order as an appropriate concept
for the comparison of the asymptotic properties of such algorithms
is that it allows one to use the wealth of existing results concerning
the convex order. For example it is direct to establish that the diatomic
distribution in Example \ref{ex:toyP_ring} is the worst possible
distribution among all the probability distributions with support
included in $[a_{xy}^{-1},a_{xy}]$ and the choice of $a_{xy}=1$, the ``noiseless''
algorithm, leads to the best algorithm; see Section \ref{sec:Extremal-properties}.

Turning back to the pseudo-marginal algorithms, it will be useful
in what follows to consider the expected (or at equilibrium) acceptance probability and
particularly the conditional expected acceptance probability defined
as 
\begin{align*}
\alpha(\tilde{P})&:=\int\alpha_{xy}(\tilde{P})\pi(\mathrm{d}x)q(x,\mathrm{d}y)\qquad\text{with}
\\
\alpha_{xy}\bigl(\tilde{P}\bigr)&:=\int\min\left\{
1,r(x,y)\frac{u}{w}\right\} Q_{x}(\mathrm{d}w)wQ_{y}(\mathrm{d}u),
\end{align*}
respectively. It is possible to show directly that if for some $x,y\in\mathsf{X}$
the orders $Q_{x}^{(1)}\lecx Q_{x}^{(2)}$ and $Q_{y}^{(1)}\lecx Q_{y}^{(2)}$
hold, then $\alpha_{xy}\bigl(\tilde{P}_{1}\bigr)\ge\alpha_{xy}\bigl(\tilde{P}_{2}\bigr),$
where $\tilde{P}_{1}$ and $\tilde{P}_{2}$ denote pseudo-marginal
algorithms with $\bigl\{ Q_{x}^{(1)}\bigr\}_{x\in\mathsf{X}}$ and
$\bigl\{ Q_{x}^{(2)}\bigr\}_{x\in\mathsf{X}}$ (see also Theorem \ref{thm:orderingpseudomarginals}
\ref{item:pseudo-accprob} and Theorem \ref{thm:propertiesmoon_Ps}
for a proof). If $Q_{x}^{(1)}\lecx Q_{x}^{(2)}$ for all
$x\in\mathsf{X}$, then clearly $\alpha(\tilde{P}_{1})\ge\alpha(\tilde{P_{2}})$.

It is well known that acceptance probabilities are not always a useful performance criterion 
for MH algorithms. They turn out, however, to be a relevant in the  present context. Indeed for a fixed family $\big\{q(x,\cdot)\big\}_{x \in \mathsf{X}}$ a larger value $\alpha_{xy}\bigl(\tilde{P}\bigr)$ indicates that
at stationarity the transition from $x$ to $y$ is more likely, which is somewhat reminiscent 
of the off-diagonal order in Theorem \ref{th:peskun}.
Nevertheless, as pointed out earlier, Peskun's result does not apply here since, among
other things, $\tilde{P}_{1}$ and $\tilde{P}_{2}$ do not share the
same invariant distribution. In the next section we in fact establish
that showing $Q_{x}^{(1)}\lecx Q_{x}^{(2)}$ for all $x\in\mathsf{X}$
is sufficient to imply the desired orders.


\section{Main results: Ordering pseudo-marginal MCMC}
\label{sec:pseudo-marginals-results} 

Our main results are all based on the following conditional convex
order assumption on the weight distributions. 
\begin{defn} 
\label{a:conditional-convex-order} 
Two families of weight distributions
$\{Q_{x}^{(1)}\}_{x\in\mathsf{X}}$ and $\{Q_{x}^{(2)}\}_{x\in\mathsf{X}}$
satisfy $\{Q_{x}^{(1)}\}_{x\in\mathsf{X}}\lecx \{Q_{x}^{(2)}\}_{x\in\mathsf{X}}$
if 
\[
Q_{x}^{(1)}\lecx Q_{x}^{(2)}\qquad\text{for all \ensuremath{x\in\mathsf{{X}.}}}
\]
\end{defn}
The proofs of our results are based on classical Hilbert space techniques
for the analysis of reversible Markov chains. We recall here related
definitions which will be useful throughout. Let $\mu$ be a probability
measure and $\Pi$ a $\mu$-reversible Markov transition kernel on
a measurable space $\bigl(\mathsf{E},\mathcal{F}\bigr).$ For any
probability measure $\nu$ on $(\mathsf{E,\mathcal{F})}$ and any
function $f:\mathsf{E}\rightarrow\mathbb{R}$ let, whenever the integrals
are well-defined, 
\[
\nu\bigl(f\bigr):=\int f(x)\nu({\rm d}x)\qquad\text{and}\qquad\Pi f(x):=\int f(y)\Pi\bigl(x,{\rm d}y\bigr),
\]
and for $k\geq2$, by induction, 
\[
\Pi^{k}f(x):=\int\Pi\bigl(x,{\rm d}y\bigr)\Pi^{k-1}f(y).
\]
We further denote $(\nu\Pi^{k})f:=\nu\bigl(\Pi^{k}f\bigr)$, which
can be interpreted as a probability measure. Consider next the spaces
of square integrable (and centred) functions defined respectively
as 
\begin{align*}
L^{2}(\mathsf{E},\mu)&:=\left\{
f:\mathsf{E}\rightarrow\mathbb{R}\,:\,\mu(f^{2})<\infty\right\}, \\
L_{0}^{2}(\mathsf{E},\mu)&:=\big\{ f\in L^{2}(\mathsf{E},\mu)\,:\,\mu(f)=0\big\},
\end{align*}
endowed with the inner product defined for any $f,g\in L^{2}(\mathsf{E},\mu)$
as $\left\langle f,g\right\rangle _{\mu}:=\int f(x)g(x)\mu({\rm d}x)$,
and the associated norm $\|f\|_{\mu}:=\sqrt{\smash[b]{{\langle}f,f\rangle_{\mu}}}$.
The Markov kernel $\Pi$ defines a self-adjoint operator on $L^{2}(\mathsf{E},\mu)$.
For $f\in L^{2}\bigl(\mathsf{E},\mu\bigr)$, we consider the Dirichlet
forms of $f$ associated with $\Pi$ 
\begin{align*}
\mathcal{E}_{\Pi}(f) & :=\left\langle f,(I-\Pi)f\right\rangle _{\mu}=\frac{1}{2}\int\mu(\mathrm{d}x)\Pi(x,\mathrm{d}y)\left(f(x)-f(y)\right)^{2},
\end{align*}
where $I(x,A):=\mathbb{I}\{x\in A\}$ stands for the identity operator.
The (right) spectral gap of $\Pi$ is the distance between $1$ and
the upper end of the spectrum of $\Pi$ as an operator on $L_{0}^{2}(\mathsf{E},\mu)$,
and has the following variational representation by means of the Dirichlet
form 
\[
{\rm Gap}_{R}\left(\Pi\right):=\inf\bigl\{\mathcal{E}_{\Pi}(f)\,:\, f\in L_{0}^{2}(\mathsf{E},\mu),\,\|f\|_{\mu}=1\bigr\}.
\]
Hereafter, for functions $f:\mathsf{X}\to\mathbb{R}$ we will also
denote, whenever necessary, by $f$ the functions from $\mathsf{X}\times\mathbb{R}_{+}\rightarrow\mathbb{R}$
defined by $f(x,w):=f(x)$. We now state our main result, whose proof
is postponed to Section \ref{sec:Proofs}. 
\begin{thm}
\label{thm:orderingpseudomarginals} 
Let $\pi$ be a probability distribution
on some measurable space $\bigl(\mathsf{X},\mathcal{X}\bigr)$ and
let $\tilde{P}_{1}$ and $\tilde{P}_{2}$ be two pseudo-marginal approximations
of $P$ aiming to sample from $\pi$, sharing a common family of marginal
proposal probability distributions $\{q(x,\cdot)\}_{x\in\mathsf{X}}$
but with distinct weight distributions satisfying $\{Q_{x}^{(1)}\}_{x\in\mathsf{X}}\lecx \{Q_{x}^{(2)}\}_{x\in\mathsf{X}}$.
Then, 
\begin{enumerate}[label=(\alph*)]
\item \label{item:pseudo-accprob} for any $x,y\in\mathsf{X}$, the conditional
acceptance rates satisfy $\alpha_{xy}(\tilde{P}_{1})\geq\alpha_{xy}(\tilde{P}_{2})$, 
\item \label{item:pseudo-dirichlet} for any $f:\mathsf{X}\to\mathbb{R}$,
the Dirichlet forms satisfy $\mathcal{E}_{\tilde{P}_{1}}(f)\geq\mathcal{E}_{\tilde{P}_{2}}(f)$, 
\item \label{item:pseudo-asvar} for any $f:\mathsf{X}\to\mathbb{R}$ with
$\mathrm{var}_{\pi}(f)<\infty$, the asymptotic variances satisfy
$\mathrm{var}(f,\tilde{P}_{1})\leq\mathrm{var}(f,\tilde{P}_{2})$, 
\item \label{item:pseudo-gaps} the spectral gaps satisfy ${\rm Gap}_{R}(\tilde{P}_{1})\geq\min\{{\rm Gap}_{R}(\tilde{P}_{2}),1-\tilde{\rho}_{2}^{*}\}$,
where $\tilde{\rho}_{2}^{*}:=\tilde{\pi}_{2}\text{-}\mathrm{ess\, sup}_{(x,w)}\tilde{\rho}_{2}(x,w)$,
the essential supremum of the rejection probability corresponding
to $\tilde{P}_{2}$,
\item \label{item:pseudo-gaps-continuous}if $\pi$ is not concentrated
on points, that is, $\pi(\{x\})=0$ for all $x\in\mathsf{X},$ then
\textup{${\rm Gap}_{R}(\tilde{P}_{1})\geq{\rm Gap}_{R}(\tilde{P}_{2}).$ }
\end{enumerate}
\end{thm}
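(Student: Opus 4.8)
The plan is to express both the acceptance rate and the Dirichlet forms through the concave map $(w,u)\mapsto\min\{w,r(x,y)u\}$, and to use Strassen's martingale representation (Theorem~\ref{th:strassen}) to realise $\tilde{\pi}_1$ and $\tilde{\pi}_2$ as the two marginals of a single measure on $\mathsf{X}\times\R_+\times\R_+$; the embedding is what circumvents the obstruction that $\tilde{\pi}_1\neq\tilde{\pi}_2$. The key preliminary identity is $w\min\{1,r(x,y)u/w\}=\min\{w,r(x,y)u\}$, which absorbs the size-biasing weight and yields $\alpha_{xy}(\tilde{P}_i)=\int\min\{w,r(x,y)u\}\,Q^{(1)}_x(\mathrm{d}w)$-type expressions, namely $\alpha_{xy}(\tilde{P}_i)=\int\min\{w,r(x,y)u\}Q^{(i)}_x(\mathrm{d}w)Q^{(i)}_y(\mathrm{d}u)$ and, for $f$ a function of $x$ alone, $\mathcal{E}_{\tilde{P}_i}(f)=\tfrac12\int\pi(\mathrm{d}x)q(x,\mathrm{d}y)(f(x)-f(y))^2\alpha_{xy}(\tilde{P}_i)$. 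For \ref{item:pseudo-accprob}, since $(w,u)\mapsto\min\{w,r u\}$ is concave and bounded in each argument separately, I integrate out $w$ and then $u$: for fixed $u$, $Q^{(1)}_x\lecx Q^{(2)}_x$ gives the inequality after integrating in $w$ (Lemma~\ref{lem:cx-char}\ref{item:cx-min}, reversed for concave functions); the resulting function of $u$ is again concave, so $Q^{(1)}_y\lecx Q^{(2)}_y$ yields $\alpha_{xy}(\tilde{P}_1)\ge\alpha_{xy}(\tilde{P}_2)$. Claim \ref{item:pseudo-dirichlet} is then immediate from the Dirichlet-form identity since $(f(x)-f(y))^2\ge0$.

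For \ref{item:pseudo-asvar} I would use the variational representation from Section~\ref{sec:Variational-bounds-for}, of the form $\mathrm{var}(f,\Pi)+\|f\|_\mu^2=2\sup_{g\in L_0^2(\mu)}\{2\langle f,g\rangle_\mu-\mathcal{E}_\Pi(g)\}$, in which the Dirichlet form enters with a negative sign. By Strassen there is, for each $x$, a martingale coupling $\gamma_x$ of $Q^{(1)}_x$ and $Q^{(2)}_x$ (so $\mathbb{E}_{\gamma_x}[W_2\mid W_1]=W_1$); set $\bar{\pi}(\mathrm{d}x,\mathrm{d}w_1,\mathrm{d}w_2):=\pi(\mathrm{d}x)\gamma_x(\mathrm{d}w_1,\mathrm{d}w_2)w_2$, whose $(x,w_2)$-marginal is $\tilde{\pi}_2$ and whose $(x,w_1)$-marginal is $\tilde{\pi}_1$, the latter precisely by the martingale property. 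Given a centred competitor $g_1(x,w_1)$ for $\tilde{P}_1$, define $g_2(x,w_2):=\mathbb{E}_{\gamma_x}[g_1(x,W_1)\mid W_2=w_2]$. Then $\langle f,g_2\rangle_{\tilde{\pi}_2}=\mathbb{E}_{\bar{\pi}}[fg_1]=\langle f,g_1\rangle_{\tilde{\pi}_1}$ and $\|g_2\|_{\tilde{\pi}_2}\le\|g_1\|_{\tilde{\pi}_1}$, so the crux is the Dirichlet-form domination $\mathcal{E}_{\tilde{P}_2}(g_2)\le\mathcal{E}_{\tilde{P}_1}(g_1)$. Writing both forms against the common measure $\pi(\mathrm{d}x)q(x,\mathrm{d}y)\gamma_x\gamma_y$ and abbreviating $G_i=g_i(x,w_i)-g_i(y,u_i)$, this reduces to $\mathbb{E}[\min\{w_2,ru_2\}G_2^2]\le\mathbb{E}[\min\{w_1,ru_1\}G_1^2]$, which I obtain from two Jensen steps: first $G_2=\mathbb{E}[G_1\mid w_2,u_2]$ gives $G_2^2\le\mathbb{E}[G_1^2\mid w_2,u_2]$, and then concavity of $\min$ with the martingale means $\mathbb{E}[w_2\mid w_1]=w_1$, $\mathbb{E}[u_2\mid u_1]=u_1$ gives $\mathbb{E}[\min\{w_2,ru_2\}\mid w_1,u_1]\le\min\{w_1,ru_1\}$. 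Feeding $g_2$ into the $\tilde{P}_2$ variational problem then dominates every value of the $\tilde{P}_1$ problem, whence $\mathrm{var}(f,\tilde{P}_1)\le\mathrm{var}(f,\tilde{P}_2)$.

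The spectral-gap claim \ref{item:pseudo-gaps} is the main obstacle, precisely because the projection $g_1\mapsto g_2$ strictly decreases the $L^2$ norm: $\|g_1\|_{\tilde{\pi}_1}^2=\|g_2\|_{\tilde{\pi}_2}^2+D$ with defect $D=\mathbb{E}_{\bar{\pi}}[\mathrm{var}(g_1\mid x,w_2)]\ge0$, so the Rayleigh quotient is not controlled by $\mathcal{E}_{\tilde{P}_2}(g_2)\ge{\rm Gap}_R(\tilde{P}_2)\|g_2\|^2$ alone. I would therefore quantify the slack of the Jensen steps: the first step contributes exactly $\tfrac12\mathbb{E}[\min\{w_2,ru_2\}\mathrm{var}(G_1\mid w_2,u_2)]$, and using $\pi$-reversibility to symmetrise $(x,w_2)\leftrightarrow(y,u_2)$ together with $\int q(x,\mathrm{d}y)Q^{(2)}_y(\mathrm{d}u_2)\min\{w_2,ru_2\}=w_2(1-\tilde{\rho}_2(x,w_2))\ge w_2(1-\tilde{\rho}_2^*)$, this slack is at least $(1-\tilde{\rho}_2^*)D$. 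Combining with $\mathcal{E}_{\tilde{P}_2}(g_2)\ge{\rm Gap}_R(\tilde{P}_2)\|g_2\|^2$ gives the additive bound $\mathcal{E}_{\tilde{P}_1}(g_1)\ge{\rm Gap}_R(\tilde{P}_2)\|g_2\|^2+(1-\tilde{\rho}_2^*)D\ge\min\{{\rm Gap}_R(\tilde{P}_2),1-\tilde{\rho}_2^*\}\|g_1\|^2$, and taking the infimum over centred $g_1$ proves \ref{item:pseudo-gaps}.

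Finally \ref{item:pseudo-gaps-continuous} should follow by removing the defect penalty when $\pi$ is non-atomic. The $\min$ in \ref{item:pseudo-gaps} arises solely because the within-fibre fluctuation $D$ is charged at the move rate $1-\tilde{\rho}_2^*$ rather than at ${\rm Gap}_R(\tilde{P}_2)$; this is genuinely binding, since $1-\tilde{\rho}_2^*$ may even vanish for unbounded weights. I expect the non-atomic case to be the most delicate point: atoms are exactly what prevent the within-fibre variation from being redistributed, and for non-atomic $\pi$ one should be able to refine the construction (splitting the mass of each state across a continuum of $x$-values, so that fibre averaging is asymptotically free in the Dirichlet form) to charge $D$ at rate ${\rm Gap}_R(\tilde{P}_2)$, yielding the clean bound ${\rm Gap}_R(\tilde{P}_1)\ge{\rm Gap}_R(\tilde{P}_2)$.
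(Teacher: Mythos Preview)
Your arguments for \ref{item:pseudo-accprob}--\ref{item:pseudo-gaps} are correct and rest on the same two pillars as the paper: the Strassen martingale coupling of $Q_x^{(1)}$ and $Q_x^{(2)}$, and the variational representation of the asymptotic variance. The organisation differs. The paper lifts both pseudo-marginal kernels to explicit auxiliary kernels $\breve{P}_1,\breve{P}_2$ on $\mathsf{X}\times(0,\infty)^2$ that are reversible with respect to a \emph{common} invariant measure $\breve{\pi}$, and then applies Peskun-type reasoning on that single space: the Dirichlet-form inequality $\mathcal{E}_{\breve{P}_1}(g)\ge\mathcal{E}_{\breve{P}_2}(g)$ for $g$ depending only on $(x,w)$ is the workhorse, and the resolvent $(I-\lambda\breve{P}_1)^{-1}f$ automatically stays in that subspace. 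You instead stay on the two separate spaces and transport competitors via the projection $g_1\mapsto g_2=\mathbb{E}_{\gamma_x}[g_1\mid W_2]$. Your two-step Jensen inequality is exactly the computation behind the paper's Dirichlet comparison, and your slack analysis for \ref{item:pseudo-gaps} reproduces in-line the content of the paper's ``augmented kernel'' lemma (which decomposes $g=g_0+\bar g$ orthogonally and bounds the $\bar g$ piece by $1-\tilde\rho_2^*$). Each packaging has its advantages: the paper's common-space formulation makes it transparent that one is really doing a Peskun comparison, while your direct approach avoids introducing the auxiliary kernels. One technical caveat: your variational identity for $\mathrm{var}(f,\Pi)$ is only literally valid when the inverse exists; you should run the argument for $\lambda\Pi$ with $\lambda<1$ (where your two inequalities $\|g_2\|\le\|g_1\|$ and $\mathcal{E}_{\tilde P_2}(g_2)\le\mathcal{E}_{\tilde P_1}(g_1)$ combine to give $\mathcal{E}_{\lambda\tilde P_2}(g_2)\le\mathcal{E}_{\lambda\tilde P_1}(g_1)$) and then let $\lambda\uparrow1$, as the paper does.

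Part \ref{item:pseudo-gaps-continuous} is where your proposal has a genuine gap. Your suggestion of ``splitting the mass of each state across a continuum of $x$-values'' to recharge the defect $D$ at rate $\mathrm{Gap}_R(\tilde P_2)$ is speculative and not a proof. The paper's route is much simpler and you have already done all the work: for any Metropolis--Hastings-type kernel with rejection probability $\tilde\rho_2(x,w)$, if the invariant measure is non-atomic then one always has $\mathrm{Gap}_R(\tilde P_2)\le 1-\tilde\rho_2^*$. (Heuristically, near a point where $\tilde\rho_2$ is close to $\tilde\rho_2^*$ one can build a localised test function whose Dirichlet form is close to $1-\tilde\rho_2^*$; non-atomicity is what allows the localisation to have arbitrarily small $\tilde\pi_2$-mass while remaining nontrivial.) The paper cites this as \cite[Theorem 54]{andrieu-vihola-2012}. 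Once you have it, the minimum in \ref{item:pseudo-gaps} is automatically attained by $\mathrm{Gap}_R(\tilde P_2)$, and \ref{item:pseudo-gaps-continuous} follows immediately from \ref{item:pseudo-gaps} with no further construction.
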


\begin{rem} 
Theorem \ref{thm:orderingpseudomarginals} \ref{item:pseudo-asvar} does
not assume the finiteness of the asymptotic variances, and
accommodates the scenarios where either $\var(f,\tilde{P}_2)$ or both
$\var(f,\tilde{P}_1)$ and $\var(f,\tilde{P}_2)$ are infinite. Establishing finiteness
is a separate, but practically important, problem and we now discuss briefly what are in our view the two most applicable approaches to do so.
In the case where $\tilde{P}_2$ admits a spectral gap, then finiteness of 
$\var(f,\tilde{P}_2)$ is guaranteed, because $\pi(f^2)<\infty$ implies
that $\tilde{\pi}_2(f^2)<\infty$. In earlier work
\cite[Proposition 10]{andrieu-vihola-2012}, we have shown that if the weight
distributions $\{Q_{x}^{(2)}\}_{x\in\mathsf{X}}$ are uniformly bounded, 
and the marginal algorithm admits a spectral gap, then
so does $\tilde{P}_2$. There are also several results in the literature which 
guarantee geometric convergence and thus the existence of
spectral gaps for certain classes of Metropolis-Hastings
algorithms~\cite[e.g.][]{jarner-hansen,roberts-rosenthal-geometric}.

In the case where $\tilde{P}_2$ is sub-geometrically ergodic, and therefore does
not admit a spectral gap, then the most successful general technique
for guaranteeing the finiteness of $\var(f,\tilde{P}_2)$ consists of establishing 
sub-geometric drift and minorisation conditions
\cite{tuominen-tweedie,jarner-roberts,douc-fort-moulines-soulier}. In earlier work we have shown that such drifts hold for the pseudo-marginal kernel $\tilde{P}_2$ 
under general moment conditions on 
$\{Q_{x}^{(2)}\}_{x\in\mathsf{X}}$ when the marginal
algorithm is strongly uniformly ergodic
\cite[Proposition 30 and Corollary 31]{andrieu-vihola-2012},
and in some more specific scenarios such as the independence sampler
\cite[Corollary 27]{andrieu-vihola-2012} and the random-walk Metropolis
\cite[Theorems 38 and 45]{andrieu-vihola-2012}.
\end{rem} 

\begin{rem} 
In the context of Algorithm \ref{alg:pseudo-marginal}, if there are
two possible estimators of $\pi(Y_{n})$ that could be used, say $U_{n}^{(1)}$
and $U_{n}^{(2)}$, then $U_{n}^{(1)}\lecx U_{n}^{(2)}$ is equivalent
to our assumption $\{Q_{x}^{(1)}\}_{x\in\mathsf{X}}\lecx \{Q_{x}^{(2)}\}_{x\in\mathsf{X}}$.
This is because the convex order is preserved under scaling.
\end{rem}

As pointed out in Remark \ref{rem:cx-vs-var}, the convex order $W^{(1)}\lecx W^{(2)}$
of square-integrable random variables automatically implies ${\rm var}\bigl(W^{(1)}\bigr)\leq{\rm var}\bigl(W^{(2)}\bigr)$,
but the reverse is not true in general. A natural question is then
to ask if ${\rm var}\bigl(W_{x}^{(1)}\bigr)\leq{\rm var}\bigl(W_{x}^{(2)}\bigr)$
for all $x\in\mathsf{X}$ could be sufficient to imply, for example,
$\mathrm{var}(f,\tilde{P}_{2})\geq\mathrm{var}(f,\tilde{P}_{1})$
for $f\in L^{2}\bigl(\mathsf{X},\pi\bigr)$? The following counter-example
shows that this is not the case.
\begin{example} 
\label{ex:counterexamplevarianceorder} 
Consider the situation where
$\mathsf{X}=\{-1,1\}$, $\pi=(1/2,1/2)$, and the marginal algorithm
is a ``perfect'' independent Metropolis-Hastings (IMH) algorithm,
that is, $q(x,{\rm d}y)=\pi({\rm d}y)$, for any $x\in\mathsf{X}$.
Suppose that the weight distributions are independent of $x$ and
given by 
\[
Q({\rm d}w):=Q_{x}({\rm d}w)=\frac{b-1}{b-a}\delta_{a}(\mathrm{d}w)+\frac{1-a}{b-a}\delta_{b}(\mathrm{d}w),
\]
for some $0\leq a\le1\leq b<\infty$, and that the function of interest
is $f(x)=x$. In this case ${\rm var}_{Q}\bigl(f\bigr)=(b-1)(1-a)$
and because of the simple structure of the problem (independence with
respect to $x$ of $Q_{x}$ and the choice of an IMH) one can find
an explicit expression for the asymptotic variance 
\[
{\rm var}\bigl(f,\tilde{P}\bigr)=\frac{a(b-1)+(2b-1)b(1-a)}{b-a}.
\]
Now one can easily find numerous counterexamples such as the pairs
$(a^{(1)},b^{(1)})=(0.9208,3.0046)$ and $(a^{(2)},b^{(2)})=(0.6698,1.4620)$
for which ${\rm var}(W^{(1)})=0.1587\geq{\rm var}(W^{(2)})=0.1526$
but ${\rm var}(f,\tilde{P}_{1})=1.4577\leq{\rm var}(f,\tilde{P}_{2})=1.5632$. 
\end{example}
\begin{rem} 
Theorem \ref{thm:orderingpseudomarginals} \ref{item:pseudo-accprob}--\ref{item:pseudo-asvar}
generalise the findings in \citep{andrieu-vihola-2012} which state
similar bounds in the special case where $\tilde{P}_{1}$ corresponds
to the marginal algorithm, or equivalently, to the degenerate case
$Q_{x}^{(1)}\equiv\delta_{1}$. Note also that $\delta_{1}$ is the
unique minimal distribution in the convex order; see Section \ref{sec:Extremal-properties}.
\end{rem}

\begin{rem}
\label{rem:monotonicity-and-convexity} 
In practice one may be interested
in a sequence of estimators $\{W_{x}^{(i)}\}$, where $i\in\mathbb{N}$
is an ``accuracy parameter'' such as a number of estimators combined
by averaging. Suppose that the estimators are increasingly accurate
in the convex order, that is $W_{x}^{(i+1)}\le_{cx}W_{x}^{(i)}$,
then Theorem \ref{thm:orderingpseudomarginals} implies that the following
mappings from $\mathbb{N}$ to $\mathbb{R}_{+}$ have the monotonicity
properties
\begin{enumerate}[label=(\alph*)]
\item $i\mapsto\alpha_{xy}(\tilde{P}_{i})$ is non-decreasing, 
\item $i\mapsto\mathcal{E}_{\tilde{P}_{i}}(f)$ is non-decreasing, 
\item $i\mapsto\mathrm{var}(f,\tilde{P}_{i})$ is non-increasing. 
\end{enumerate}
\end{rem}
We suspect that in addition, in scenarios such as those of Section
\ref{sec:Averaging-in-the}, the mappings $i\mapsto\alpha_{xy}(\tilde{P}_{i})$
and $i\mapsto\mathcal{E}_{\tilde{P}_{i}}(f)$ are concave and $i\mapsto\mathrm{var}(f,\tilde{P}_{i})$
convex, but we have not yet been able to prove this conjecture. See however Proposition
\ref{prop:pseudo-convexity} for a partial result in that direction.


\section{Variational bounds for the asymptotic variance}
\label{sec:Variational-bounds-for}\label{sec:variational-peskun} 

The first result on our journey to prove Theorem \ref{thm:orderingpseudomarginals}
is a variational bound on the difference of asymptotic variances.
The result, which is of independent interest, shows that the Dirichlet
forms associated with Peskun's variance ordering result \citep{peskun,tierney-note}
need not be ordered for all functions, but only certain subclasses
of functions. We note that the result, Theorem \ref{thm:ordervariancealaTierney},
offers also a more direct proof of, for example the results in \citep{tierney-note,andrieu-vihola-2012}.
We start by stating a powerful variational representation of the quadratic
form of the inverse of a positive self-adjoint operator, attributed
to Bellman \citep{bellman1968some}, and used for example by 
Caracciolo, Pelissetto and Sokal \cite{caracciolo-pelissetto-sokal}. 
\begin{lem} 
\label{lem:variationalrepinverseoperator} 
Let $A$ be a self-adjoint
operator on a Hilbert space $\mathcal{H}$, satisfying $\left\langle f,Af\right\rangle \ge0$
for all $f\in\mathcal{H}$ and such that the inverse $A^{-1}$ exists.
Then 
\begin{equation}
\left\langle f,A^{-1}f\right\rangle =\sup_{g\in\mathcal{H}}\left[2\left\langle f,g\right\rangle -\left\langle g,Ag\right\rangle \right],
    \label{eq:bellman}
\end{equation}
 where the supremum is attained with $g=A^{-1}f$. 
\end{lem}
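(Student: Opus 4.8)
The plan is to prove the identity by completing the square in the quadratic functional
\[
\Phi(g) := 2\langle f, g\rangle - \langle g, Ag\rangle,
\]
and then to exploit the positivity of $A$ to read off its maximiser and its maximal value. This is a purely Hilbert-space computation and does not require anything beyond the self-adjointness and positivity hypotheses together with the existence of $A^{-1}$.

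First I would establish the algebraic identity
\[
\Phi(g) = \langle f, A^{-1}f\rangle - \langle A(g - A^{-1}f),\, g - A^{-1}f\rangle,
\]
valid for every $g\in\mathcal{H}$ for which the expressions make sense. To do so I expand the subtracted quadratic term using $A(g - A^{-1}f) = Ag - f$; the only nontrivial step is the cross term, where self-adjointness of $A$ gives $\langle Ag, A^{-1}f\rangle = \langle g, A A^{-1}f\rangle = \langle g, f\rangle = \langle f, g\rangle$. The two cross terms then combine into $-2\langle f,g\rangle$, while $\langle Ag, g\rangle$ cancels against $\langle g, Ag\rangle$ by symmetry of the inner product, and the identity drops out after rearranging.

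With the identity in hand the conclusion is immediate. Since $A$ is positive, $\langle Ah, h\rangle \ge 0$ for every $h$, so in particular the subtracted term $\langle A(g - A^{-1}f), g - A^{-1}f\rangle$ is non-negative for all $g$. Hence $\Phi(g) \le \langle f, A^{-1}f\rangle$ for every $g$, and choosing $g = A^{-1}f$ makes the subtracted term vanish; the supremum is therefore attained at $g = A^{-1}f$ and equals $\langle f, A^{-1}f\rangle$, which is exactly \eqref{eq:bellman}.

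The argument is elementary and I expect no serious obstacle. The only point requiring a word of care is that every inner product appearing is well defined, which amounts to $f$ lying in the range of $A$ (equivalently, in the domain of $A^{-1}$) so that $A^{-1}f\in\mathcal{H}$; this is already implicit in the statement through the term $\langle f, A^{-1}f\rangle$. I would also note that the cancellation of the cross terms uses the symmetry of the inner product, which is automatic in the real Hilbert-space setting of the paper.
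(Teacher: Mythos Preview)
Your proof is correct and is essentially the same completion-of-squares argument as the paper's: both expand $\langle A^{-1}f - g,\, A(A^{-1}f - g)\rangle \ge 0$ (the paper writes this inequality directly, you arrive at the equivalent identity for $\Phi(g)$), use positivity of $A$ to get the upper bound, and then take $g = A^{-1}f$ to attain it. The paper's version is just a more compressed presentation of the same computation.
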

\noindent Proof of Lemma \ref{lem:variationalrepinverseoperator}
is given for the reader's convenience in Appendix
\ref{app:caracciolo-pelissetto-sokal}.

Lemma \ref{lem:variationalrepinverseoperator} provides us with a
quick route to Peskun type ordering. More importantly, it leads to
important quantitative bounds on the difference between the asymptotic
variances of two $\mu-$reversible Markov transition probabilities
in terms of Dirichlet forms. Suppose that $\Pi$ is a Markov kernel
on a measurable space $(\mathsf{E},\mathcal{F})$, reversible with
respect to a probability measure $\mu$, and let $\lambda\in[0,1)$
be any constant. We may introduce the self-adjoint operator (or sub-probability
kernel) $(\lambda\Pi)(x,A):=\lambda\Pi(x,A)$ and we extend the definition
of the asymptotic variance to this type of (non-probabilistic) operator
as follows. For any $f\in L^{2}(\mathsf{E},\mu)$ we let $\bar{f}:=f-\mu\bigl(f\bigr)\in L_{0}^{2}(\mathsf{E},\mu)$
and define 
\[
\mathrm{var}(f,\lambda\Pi):=\langle\bar{f},(I-\lambda\Pi)^{-1}(I+\lambda\Pi)\bar{f}\rangle_{\mu}=2\left\langle \bar{f},(I-\lambda\Pi)^{-1}\bar{f}\right\rangle _{\mu}-\|\bar{f}\|_{\mu}^{2},
\]
 where the inverse $\left(I-\lambda\Pi\right)^{-1}:=\sum_{k=0}^{\infty}\lambda^{k}\Pi^{k}$
is a well-defined bounded operator for any $\lambda\in[0,1)$. From
\citep{tierney-note} we know that 
\[
\lim_{\lambda\uparrow1}\mathrm{var}(f,\lambda\Pi)=\mathrm{var}(f,\Pi)=\lim_{M\rightarrow\infty}\mathrm{var}_{\mu}\left(M^{-1/2}{\textstyle \sum}_{i=1}^{M}f(\Phi_{i})\right),
\]
even in the case where the expression on the right hand side is infinite.
Similarly we extend the definition of Dirichlet forms to $\mathcal{E}_{\lambda\Pi}\bigl(f\bigr):=\left\langle f,(I-\lambda\Pi)f\right\rangle _{\mu}$. 
\begin{thm} 
\label{thm:ordervariancealaTierney} 
Let $\Pi_{1}$ and $\Pi_{2}$
be two Markov transition probabilities defined on some measurable
space $(\mathsf{E},\mathcal{F})$ both reversible with respect to
the probability distribution $\mu$, and let $f\in L_{0}^{2}(\mathsf{E},\mu)$.
Then, 
\begin{enumerate}[label=(\alph*)]
\item \label{item:var-bounds} for any $\lambda\in[0,1)$, 
\begin{align*}
\mathcal{E}_{\lambda\Pi_{1}}\big(\hat{f}_{1}^{\lambda}\big)-\mathcal{E}_{\lambda\Pi_{2}}\big(\hat{f}_{1}^{\lambda}\big) 
&
\leq\frac{1}{2}\Big[\mathrm{var}(f,\lambda\Pi_{2})-\mathrm{var}(f,\lambda\Pi_{1})\Big]
\\
& \le\mathcal{E}_{\lambda\Pi_{1}}\big(\hat{f}_{2}^{\lambda}\big)-\mathcal{E}_{\lambda\Pi_{2}}\big(\hat{f}_{2}^{\lambda}\big),
\end{align*}
 where $\hat{f}_{i}^{\lambda}:=\bigl(I-\lambda\Pi_{i}\bigr)^{-1}f\in L_{0}^{2}(\mathsf{E},\mu)$. 
\item \label{item:var-convexity} the function $\beta\mapsto\mathrm{var}(f,\beta\Pi_{1}+(1-\beta)\Pi_{2})$
defined for $\beta\in[0,1]$ is convex, that is, for any $\beta\in[0,1]$,
\[
\mathrm{var}\big(f,\beta\Pi_{1}+(1-\beta)\Pi_{2}\big)\leq\beta\mathrm{var}(f,\Pi_{1})+(1-\beta)\mathrm{var}(f,\Pi_{2}).
\]
\end{enumerate}
\end{thm}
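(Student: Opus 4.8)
The plan is to derive both parts from the Bellman variational representation of Lemma \ref{lem:variationalrepinverseoperator}, applied to the operator $A := I - \lambda\Pi$ on the Hilbert space $L_{0}^{2}(\mathsf{E},\mu)$. First I would record that $A$ satisfies the hypotheses of that lemma for every $\lambda\in[0,1)$: it is self-adjoint because $\Pi$ is, it is positive since the contractivity of the Markov operator on $L^{2}$ gives $\langle g,(I-\lambda\Pi)g\rangle_{\mu}\ge(1-\lambda)\|g\|_{\mu}^{2}\ge 0$, and its inverse exists as the bounded operator $\sum_{k\ge 0}\lambda^{k}\Pi^{k}$ already introduced above. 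Since $f\in L_{0}^{2}(\mathsf{E},\mu)$ we have $\bar f=f$, so that $\mathrm{var}(f,\lambda\Pi)=2\langle f,(I-\lambda\Pi)^{-1}f\rangle_{\mu}-\|f\|_{\mu}^{2}$, and the whole analysis reduces to comparing the quadratic forms $\langle f,(I-\lambda\Pi_{i})^{-1}f\rangle_{\mu}$ for $i=1,2$.

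For part \ref{item:var-bounds}, the key is that Lemma \ref{lem:variationalrepinverseoperator} expresses $\langle f,(I-\lambda\Pi_{i})^{-1}f\rangle_{\mu}$ simultaneously as the \emph{exact} value attained at the maximiser $g=\hat f_{i}^{\lambda}$ and as an \emph{upper bound} valid for every competitor $g$. Writing the half-difference of the two asymptotic variances as $\langle f,(I-\lambda\Pi_{2})^{-1}f\rangle_{\mu}-\langle f,(I-\lambda\Pi_{1})^{-1}f\rangle_{\mu}$, I would bound one term by its attained value and the other by a feasible competitor. For the upper inequality of the sandwich I take the value of the $\Pi_{2}$-form at its own maximiser $\hat f_{2}^{\lambda}$ while bounding the $\Pi_{1}$-form from below by evaluating its supremum at the same $g=\hat f_{2}^{\lambda}$; the common term $2\langle f,\hat f_{2}^{\lambda}\rangle_{\mu}$ cancels and leaves exactly $\mathcal{E}_{\lambda\Pi_{1}}(\hat f_{2}^{\lambda})-\mathcal{E}_{\lambda\Pi_{2}}(\hat f_{2}^{\lambda})$. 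Symmetrically, evaluating the $\Pi_{2}$-form's supremum at $g=\hat f_{1}^{\lambda}$ and using the exact value of the $\Pi_{1}$-form at $\hat f_{1}^{\lambda}$ yields the lower bound $\mathcal{E}_{\lambda\Pi_{1}}(\hat f_{1}^{\lambda})-\mathcal{E}_{\lambda\Pi_{2}}(\hat f_{1}^{\lambda})$.

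For part \ref{item:var-convexity}, I would fix $\lambda\in[0,1)$ and set $\Pi_{\beta}:=\beta\Pi_{1}+(1-\beta)\Pi_{2}$, which is again a $\mu$-reversible Markov kernel. The decisive observation is that, for each fixed $g$, the Bellman integrand $2\langle f,g\rangle_{\mu}-\langle g,(I-\lambda\Pi_{\beta})g\rangle_{\mu}$ is affine in $\beta$, because $\langle g,\Pi_{\beta}g\rangle_{\mu}=\beta\langle g,\Pi_{1}g\rangle_{\mu}+(1-\beta)\langle g,\Pi_{2}g\rangle_{\mu}$. A pointwise supremum of affine functions is convex, so $\beta\mapsto\langle f,(I-\lambda\Pi_{\beta})^{-1}f\rangle_{\mu}$ is convex, and hence so is $\beta\mapsto\mathrm{var}(f,\lambda\Pi_{\beta})$, which differs only by the affine transformation $x\mapsto 2x-\|f\|_{\mu}^{2}$. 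To remove the regularisation I would let $\lambda\uparrow 1$ and invoke the limit $\mathrm{var}(f,\lambda\Pi_{\beta})\to\mathrm{var}(f,\Pi_{\beta})$ recalled from \citep{tierney-note}, using that a pointwise limit of convex functions remains convex even when the limit takes the value $+\infty$.

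The argument is short once Lemma \ref{lem:variationalrepinverseoperator} is available, so I do not anticipate a serious obstacle in the core algebra; the only points requiring care are verifying the positivity and invertibility of $I-\lambda\Pi_{i}$ so that the lemma applies, and justifying the passage $\lambda\uparrow 1$ in part \ref{item:var-convexity} in the regime where the asymptotic variance may be infinite. The latter is handled cleanly by the stability of convexity under pointwise limits in the extended reals.
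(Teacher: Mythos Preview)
Your proposal is correct and follows essentially the same route as the paper: both parts are derived from the Bellman variational representation of Lemma~\ref{lem:variationalrepinverseoperator}, with part~\ref{item:var-bounds} obtained by evaluating the supremum at the maximiser of one form and at a competitor for the other so that the cross term $2\langle f,\hat f_{i}^{\lambda}\rangle_{\mu}$ cancels, and part~\ref{item:var-convexity} by noting that the Bellman integrand is affine in $\beta$ so that its supremum is convex. Your treatment of part~\ref{item:var-convexity} is slightly more explicit than the paper's in carrying the regularisation $\lambda$ through and then passing to the limit $\lambda\uparrow 1$, but the underlying argument is identical.
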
 
\begin{proof} 
In order to prove the results we use the variational representation
of the asymptotic variance, as suggested in \citep{caracciolo-pelissetto-sokal}
and obtained by application of Lemma \ref{lem:variationalrepinverseoperator}
for $i\in\{1,2\}$, 
\begin{align}
\mathrm{var}(f,\lambda\Pi_{i}) & =2\left\langle f,(I-\lambda\Pi_{i})^{-1}f\right\rangle _{\mu}-\|f\|_{\mu}^{2}\nonumber \\
 & =2\sup_{g\in L_{0}^{2}(\mathsf{E},\mu)}\left[2\left\langle f,g\right\rangle _{\mu}-\mathcal{E}_{\lambda\Pi_{i}}(g)\right]-\|f\|_{\mu}^{2}.
 \label{eq:var-variational}
\end{align}
Hereafter, we denote $\bar{\imath}=2$ if $i=1$ and vice versa. From
\eqref{eq:var-variational} and Lemma \ref{lem:variationalrepinverseoperator}
which states that the supremum above is attained for $\hat{f}_{i}^{\lambda}$,
we have for $i\in\{1,2\}$ 
\[
2\left[2\bigl\langle f,\hat{f}_{\bar{\imath}}^{\lambda}\bigr\rangle_{\mu}-\mathcal{E}_{\lambda\Pi_{i}}(\hat{f}_{\bar{\imath}}^{\lambda})\right]\leq\mathrm{var}(f,\lambda\Pi_{i})+\|f\|_{\mu}^{2}=2\left[2\bigl\langle f,\hat{f}_{i}^{\lambda}\bigr\rangle_{\mu}-\mathcal{E}_{\lambda\Pi_{i}}(\hat{f}_{i}^{\lambda})\right].
\]
We can now conclude \ref{item:var-bounds} by summing the above inequality
with $i=1$ and with $i=2$ multiplied by $-1$, and then dividing
by $2$. For the second item \ref{item:var-convexity}, let $\beta\in(0,1)$
and write for any $g\in L^{2}\bigl(\mathsf{E},\mu\bigr)$, 
\[
2\left\langle f,g\right\rangle _{\mu}-\mathcal{E}_{\beta\Pi_{1}+(1-\beta)\Pi_{2}}(g)=\beta\big[2\left\langle f,g\right\rangle _{\mu}-\mathcal{E}_{\Pi_{1}}(g)\big]+(1-\beta)\big[2\left\langle f,g\right\rangle _{\mu}-\mathcal{E}_{\Pi_{2}}(g)\big].
\]
The claim follows by taking the supremum over $g\in L^{2}\bigl(\mathsf{E},\mu\bigr)$,
separately for the two terms on the right hand side. 
\end{proof}


\section{Proofs by a martingale coupling of pseudo-marginal kernels}
\label{sec:Proofs} 

We preface the proof of Theorem \ref{thm:orderingpseudomarginals}
with a key result from \citep{leskela-vihola-strassen}, which ensures
that the conditional convex order implies a conditional martingale
coupling of the distributions involved.
\begin{thm}
\label{thm:conditional-martingale-coupling} 
Assume $\{Q_{x}^{(1)}\}_{x\in\mathsf{X}}\lecx \{Q_{x}^{(2)}\}_{x\in\mathsf{X}}$,
then there exists a probability kernel $(x,A)\mapsto R_{x}(A)$ from
$\mathsf{X}$ to $\mathbb{R}_{+}^{2}$ such that for any $x\in\mathsf{X}$, 
\begin{enumerate}[label=(\alph*)]
\item $R_{x}$ has marginals $Q_{x}^{(1)}$ and $Q_{x}^{(2)}$, that is,
$R_{x}(A\times\mathbb{R}_{+})=Q_{x}^{(1)}(A)$ and $R_{x}(\mathbb{R}_{+}\times A)=Q_{x}^{(2)}(A)$
for all $A\in\mathcal{B}(\mathbb{R}_{+})$,
\item \label{item:martingale-property} $R_{x}$ is the distribution of
a martingale, that is, for all $A\in\mathcal{B}(\mathbb{R}_{+})$,
$
    \int R_{x}(\mathrm{d}w\times\mathrm{d}v)v\mathbb{I}\bigl\{ w\in A\bigr\}=\int R_{x}(\mathrm{d}w\times\mathrm{d}v)w\mathbb{I}\bigl\{ w\in A\bigr\}.
$
 
\end{enumerate}
\end{thm}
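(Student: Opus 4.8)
The plan is to regard Theorem \ref{thm:conditional-martingale-coupling} as a \emph{measurable}, or kernel, version of Strassen's theorem (Theorem \ref{th:strassen}). For each fixed $x\in\mathsf{X}$, the hypothesis $Q_{x}^{(1)}\lecx Q_{x}^{(2)}$ combined with Theorem \ref{th:strassen} already yields a probability measure $R_{x}$ on $(\R_{+}^{2},\B(\R_{+}^{2}))$ with marginals $Q_{x}^{(1)}$ and $Q_{x}^{(2)}$ satisfying property \ref{item:martingale-property}, since $\int R_{x}(\ud w\times\ud v)\,v\,\mathbb{I}\{w\in A\}=\int R_{x}(\ud w\times\ud v)\,w\,\mathbb{I}\{w\in A\}$ is exactly the condition $\mathbb{E}[\check{W}_{2}\mid\check{W}_{1}]=\check{W}_{1}$ written out. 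The whole content of the statement therefore lies in producing such an $R_{x}$ \emph{simultaneously for all} $x$ in a way that makes $x\mapsto R_{x}(A)$ measurable, i.e.\ a genuine probability kernel. First I would cast this as a measurable selection problem and invoke the Kuratowski--Ryll-Nardzewski theorem.

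Concretely, I work in the Polish space $\mathcal{P}(\R_{+}^{2})$ of Borel probability measures on $\R_{+}^{2}$ with the topology of weak convergence, and define the set-valued map $x\mapsto\Gamma(x)\subseteq\mathcal{P}(\R_{+}^{2})$, where $\Gamma(x)$ collects all $R$ whose marginals are $Q_{x}^{(1)}$ and $Q_{x}^{(2)}$ and which satisfy \ref{item:martingale-property}. By Strassen each $\Gamma(x)$ is nonempty, and it is readily seen to be convex and closed. Under the standing measurability convention of the paper the maps $x\mapsto Q_{x}^{(i)}$ are measurable from $\mathsf{X}$ into $\mathcal{P}(\R_{+})$, so the marginal constraints depend measurably on $x$. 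If I can show that $\Gamma$ has closed values and measurable graph (equivalently, that $\{x:\Gamma(x)\cap G\neq\emptyset\}$ is measurable for every open $G$), the Kuratowski--Ryll-Nardzewski selection theorem furnishes a measurable selector $x\mapsto R_{x}\in\Gamma(x)$, and $(x,A)\mapsto R_{x}(A)$ is then the required kernel.

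The hard part will be the martingale constraint \ref{item:martingale-property}, because it involves the \emph{unbounded} coordinate $v$ and is therefore not a weakly continuous condition: $\Gamma(x)$ is not obviously weakly closed, and tightness is delicate on the noncompact space $\R_{+}^{2}$. I would address this in two stages. First, for uniformly bounded weights, i.e.\ $Q_{x}^{(i)}$ supported in a fixed $[0,b]$, the test function $v$ is bounded and continuous, so $\Gamma(x)$ becomes weakly compact (tightness being automatic on the compact support) with the martingale constraint weakly closed, and measurability of the graph follows from measurability of $x\mapsto Q_{x}^{(i)}$ together with standard facts about the Borel structure of $\mathcal{P}(\R_{+}^{2})$; this delivers the kernel in the bounded case. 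Second, I would remove boundedness by a truncation-and-limit argument: approximate each $Q_{x}^{(i)}$ by a convex-ordered sequence of truncations, obtain measurable couplings at every truncation level from the bounded case, and pass to the limit. The key technical points are that the couplings remain tight (the fixed unit-mean marginals control the mass near infinity and, crucially, give uniform integrability of $v$, so the martingale property survives in the limit) and that measurability in $x$ is preserved under the limiting procedure; selecting a measurable convergent version is the delicate step and constitutes the crux of the whole argument.

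As an alternative sidestepping abstract selection, I would keep in mind a constructive route through potential (call) functions $u_{x}^{(i)}(t):=\int|w-t|\,Q_{x}^{(i)}(\ud w)$, which are jointly measurable in $(x,t)$. Since $Q_{x}^{(1)}\lecx Q_{x}^{(2)}$ with equal means is equivalent to $u_{x}^{(1)}(t)\le u_{x}^{(2)}(t)$ for all $t$ (cf.\ Lemma \ref{lem:cx-char}), one can attempt to build the dilation kernel $T_{x}(w,\ud v)$ with barycenter $w$ and $Q_{x}^{(1)}T_{x}=Q_{x}^{(2)}$ directly from these potentials, obtaining $R_{x}(\ud w\times\ud v)=Q_{x}^{(1)}(\ud w)T_{x}(w,\ud v)$, whose measurability in $x$ is manifest from that of $(x,t)\mapsto u_{x}^{(i)}(t)$. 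This construction is more computational but avoids the selection machinery entirely and makes the martingale property transparent by design.
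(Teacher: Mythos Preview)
The paper does not prove this theorem itself; it cites \citep[Theorem 1.3]{leskela-vihola-strassen} and remarks only that the argument ``relies on the fundamental martingale characterisation due to Strassen \ldots\ but involves a non-trivial measurability argument for the case where $\mathsf{X}$ is uncountable.'' Your diagnosis---pointwise Strassen plus a measurable-selection layer---matches this description exactly, and your sketch goes well beyond what the paper itself supplies.

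That said, your KRN route has a genuine soft spot beyond the one you flag. You correctly identify that $\Gamma(x)$ need not be weakly closed because the martingale constraint tests the unbounded coordinate $v$, and you propose a truncation-and-limit repair. But the truncation step is itself problematic: the natural cap $w\mapsto w\wedge b$ does \emph{not} preserve the convex order, since by Lemma~\ref{lem:cx-char}\ref{item:cx-min} one has $\E[\min\{W^{(1)},b\}]\ge\E[\min\{W^{(2)},b\}]$ with possible strict inequality, so the capped variables need not even share a mean. Producing convex-ordered, measurably-$x$-parametrised truncations is therefore a nontrivial sub-problem that you have not addressed, on top of the limiting-measurability issue you do acknowledge as ``the crux.'' A cleaner variant that sidesteps truncation altogether is to run the selection argument in the Wasserstein-$1$ topology on $\{R\in\mathcal{P}(\R_{+}^{2}):\int(w+v)\,R(\ud w\,\ud v)<\infty\}$: this space is Polish, the martingale constraint is $W_{1}$-closed (integrals of continuous functions of linear growth pass to $W_{1}$-limits), and $\Gamma(x)$ is $W_{1}$-compact because the fixed unit-mean marginals enforce uniform integrability of the coordinates. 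Your alternative constructive route via potential functions---building an explicit dilation $T_{x}(w,\ud v)$ whose dependence on $(Q_{x}^{(1)},Q_{x}^{(2)})$ is manifestly measurable---is closer in spirit to how such kernel versions of Strassen are usually established, and is probably the more robust line to pursue.
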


\begin{rem} 
\label{rem:strassen-martingale}The property of $R_{x}$ in Lemma
\ref{thm:conditional-martingale-coupling} \ref{item:martingale-property}
holds if and only if $(W,V)\sim R_{x}(\,\cdot\,)$ satisfies $\mathbb{E}[V\mid W]=W$
almost surely. This means that $\Delta=V-W$ is a martingale difference
satisfying $\mathbb{E}[\Delta\mid W]=0$, and that $V=W+\Delta$ is
``noisier'' than $W$. 
\end{rem}

The proof of Theorem \ref{thm:conditional-martingale-coupling} given
in \citep[Theorem 1.3]{leskela-vihola-strassen} relies on the fundamental
martingale characterisation due to Strassen \citep{strassen} restated
in Theorem \ref{th:strassen}, but involves a non-trivial measurability
argument for the case where $\mathsf{X}$ is uncountable.

For the rest of this section, we assume that the conditions in Theorem
\ref{thm:orderingpseudomarginals} hold, and we denote by $\tilde{\pi}_{1}$,
$\tilde{\pi}_{2}$ the invariant distributions of $\tilde{P}_{1}$,
$\tilde{P}_{2}$, respectively. Theorem \ref{thm:conditional-martingale-coupling}
turns out to be the key instrument in the proof of Theorem \ref{thm:orderingpseudomarginals}.
It will allow us to circumvent the difficulty of having two distinct
invariant distributions $\tilde{\pi}_{1}$ and $\tilde{\pi}_{2}$
for $\tilde{P}_{1}$ and $\tilde{P}_{2}$, which is incompatible with
the Hilbert space setting. Instead, we will be working with two Markov
kernels $\breve{P}_{1}$ and $\breve{P}_{2}$ equivalent to $\tilde{P}_{1}$
and $\tilde{P}_{2}$ in a sense to be made more precise in Lemma \ref{lem:defmoonkernels}.
The kernels $\breve{P}_{1}$ and $\breve{P}_{2}$ introduced below
can be thought of as corresponding to two distinct pseudo-marginal
implementations, where $\breve{P}_{2}$ uses the ``noisiest'' approximation. 
\begin{lem} 
\label{lem:defmoonkernels} 
Let $R_{x}$ be the probability kernel
from $(\mathsf{X},\mathcal{X})$ to $(\mathbb{R}_{+}^{2},\mathcal{B}(\mathbb{R}_{+})^{2})$
from Theorem \ref{thm:conditional-martingale-coupling}. Then, the
following defines a probability distribution on $\big(\mathsf{X}\times(0,\infty)^{2},\mathcal{X}\times\mathcal{B}((0,\infty)^{2})\big)$,
\[
\breve{\pi}({\rm d}x\times{\rm d}w\times{\rm d}v):=\pi({\rm d}x)R_{x}({\mathrm{d}}w\times{\mathrm{d}}v)v,
\]
 and the following define Markov transition probabilities on $\big(\mathsf{X}\times(0,\infty)^{2},\mathcal{X}\times\mathcal{B}((0,\infty)^{2})\big)$
\begin{align*}
\breve{P}_{1}(x,w,v;{\rm d}y\times{\rm d}u\times{\rm d}t) 
 :=\;&q(x,{\rm d}y)R_{y}({\rm d}u\times{\rm d}t)\frac{t}{u}\min\left\{
1,r(x,y)\frac{u}{w}\right\} \\
 \phantom{:=}& +\delta_{x,w,v}({\rm d}y\times{\rm d}u\times{\rm d}t)\tilde{\rho}_{1}(x,w)\\
\breve{P}_{2}(x,w,v;{\mathrm{d}}y\times{\mathrm{d}}u\times{\mathrm{d}}t)
:=\;&q(x,{\rm d}y)R_{y}({\rm d}u\times{\rm d}t)\min\left\{
1,r(x,y)\frac{t}{v}\right\} \\
&+\delta_{x,w,v}({\rm d}y\times{\rm d}u\times{\rm d}t)\tilde{\rho}_{2}(x,v),
\end{align*}
with the convention $t/u=0$ for $t=u=0$, and where $\tilde{\rho}_{1}$
and $\tilde{\rho}_{2}$ stand for the rejection probabilities of $\tilde{P}_{1}$
and $\tilde{P}_{2}$, respectively. The following marginal equivalences
hold between $\breve{\pi}$, $\breve{P}_{1}$, $\breve{P}_{2}$ and
$\tilde{\pi}_{1}$, $\tilde{\pi}_{2}$, $\tilde{P}_{1}$, $\tilde{P}_{2}$,
\begin{align*}
\breve{\pi}\big({\mathrm{d}}x\times{\mathrm{d}}w\times(0,\infty)\big) & =\tilde{\pi}_{1}({\mathrm{d}}x\times{\mathrm{d}}w), \\
\breve{\pi}\big({\mathrm{d}}x\times(0,\infty)\times{\mathrm{d}}v\big)
& =\tilde{\pi}_{2}({\mathrm{d}}x\times{\mathrm{d}}v), \\
 \breve{P}_{1}\big(x,w,v;{\mathrm{d}}y\times{\mathrm{d}}u\times(0,\infty)\big)
 & =\tilde{P}_{1}(x,w;{\mathrm{d}}y\times{\mathrm{d}}u),\\
 \breve{P}_{2}\big(x,w,v;{\mathrm{d}}y\times(0,\infty)\times{\mathrm{d}}t\big) & =\tilde{P}_{2}(x,v;{\mathrm{d}}y\times{\mathrm{d}}t),
\end{align*}
where the latter two equalities hold for all $(x,w,v)\in\mathsf{X}\times(0,\infty)^{2}$.
Furthermore, both $\breve{P}_{1}$ and $\breve{P}_{2}$ are reversible
with respect to $\breve{\pi}$. 
\end{lem}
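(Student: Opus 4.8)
The plan is to verify the four claims---that $\breve{\pi}$ is a probability measure, that $\breve{P}_1,\breve{P}_2$ are Markov kernels, that the marginal equivalences hold, and that the kernels are reversible---in turn, the crucial input throughout being the martingale structure of $R_x$ from Theorem \ref{thm:conditional-martingale-coupling} \ref{item:martingale-property}, equivalently $\mathbb{E}[V\mid W]=W$ for $(W,V)\sim R_x$ (Remark \ref{rem:strassen-martingale}). First I would distil two reweighting identities that drive everything else. Pushing the weighted measure $v\,R_x(\ud w\times\ud v)$ forward onto the $w$-coordinate and using $\mathbb{E}[V\mid W]=W$ gives
\[
\int_{(0,\infty)}R_{x}(\ud w\times\ud v)\,v = w\,Q_{x}^{(1)}(\ud w);
\]
similarly, adopting the convention $t/u=0$ at $t=u=0$ (legitimate since $\mathbb{E}[T\mid U]=U$ forces $T=0$ almost surely on $\{U=0\}$), the same property yields
\[
\int_{(0,\infty)}R_{y}(\ud u\times\ud t)\,\frac{t}{u} = Q_{y}^{(1)}(\ud u)
\]
as measures on $(0,\infty)$ (the possible atom of $Q_y^{(1)}$ at $0$ is irrelevant below, being annihilated by the acceptance factor $\min\{1,r(x,y)u/w\}$). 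Since $R_x$ has second marginal $Q_x^{(2)}$ with $\int vQ_x^{(2)}(\ud v)=1$, the total mass $\breve{\pi}(\mathsf{X}\times(0,\infty)^2)=\int\pi(\ud x)\int R_x(\ud w\times\ud v)v=1$, so $\breve{\pi}$ is a probability measure.

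Next I would read off the marginal equivalences, which simultaneously certify that $\breve{P}_1,\breve{P}_2$ are genuine Markov kernels. The first identity gives $\breve{\pi}(\ud x\times\ud w\times(0,\infty))=\pi(\ud x)w\,Q_x^{(1)}(\ud w)=\tilde{\pi}_1(\ud x\times\ud w)$, and the second-marginal property gives $\breve{\pi}(\ud x\times(0,\infty)\times\ud v)=\pi(\ud x)v\,Q_x^{(2)}(\ud v)=\tilde{\pi}_2(\ud x\times\ud v)$. Projecting $\breve{P}_1$ onto $(y,u)$, the factor $R_y(\ud u\times\ud t)\,t/u$ collapses to $Q_y^{(1)}(\ud u)$ by the second identity while $\delta_{x,w,v}(\ud y\times\ud u\times(0,\infty))=\delta_{x,w}(\ud y\times\ud u)$, recovering exactly $\tilde{P}_1(x,w;\ud y\times\ud u)$; projecting $\breve{P}_2$ onto $(y,t)$, integrating $R_y(\ud u\times\ud t)$ in $u$ returns its second marginal $Q_y^{(2)}(\ud t)$, recovering $\tilde{P}_2(x,v;\ud y\times\ud t)$. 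In particular each kernel has total mass equal to that of the corresponding $\tilde{P}_i$, namely $1$, so both are Markov transition probabilities.

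The substantive step is $\breve{\pi}$-reversibility, which I would establish via detailed balance. The Dirac parts contribute measures supported on the diagonal $(x,w,v)=(y,u,t)$, on which $\tilde{\rho}_i(x,w)=\tilde{\rho}_i(y,u)$, so they are trivially symmetric and only the proposal parts need attention. Substituting $\breve{\pi}(\ud x\times\ud w\times\ud v)=\pi(\ud x)R_x(\ud w\times\ud v)v$ into the balance relation, the common base measures $R_x(\ud w\times\ud v)R_y(\ud u\times\ud t)$ appear on both sides and cancel, reducing each claim to a scalar Metropolis--Hastings identity in $(\ud x,\ud y)$ via $\pi(\ud y)q(y,\ud x)=r(x,y)\pi(\ud x)q(x,\ud y)$ and $r(y,x)=1/r(x,y)$ on $\mathsf{R}$ (the behaviour off $\mathsf{R}$ being handled as in the standard argument). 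For $\breve{P}_2$ the prefactor is $v\min\{1,r(x,y)t/v\}=\min\{v,r(x,y)t\}$, manifestly symmetric. For $\breve{P}_1$, after cancelling the common weight $vt$ the prefactor is $u^{-1}\min\{1,r(x,y)u/w\}=\min\{u^{-1},r(x,y)/w\}$, whose swapped counterpart $w^{-1}\min\{1,r(y,x)w/u\}$ equals it once multiplied by $r(x,y)$. I expect this last computation to be the main obstacle: the $t/u$ factor makes $\breve{P}_1$ look asymmetric, and it is precisely the martingale identity $\mathbb{E}[T\mid U]=U$ that both collapses the projection onto $\tilde{P}_1$ and restores detailed balance, so the care lies in tracking which weight rides on which copy of $R$ together with the $t=u=0$ convention and the singular part off $\mathsf{R}$.
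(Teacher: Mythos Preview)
Your proposal is correct and follows essentially the same route as the paper: extract the martingale identities from $R_x$, use them to check that $\breve{\pi}$ has unit mass and that the marginals collapse to $\tilde{\pi}_i,\tilde{P}_i$, and then verify detailed balance for the off-diagonal parts. The only notable difference is that the paper dispatches the reversibility of $\breve{P}_2$ in one line by observing that it is a genuine Metropolis--Hastings kernel with proposal $q(x,\ud y)R_y(\ud u\times\ud t)$ targeting $\breve{\pi}$ (the acceptance ratio computes to $r(x,y)t/v$), whereas you verify detailed balance directly for both kernels; your explicit reweighting identities and your cancellation of the common factor $vt$ for $\breve{P}_1$ make the computation slightly cleaner than the paper's version.
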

\begin{proof} 
The measure $\breve{\pi}$ is positive, and by the properties of $R_{x}$
\begin{align}
\label{eq:breve-pi-on-positive-reals}
\breve{\pi}(\mathsf{X}\times(0,\infty)^{2})&=\int_{\mathsf{X}}\pi({\mathrm{d}}x)\int_{(0,\infty)\times\mathbb{R}_{+}}R_{x}({\mathrm{d}}w\times{\mathrm{d}}v)v\\
&=\int_{\mathsf{X}}\pi({\mathrm{d}}x)\int_{(0,\infty)}Q_{x}^{(1)}({\mathrm{d}}w)w=1,\nonumber
\end{align}
and the marginal properties follow similarly. The marginal correspondence
between $\breve{P}_{1}$, $\breve{P}_{2}$ and $\tilde{P}_{1}$, $\tilde{P}_{2}$
is also immediate. Clearly $\breve{P}_{2}$ is a Metropolis-Hastings
algorithm with proposal $q(x,{\mathrm{d}}y)R_{y}({\mathrm{d}}u\times{\mathrm{d}}t)$
targeting $\breve{\pi}$, which implies also the reversibility.

We then turn to the reversibility of $\breve{P}_{1}$. We may focus
on the off-diagonal part \citep[cf.][]{tierney-note} and write for
any $A,B\in\mathcal{X}\times\mathcal{B}(\mathbb{R}_{+})^{2}$ with
$A,B\subset\{w>0,\, u>0\},$ 
\begin{align*}
 &
 \int\pi({\mathrm{d}}x)R_{x}({\mathrm{d}}w\times{\mathrm{d}}v)vq(x,{\rm
 d}y)R_{y}({\rm d}u\times{\mathrm{d}}t)\\
 &\qquad\qquad\qquad\times \frac{t}{u}\min\left\{ 1,r(x,y)\frac{u}{w}\right\} \mathbb{I}\{(x,w,v)\in A,\,(y,u,t)\in B\}\\
 & =\int\pi({\mathrm{d}}y)q(y,{\rm d}x)R_{y}({\rm
 d}u\times{\mathrm{d}}t)tR_{x}({\mathrm{d}}w\times{\mathrm{d}}v)\\
 &\qquad\qquad\qquad\times\frac{v}{w}\min\left\{ \frac{w}{u}r(y,x),1\right\} \mathbb{I}\{(y,u,t)\in B,\,(x,w,v)\in A\},
\end{align*}
which is enough to conclude. 
\end{proof}

We next introduce the spaces of square integrable functions which
are constant with respect to the last, the second last, and the two
last coordinates, respectively. 
\begin{align*}
L_{c2}^{2}(\mathsf{X}\times\mathbb{R}_{+}^{2},\breve{\pi}) & :=\bigl\{ f\,:\,\exists f_{1}\in L^{2}(\mathsf{X}\times\mathbb{R}_{+},\tilde{\pi}_{1}),\, f(x,w,v)=f_{1}(x,w)\bigr\}\\
L_{c1}^{2}(\mathsf{X}\times\mathbb{R}_{+}^{2},\breve{\pi}) & :=\bigl\{f \,:\,\exists f_{2}\in L^{2}(\mathsf{X}\times\mathbb{R}_{+},\tilde{\pi}_{2}),\, f(x,w,v)=f_{2}(x,v)\bigr\}\\
L_{c}^{2}(\mathsf{X}\times\mathbb{R}_{+}^{2},\breve{\pi}) & :=\bigl\{f\,:\,\exists\bar{f}\in L^{2}(\mathsf{X},\pi),\, f(x,w,v)=\bar{f}(x)\bigr\},
\end{align*}
where $f\in L^{2}(\mathsf{X}\times\mathbb{R}_{+}^{2},\breve{\pi})$ in
the definitions.
We denote the corresponding classes of zero-mean functions as
$L_{0,c*}^{2}(\mathsf{X}\times\mathbb{R}_{+}^{2},\breve{\pi}):=\bigl\{ f\in L_{c*}^{2}(\mathsf{X}\times\mathbb{R}_{+}^{2},\breve{\pi})\,:\,\breve{\pi}(f)=0\bigr\}$,
where `$*$' is a placeholder.
The next corollary of Lemma \ref{lem:defmoonkernels} records properties
of $\breve{P}_{i}$ on the above mentioned classes of functions.
\begin{cor}
\label{cor:correspondencePmoonPtilde} 
Let $f_{1}\in L^{2}(\mathsf{X}\times\mathbb{R}_{+},\tilde{\pi}_{1})$
and $f_{2}\in L^{2}(\mathsf{X}\times\mathbb{R}_{+},\tilde{\pi}_{2})$,
and denote $g_{1}(x,w,v):=f_{1}(x,w)\in L_{c2}^{2}(\mathsf{X}\times\mathbb{R}_{+}^{2},\breve{\pi})$
and $g_{2}(x,w,v):=f_{2}(x,v)\in L_{c1}^{2}(\mathsf{X}\times\mathbb{R}_{+}^{2},\breve{\pi})$.
Then we have the correspondence for $k\geq1$ 
\begin{align}
\breve{\pi}(g_{1}) & =\tilde{\pi}_{1}(f_{1}), & \breve{P}_{1}^{k}g_{1}(x,w,v) & =\tilde{P}_{1}^{k}f_{1}(x,w)\in L_{c2}^{2}(\mathsf{X}\times\mathbb{R}_{+}^{2},\breve{\pi})\label{eq:moon-vs-pseudo1}\\
\breve{\pi}(g_{2}) & =\tilde{\pi}_{2}(f_{2}), & \breve{P}_{2}^{k}g_{2}(x,w,v) & =\tilde{P}_{2}^{k}f_{2}(x,v)\in L_{c1}^{2}(\mathsf{X}\times\mathbb{R}_{+}^{2},\breve{\pi}).\label{eq:moon-vs-pseudo2}
\end{align}
and as a result for $k\geq1$ the following identities hold
\begin{align*}
\bigl\langle g_{1},\breve{P}_{1}^{k}g_{1}\bigr\rangle_{\breve{\pi}}
&=\bigl\langle
f_{1},\tilde{P}_{1}^{k}f_{1}\bigr\rangle_{\tilde{\pi}_{1}},
& \bigl\langle g_{2},\breve{P}_{2}^{k}g_{2}\bigr\rangle_{\breve{\pi}}
&=\bigl\langle
f_{2},\tilde{P}_{2}^{k}f_{2}\bigr\rangle_{\tilde{\pi}_{2}},\\
\var(g_{1},\breve{P}_{1})&=\var(f_{1},\tilde{P}_{1}),
& \var(g_{2},\breve{P}_{2})&=\var(f_{2},\tilde{P}_{2}).
\end{align*}
\end{cor}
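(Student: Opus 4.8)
The plan is to treat the two rows of the claim symmetrically and to exploit the single structural fact furnished by Lemma~\ref{lem:defmoonkernels}: for every $(x,w,v)$ the $(y,u)$-marginal of $\breve{P}_1(x,w,v;\uarg)$ equals $\tilde{P}_1(x,w;\uarg)$ and is in particular independent of $v$; symmetrically, the $(y,t)$-marginal of $\breve{P}_2(x,w,v;\uarg)$ equals $\tilde{P}_2(x,v;\uarg)$, independently of $w$. I would carry out the argument for $g_1$, $\breve{P}_1$, $\tilde{\pi}_1$ in full; the argument for the second row is identical after interchanging the roles of $w$ and $v$. First, since $g_1(x,w,v)=f_1(x,w)$ depends only on $(x,w)$, integrating against $\breve{\pi}$ and invoking the marginal identity $\breve{\pi}(\mathrm{d}x\times\mathrm{d}w\times(0,\infty))=\tilde{\pi}_1(\mathrm{d}x\times\mathrm{d}w)$ gives at once $\breve{\pi}(g_1)=\tilde{\pi}_1(f_1)$ and $\|g_1\|_{\breve{\pi}}=\|f_1\|_{\tilde{\pi}_1}$; in particular $g_1\in L^2_{c2}(\mathsf{X}\times\mathbb{R}_+^2,\breve{\pi})$ whenever $f_1\in L^2(\mathsf{X}\times\mathbb{R}_+,\tilde{\pi}_1)$.

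The core of the proof is the operator identity $\breve{P}_1^k g_1(x,w,v)=\tilde{P}_1^k f_1(x,w)$, which I would prove by induction on $k$. For $k=1$, because $g_1(y,u,t)=f_1(y,u)$ does not depend on $t$, integrating it against $\breve{P}_1(x,w,v;\uarg)$ collapses to integrating $f_1$ against the $(y,u)$-marginal, which by Lemma~\ref{lem:defmoonkernels} is $\tilde{P}_1(x,w;\mathrm{d}y\times\mathrm{d}u)$; hence $\breve{P}_1 g_1(x,w,v)=\tilde{P}_1 f_1(x,w)$. The decisive observation is that the output again depends only on $(x,w)$ --- it lies in $L^2_{c2}$ --- precisely because that marginal does not involve $v$; this is what keeps the iteration confined to $L^2_{c2}$. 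For the inductive step I would write $\breve{P}_1^k g_1=\breve{P}_1\bigl(\breve{P}_1^{k-1}g_1\bigr)$, note that by hypothesis $\breve{P}_1^{k-1}g_1$ has the form $(x,w,v)\mapsto h(x,w)$ with $h=\tilde{P}_1^{k-1}f_1$, which lies in $L^2(\tilde{\pi}_1)$ since $\tilde{P}_1$ is an $L^2$-contraction, and then apply the $k=1$ step to $h$.

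Given this, the inner-product identities follow by reducing $\breve{\pi}$ to its $(x,w)$-marginal: as both $g_1$ and $\breve{P}_1^k g_1$ are functions of $(x,w)$ alone,
\[
\bigl\langle g_1,\breve{P}_1^k g_1\bigr\rangle_{\breve{\pi}}
=\int\tilde{\pi}_1(\mathrm{d}x\times\mathrm{d}w)\,f_1(x,w)\,\tilde{P}_1^k f_1(x,w)
=\bigl\langle f_1,\tilde{P}_1^k f_1\bigr\rangle_{\tilde{\pi}_1},
\]
with $k=0$ recovering the norm identity already noted. For the asymptotic variances I would use the resolvent representation $\var(f,\lambda\Pi)=2\langle\bar{f},(I-\lambda\Pi)^{-1}\bar{f}\rangle_\mu-\|\bar{f}\|_\mu^2$ with $(I-\lambda\Pi)^{-1}=\sum_{k\ge0}\lambda^k\Pi^k$, legitimate here because $\breve{P}_1$ is $\breve{\pi}$-reversible by Lemma~\ref{lem:defmoonkernels}. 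Since the correspondence above holds for every $L^2$ function, it holds for the centred functions $\bar{f}_1$ and $\bar{g}_1$ (note $\bar{g}_1(x,w,v)=\bar{f}_1(x,w)$, as the means agree), so summing the per-$k$ equalities yields $\var(g_1,\lambda\breve{P}_1)=\var(f_1,\lambda\tilde{P}_1)$ for each $\lambda\in[0,1)$; letting $\lambda\uparrow1$ gives $\var(g_1,\breve{P}_1)=\var(f_1,\tilde{P}_1)$. The only step requiring genuine care is the induction, that is, checking that the coordinate integrated out ($v$ for $\breve{P}_1$, $w$ for $\breve{P}_2$) never re-enters the retained dynamics, so that the iterates stay within $L^2_{c2}$ (resp.\ $L^2_{c1}$); once that stability is secured, the remainder is routine manipulation of marginals.
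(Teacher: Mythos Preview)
Your proposal is correct and follows essentially the same approach as the paper: the marginal identities from Lemma~\ref{lem:defmoonkernels} give the $k=1$ operator correspondence, induction extends it to all $k$, and the inner-product identities follow by integrating functions of $(x,w)$ against the $(x,w)$-marginal of $\breve{\pi}$. The only difference is cosmetic: for the asymptotic-variance equality the paper invokes the Ces\`aro-limit definition $\var(\psi,\Pi)=\mu(\psi^2)+\lim_{M\to\infty}\frac{2}{M}\sum_{i<j\le M}\langle\psi,\Pi^{j-i}\psi\rangle_\mu$ and reads off the equality from the per-$k$ correlation identities, whereas you go through the resolvent $\var(f,\lambda\Pi)$ and let $\lambda\uparrow1$; both routes rest on the same correlation equivalences and are interchangeable here.
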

\begin{proof} 
All the claims are direct, except the last two. Recall that (e.g.
\citep{tierney-note}) for a generic Markov kernel $\Pi$ reversible
with respect to a probability distribution $\mu$ and $\psi\in L_{0}^{2}\bigl(\mathsf{E},\mu\bigr)$
we have 
\begin{align*}
\var(\psi,\Pi) & 
=\lim_{M\to\infty}\frac{1}{M}\bigg(\sum_{i=1}^{M}\E_{\mu}[\psi^{2}(\Phi_{i})]+2\sum_{i=1}^{M}\sum_{j=i+1}^{M}\E_{\mu}[\psi(\Phi_{i})\psi(\Phi_{j})]\bigg)\\
 & =\mu(\psi^{2})+\lim_{n\to\infty}\frac{2}{M}\sum_{i=1}^{M}\sum_{j=i+1}^{M}\bigl\langle\psi,\Pi^{j-i}\psi\bigr\rangle_{\mu}\quad.
\end{align*}
The last two claims now follow from the correlation equivalences.
\end{proof}

Note that by Corollary \ref{cor:correspondencePmoonPtilde}, $L_{c}^{2}(\mathsf{X}\times\mathbb{R}_{+}^{2},\breve{\pi})\subset L_{c1}^{2}(\mathsf{X}\times\mathbb{R}_{+}^{2},\breve{\pi})\cap L_{c2}^{2}(\mathsf{X}\times\mathbb{R}_{+}^{2},\breve{\pi})\subset L^{2}(\mathsf{X}\times\mathbb{R}_{+}^{2},\breve{\pi})$
and the same inclusions hold with centred versions $L_{0,c*}(\mathsf{X}\times\mathbb{R}_{+}^{2},\breve{\pi})$.

We now state the key result which relates various quantities related
to the pseudo-marginal algorithms $\tilde{P}_{1}$ and $\tilde{P}_{2}$
and their counterparts $\breve{P}_{1}$ and $\breve{P}_{2}.$
\begin{thm}
\label{thm:propertiesmoon_Ps} 
Let $\breve{\pi}$, $\breve{P}_{1}$
and $\breve{P}_{2}$ be as defined in Lemma \ref{lem:defmoonkernels}.
Then, 
\begin{enumerate}[label=(\alph*)]
\item \label{item:moon-accprob} $\alpha_{xy}\bigl(\tilde{P}_{1}\bigr)\geq\alpha_{xy}\bigl(\tilde{P}_{2}\bigr)$
for any $x,y\in\mathsf{X}$,
\item \label{item:moon-dirichlet} $\mathcal{E}_{\breve{P}_{1}}(g)\geq\mathcal{E}_{\breve{P}_{2}}(g)$
for any $g\in L_{c2}^{2}(\mathsf{X}\times\mathbb{R}_{+}^{2},\breve{\pi})$, 
\item \label{item:moon-asvar} $\mathrm{var}(f,\breve{P}_{1})\le\mathrm{var}(f,\breve{P}_{2})$
for any $f\in L_{c2}^{2}(\mathsf{X}\times\mathbb{R}_{+}^{2},\breve{\pi})$, 
\item \label{item:ordered-spectral-gaps} 
  ${\rm Gap}_{R}\bigl(\tilde{P}_{1}\bigr)\geq{\rm Gap}_{R}\bigl(\breve{P}_{2}\bigr)$.
\end{enumerate}
\end{thm}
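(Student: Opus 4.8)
The plan is to establish the four items in order, with \ref{item:moon-accprob} and \ref{item:moon-dirichlet} resting on a single elementary observation—that $(v,t)\mapsto\min\{v,r(x,y)t\}$ is concave, being a minimum of two linear maps—combined with the martingale property of $R_x$ recorded in Theorem \ref{thm:conditional-martingale-coupling}\ref{item:martingale-property} and Remark \ref{rem:strassen-martingale}; items \ref{item:moon-asvar} and \ref{item:ordered-spectral-gaps} then follow by feeding \ref{item:moon-dirichlet} into Theorem \ref{thm:ordervariancealaTierney} and the correspondence of Corollary \ref{cor:correspondencePmoonPtilde}. For \ref{item:moon-accprob} I would take independent pairs $(W,V)\sim R_x$ and $(U,T)\sim R_y$, so that $W\sim Q_x^{(1)}$, $V\sim Q_x^{(2)}$, $U\sim Q_y^{(1)}$, $T\sim Q_y^{(2)}$ with $\E[V\mid W]=W$ and $\E[T\mid U]=U$. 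Since $\min\{1,r(x,y)u/w\}\,w=\min\{w,r(x,y)u\}$, one has $\alpha_{xy}(\tilde P_1)=\E[\min\{W,r(x,y)U\}]$ and $\alpha_{xy}(\tilde P_2)=\E[\min\{V,r(x,y)T\}]$; by independence $\E[V\mid W,U]=W$ and $\E[T\mid W,U]=U$, so conditional Jensen for the concave map gives $\E[\min\{V,r(x,y)T\}\mid W,U]\le\min\{W,r(x,y)U\}$, and taking expectations yields $\alpha_{xy}(\tilde P_2)\le\alpha_{xy}(\tilde P_1)$.

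For \ref{item:moon-dirichlet}, write $g(x,w,v)=f_1(x,w)$ and express both Dirichlet forms against the common base measure $\pi(\ud x)R_x(\ud w\times\ud v)q(x,\ud y)R_y(\ud u\times\ud t)$, the diagonal parts contributing nothing. Using $\breve\pi(\ud x\times\ud w\times\ud v)=\pi(\ud x)R_x(\ud w\times\ud v)v$ together with $v\min\{1,r(x,y)t/v\}=\min\{v,r(x,y)t\}$, the form of $\breve P_2$ becomes the integral of $\min\{v,r(x,y)t\}\bigl(f_1(x,w)-f_1(y,u)\bigr)^2$. For $\breve P_1$ the factor $v\cdot(t/u)$ can be integrated out through the martingale identities $\int R_x(\ud w\times\ud v)v=wQ_x^{(1)}(\ud w)$ and $\int R_y(\ud u\times\ud t)(t/u)=Q_y^{(1)}(\ud u)$, collapsing $\mathcal E_{\breve P_1}(g)$ to the integral of $\min\{w,r(x,y)u\}\bigl(f_1(x,w)-f_1(y,u)\bigr)^2$ against $\pi(\ud x)Q_x^{(1)}(\ud w)q(x,\ud y)Q_y^{(1)}(\ud u)$. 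Since the squared difference does not depend on $(v,t)$, for $\breve P_2$ I would condition on $(w,u)$ and invoke conditional Jensen for $(v,t)\mapsto\min\{v,r(x,y)t\}$ exactly as in \ref{item:moon-accprob}, bounding the inner $(v,t)$-integral by $\min\{w,r(x,y)u\}$; this gives $\mathcal E_{\breve P_2}(g)\le\mathcal E_{\breve P_1}(g)$.

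For \ref{item:moon-asvar}, both $\breve P_1$ and $\breve P_2$ are $\breve\pi$-reversible (Lemma \ref{lem:defmoonkernels}), so Theorem \ref{thm:ordervariancealaTierney}\ref{item:var-bounds} applies with $\mu=\breve\pi$. Its lower bound reads $\tfrac12[\var(f,\lambda\breve P_2)-\var(f,\lambda\breve P_1)]\ge\mathcal E_{\lambda\breve P_1}(\hat f_1^\lambda)-\mathcal E_{\lambda\breve P_2}(\hat f_1^\lambda)$, with $\hat f_1^\lambda=(I-\lambda\breve P_1)^{-1}f$. Because $f\in L_{c2}^2(\mathsf X\times\R_+^2,\breve\pi)$ and this space is preserved by $\breve P_1$ (Corollary \ref{cor:correspondencePmoonPtilde}), we have $\hat f_1^\lambda\in L_{c2}^2(\mathsf X\times\R_+^2,\breve\pi)$; and since $\mathcal E_{\lambda\Pi}(g)=(1-\lambda)\|g\|_{\breve\pi}^2+\lambda\mathcal E_\Pi(g)$ for every $g$, item \ref{item:moon-dirichlet} upgrades to $\mathcal E_{\lambda\breve P_1}(\hat f_1^\lambda)\ge\mathcal E_{\lambda\breve P_2}(\hat f_1^\lambda)$ for all $\lambda\in[0,1)$. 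Hence the right-hand side above is non-negative, and letting $\lambda\uparrow1$ yields $\var(f,\breve P_1)\le\var(f,\breve P_2)$.

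For \ref{item:ordered-spectral-gaps}, the map $f_1\mapsto g_1$ with $g_1(x,w,v):=f_1(x,w)$ is, by Corollary \ref{cor:correspondencePmoonPtilde}, an isometry of $L_0^2(\mathsf X\times\R_+,\tilde\pi_1)$ onto $L_{0,c2}^2(\mathsf X\times\R_+^2,\breve\pi)$ preserving the Dirichlet form, $\mathcal E_{\breve P_1}(g_1)=\mathcal E_{\tilde P_1}(f_1)$, so that ${\rm Gap}_R(\tilde P_1)=\inf\{\mathcal E_{\breve P_1}(g_1)\,:\,g_1\in L_{0,c2}^2(\mathsf X\times\R_+^2,\breve\pi),\,\|g_1\|_{\breve\pi}=1\}$. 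For each such unit-norm $g_1$, item \ref{item:moon-dirichlet} and the inclusion $L_{0,c2}^2(\mathsf X\times\R_+^2,\breve\pi)\subseteq L_0^2(\mathsf X\times\R_+^2,\breve\pi)$ give $\mathcal E_{\breve P_1}(g_1)\ge\mathcal E_{\breve P_2}(g_1)\ge{\rm Gap}_R(\breve P_2)$; taking the infimum over $g_1$ yields ${\rm Gap}_R(\tilde P_1)\ge{\rm Gap}_R(\breve P_2)$. The main obstacle throughout is item \ref{item:moon-dirichlet}: the two auxiliary coordinates $(v,t)$ must be handled asymmetrically—integrated out via the martingale identities for $\breve P_1$, but retained and controlled by conditional Jensen for $\breve P_2$—so that both forms reduce to a single comparable integral weighted by $\min\{w,r(x,y)u\}$ against its Jensen upper bound.
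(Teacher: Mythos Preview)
Your proof is correct and follows essentially the same route as the paper's: the core inequality for \ref{item:moon-accprob} and \ref{item:moon-dirichlet} is obtained by integrating out the auxiliary coordinates $(v,t)$ for $\breve{P}_{1}$ via the martingale identities and then applying conditional Jensen for the concave map $(v,t)\mapsto\min\{v,r(x,y)t\}$ to bound $\breve{P}_{2}$, while \ref{item:moon-asvar} and \ref{item:ordered-spectral-gaps} are deduced from \ref{item:moon-dirichlet} through Theorem~\ref{thm:ordervariancealaTierney} and the correspondence of Corollary~\ref{cor:correspondencePmoonPtilde} exactly as in the paper. The only cosmetic difference is that the paper packages \ref{item:moon-accprob} and \ref{item:moon-dirichlet} into a single chain of (in)equalities for a generic bounded non-negative $h(x,w,y,u)$ (taking $h\equiv 1$ for the acceptance rate and a truncated squared increment for the Dirichlet form, with monotone convergence), whereas you keep the two cases separate and handle the squared increment directly via Tonelli; both are valid.
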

\begin{proof} 
We first consider \ref{item:moon-accprob} and \ref{item:moon-dirichlet}.
Fix $x,y\in\mathsf{X}$. Then for any bounded function $h:(\mathsf{X}\times\mathbb{R}_{+})^{2}\to\mathbb{R}_{+}$
by the properties of $R_{x}$ and $R_{y}$ and by Jensen's inequality,
\begin{align*}
\int R_{x}({\mathrm{d}}w&\times{\mathrm{d}}v)vR_{y}({\mathrm{d}}u\times{\mathrm{d}}t)  \frac{t}{u}\min\Big\{1,r(x,y)\frac{u}{w}\Big\} h(x,w,y,u)\\
 & =\int R_{x}({\mathrm{d}}w\times{\mathrm{d}}v)R_{y}({\mathrm{d}}u\times{\mathrm{d}}t)\min\big\{ w,r(x,y)u\big\} h(x,w,y,u)\\
 & \ge\int R_{x}({\mathrm{d}}w\times{\mathrm{d}}v)R_{y}({\mathrm{d}}u\times{\mathrm{d}}t)\min\big\{ v,r(x,y)t\big\} h(x,w,y,u)\\
 & =\int R_{x}({\mathrm{d}}w\times{\mathrm{d}}v)vR_{y}({\mathrm{d}}u\times{\mathrm{d}}t)\min\Big\{1,r(x,y)\frac{t}{v}\Big\} h(x,w,y,u).
\end{align*}
We deduce \ref{item:moon-accprob} with $h\equiv1$ and by using the
correspondence established in Lemma \ref{lem:defmoonkernels}. We
also have, with functions such that $h(x,w,x,w)=0$ for all $(x,w)\in\mathsf{X}\times(0,\infty)$,
\begin{align*}
\int\breve{\pi}({\mathrm{d}}x\times{\mathrm{d}}w&\times{\mathrm{d}}v)  \breve{P}_{1}(x,w,v;{\mathrm{d}}y\times{\mathrm{d}}u\times{\mathrm{d}}t)h(x,w,y,u)\\
 & \ge\int\breve{\pi}({\mathrm{d}}x\times{\mathrm{d}}w\times{\mathrm{d}}v)\breve{P}_{2}(x,w,v;{\mathrm{d}}y\times{\mathrm{d}}u\times{\mathrm{d}}t)h(x,w,y,u).
\end{align*}
Claim \ref{item:moon-dirichlet} is now obtained by letting $h(x,w,y,u)=\min\{m,\big(g(x,w)-g(y,u)\big)^{2}\}$
and by monotone convergence as $m\to\infty$.

In \ref{item:moon-asvar}, we may assume without loss of generality
that $f\in L_{0,c2}^{2}(\mathsf{X}\times\mathbb{R}_{+}^{2},\breve{\pi})$.
For any $\lambda\in[0,1)$, note that by Corollary \ref{cor:correspondencePmoonPtilde}
$\hat{f}_{1}^{\lambda}:=\bigl(I-\lambda\breve{P}_{1}\bigr)^{-1}f=\sum_{k=0}^{\infty}\lambda^{k}(\breve{P}_{1})^{k}f$
satisfies $\hat{f}_{1}^{\lambda}(x,w,v)=\hat{f}_{1}^{\lambda}(x,w)\in L_{0,c2}^{2}(\mathsf{X}\times\mathbb{R}_{+}^{2},\breve{\pi})$.
We may now apply \ref{item:moon-dirichlet} and Theorem \ref{thm:ordervariancealaTierney}
to deduce that 
\[
0\le2\lambda\big[\mathcal{E}_{\breve{P}_{1}}(\hat{f}_{1}^{\lambda})-\mathcal{E}_{\breve{P}_{2}}(\hat{f}_{1}^{\lambda})\big]=2\big[\mathcal{E}_{\lambda\breve{P}_{1}}(\hat{f}_{1}^{\lambda})-\mathcal{E}_{\lambda\breve{P}_{2}}(\hat{f}_{1}^{\lambda})\big]\leq\mathrm{var}(f,\lambda\breve{P}_{2})-\mathrm{var}(f,\lambda\breve{P}_{1}).
\]
 If $\mathrm{var}(f,\breve{P}_{2})$ is infinite, then the claim holds
trivially. Suppose now that $\mathrm{var}(f,\breve{P}_{2})$ is finite,
then taking the limit $\lambda\uparrow1$ ensures that $\mathrm{var}(f,\breve{P}_{2})\ge\mathrm{var}(f,\breve{P}_{1})$.

For \ref{item:ordered-spectral-gaps}, by the variational definition
of the right spectral gap there exists a sequence of functions $\psi_{i}\in L_{0}^{2}(\mathsf{X}\times\mathbb{R}_{+},\tilde{\pi}_{1})$
with $\mathrm{var}_{\tilde{\pi}_{1}}(\psi_{i})=1$ such that 
\[
\lim_{i\rightarrow\infty}\mathcal{E}_{\tilde{P}_{1}}(\psi_{i})={\rm Gap}_{R}\bigl(\tilde{P}_{1}\bigr).
\]
Notice that denoting $\psi_{i}(x,w,v):=\psi_{i}(x,w)$ Corollary \ref{cor:correspondencePmoonPtilde}
implies $\mathrm{var}_{\breve{\pi}}(\psi_{i})=\mathrm{var}_{\tilde{\pi}_{1}}(\psi_{i})=1$
and $\breve{P}_{1}\psi_{i}(x,w,v)=\tilde{P}_{1}\psi_{i}(x,w)$, and
therefore $\mathcal{E}_{\breve{P}_{1}}(\psi_{i})=\mathcal{E}_{\tilde{P}_{1}}(\psi_{i})$.
Now, \ref{item:moon-dirichlet} allows us to conclude that 
\[
{\rm Gap}_{R}\bigl(\breve{P}_{2}\bigr)\leq\liminf_{i\to\infty}\mathcal{E}_{\breve{P}_{2}}(\psi_{i})\leq\liminf_{i\to\infty}\mathcal{E}_{\breve{P}_{1}}(\psi_{i})=\lim_{i\to\infty}\mathcal{E}_{\tilde{P}_{1}}(\psi_{i}).
\qedhere
\]
\end{proof}

We are now ready to conclude the proof of our main result.
\begin{proof}[Proof of Theorem \ref{thm:orderingpseudomarginals}] 
 \label{proof:oftheorem2} The acceptance rate order \ref{item:pseudo-accprob}
is proved in Theorem \ref{thm:propertiesmoon_Ps} \ref{item:moon-accprob}.
For what follows, let $K_{1}(x,u;\,\cdot\,)$ and $K_{2}(x,w;\,\cdot\,)$
be (regular) conditional distributions such that 
\begin{align*}
R_{x}({\rm d}w\times{\rm d}u)&=Q_{x}^{(1)}({\rm d}w)K_{2}(x,w;{\rm
d}u),\\
R_{x}({\rm d}w\times{\rm d}u)&=Q_{x}^{(2)}({\rm d}u)K_{1}(x,u;{\rm d}w),
\end{align*}
 and define the following sub-probability kernels corresponding to
the acceptance parts of $\tilde{P}_{1}$ and $\tilde{P}_{2}$, 
\begin{align*}
\tilde{p}_{1}(x,w;{\rm d}y\times{\rm d}u) & :=q(x,{\rm
d}y)Q_{y}^{(1)}({\rm d}u)\min\left\{ 1,r(x,y)\frac{u}{w}\right\}, \\
\tilde{p}_{2}(x,v;{\rm d}y\times{\rm d}t) & :=q(x,{\rm d}y)Q_{y}^{(2)}({\rm d}t)\min\left\{ 1,r(x,y)\frac{t}{v}\right\} .
\end{align*}
With these, we may write 
\begin{align*}
\breve{P}_{1}(x,w,v;{\rm d}y\times{\rm d}u\times{\rm d}t)
=\;&\tilde{p}_{1}(x,w;{\rm d}y\times{\rm d}u)K_{2}(y,u;{\rm
d}t)\frac{t}{u}\\
&+\delta_{x,w,v}({\rm d}y\times{\rm d}u\times{\rm d}t)\tilde{\rho}_{1}(x,w)\\
\breve{P}_{2}(x,w,v;{\rm d}y\times{\rm d}u\times{\rm d}t)
=\;&\tilde{p}_{2}(x,v;{\rm d}y\times{\rm d}t)K_{1}(y,t;{\rm d}u)\\
&+\delta_{x,w,v}({\rm d}y\times{\rm d}u\times{\rm d}t)\tilde{\rho}_{2}(x,v).
\end{align*}
In both situations, we are now in the setting of Lemma \ref{lem:augmented-gaps}
in Appendix \ref{sec:augmented} with $\mathsf{E}=\mathsf{X}\times(0,\infty)$
and $\mathsf{S}=(0,\infty)$.

Define $h(x,w,v):=f(x)$ and $g(x,w):=f(x)$, then we have from Theorem
\ref{thm:propertiesmoon_Ps} \ref{item:moon-dirichlet} that $\mathcal{E}_{\tilde{P}_{1}}(g)=\mathcal{E}_{\breve{P}_{1}}(h)\geq\mathcal{E}_{\breve{P}_{2}}(h)=\mathcal{E}_{\tilde{P}_{2}}(g)$,
which concludes the proof of \ref{item:pseudo-dirichlet}. Claim \ref{item:pseudo-asvar}
follows from Theorem \ref{thm:propertiesmoon_Ps} \ref{item:moon-asvar}
and Corollary \ref{cor:correspondencePmoonPtilde}. Now recall that
Theorem \ref{thm:propertiesmoon_Ps} \ref{item:ordered-spectral-gaps}
states that ${\rm Gap}_{R}(\tilde{P}_{1})\ge{\rm Gap}_{R}(\breve{P}_{2})$,
and Lemma \ref{lem:augmented-gaps} reads ${\rm Gap}_{R}(\breve{P}_{2})\ge\min\{{\rm Gap}_{R}(\tilde{P}_{2}),1-\tilde{\rho}_{2}^{*}\}$,
which concludes the proof of \ref{item:pseudo-gaps}. Finally, \ref{item:pseudo-gaps-continuous}
is a consequence of Remark \ref{rem:continuous-gaps} in Appendix
\ref{sec:augmented}.
\end{proof}

We conclude this section by a partial result concerning the convexity
and concavity of the expected acceptance rates, the Dirichlet forms
and the asymptotic variances as discussed in Remark \ref{rem:monotonicity-and-convexity}.
We start by stating simple extensions of Theorem \ref{thm:conditional-martingale-coupling}
and Lemma \ref{lem:defmoonkernels} to the case of an arbitrary number
of distributions, which may be useful also in other contexts. Note
that here the indices are reversed in comparison to Remark \ref{rem:monotonicity-and-convexity};
$Q_{x}^{(1)}$ corresponds to the least noisy estimate.
\begin{lem}
\label{lem:strassen-existence-n} 
Suppose that $\{Q_{x}^{(1)}\}_{x\in\mathsf{X}}\le_{cx}\cdots\le_{cx}\{Q_{x}^{(n)}\}_{x\in\mathsf{X}}$,
then there exists a kernel $(x,A)\mapsto R_{x}(A)$ from $\mathsf{X}$
to $\mathbb{R}_{+}^{n}$ such that, with notation $w_{1:i}:=(w_{1},\ldots,w_{i})$,
\begin{enumerate}[label=(\alph*)]
\item For all $i=1,\ldots,n$, 
  $R_{x}(\mathbb{R}_{+}^{i-1}\times A\times\mathbb{R}_{+}^{n-i-1})=Q_{x}^{(i)}(A)$
\item For all $A\in\mathcal{B}(\mathbb{R}_{+}^{i-1})$ and all $i=2,\ldots,n$,
\[
    \textstyle\int R_{x}({\rm d}w_{1:n})w_{i}\mathbb{I}\{w_{1:i-1}\in A\}
    =\int R_{x}({\rm d}w_{1:n})w_{i-1}\mathbb{I}\{w_{1:i-1}\in A\}.
\]
\item For all $i=3,\ldots,n$, all $A_{1},\ldots,A_{i-1}\in\mathcal{B}(\mathbb{R}_{+})$
and all bounded measurable $f:\mathbb{R}_{+}\to\mathbb{R}$,
\begin{multline*}
    \textstyle\int
    R_{x}(\mathrm{d}w_{1:n})f(w_{i})\mathbb{I}\{w_{1}\in
      A_{1},\ldots,w_{i-1}\in A_{i-1}\} \\
    =\textstyle\int R_{x}(\mathbb{d}w_{1:n})f(w_{i})\mathbb{I}\{w_{i-1}\in A_{i-1}\}.
\end{multline*}
\end{enumerate}
\end{lem}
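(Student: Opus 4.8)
The plan is to build $R_x$ as the law of a (time-inhomogeneous) Markov chain obtained by chaining together the pairwise martingale couplings supplied by Theorem \ref{thm:conditional-martingale-coupling}. For each $i\in\{1,\ldots,n-1\}$ the convex order $Q_x^{(i)}\lecx Q_x^{(i+1)}$ lets me invoke Theorem \ref{thm:conditional-martingale-coupling} (applied to the pair $Q_x^{(i)},Q_x^{(i+1)}$) to obtain a kernel $x\mapsto R_x^{(i)}$ from $\mathsf{X}$ to $\mathbb{R}_+^2$ whose marginals are $Q_x^{(i)}$ and $Q_x^{(i+1)}$ and which satisfies the martingale property $\mathbb{E}[V\mid W]=W$ for $(W,V)\sim R_x^{(i)}$. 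Disintegrating $R_x^{(i)}$ against its first marginal $Q_x^{(i)}$ then yields a regular conditional distribution, i.e.\ a probability kernel $K^{(i)}(x,w;\mathrm{d}w')$ with $R_x^{(i)}(\mathrm{d}w\times\mathrm{d}w')=Q_x^{(i)}(\mathrm{d}w)K^{(i)}(x,w;\mathrm{d}w')$, for which the martingale property reads $\int w'\,K^{(i)}(x,w;\mathrm{d}w')=w$ for $Q_x^{(i)}$-almost every $w$.

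I would then define the $n$-fold kernel by composition,
\[
R_x(\mathrm{d}w_{1:n}):=Q_x^{(1)}(\mathrm{d}w_1)\,K^{(1)}(x,w_1;\mathrm{d}w_2)\cdots K^{(n-1)}(x,w_{n-1};\mathrm{d}w_n),
\]
so that under $R_x$ the coordinates $(W_1,\ldots,W_n)$ form a Markov chain with initial law $Q_x^{(1)}$ and transitions $K^{(i)}$. Property (c), which expresses that the conditional law of $W_i$ given the past factors through $W_{i-1}$, is then immediate from this product form, since the conditional law of $W_i$ given $W_{1:i-1}$ equals $K^{(i-1)}(x,W_{i-1};\,\cdot\,)$ and so depends on the past only through $W_{i-1}$.

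For the marginals (a) I would argue by induction on $i$: the base case $W_1\sim Q_x^{(1)}$ holds by construction, and if $W_i\sim Q_x^{(i)}$ then the pair $(W_i,W_{i+1})$ has law $Q_x^{(i)}(\mathrm{d}w)K^{(i)}(x,w;\mathrm{d}w')=R_x^{(i)}$, whose second marginal is $Q_x^{(i+1)}$, so that $W_{i+1}\sim Q_x^{(i+1)}$. The crucial point making this telescoping work is that the first marginal of $R_x^{(i)}$ is exactly $Q_x^{(i)}$, matching the inductive hypothesis. The martingale property (b) then follows by combining the marginals with the step-wise martingale identity: using the Markov property just established,
\[
\mathbb{E}[W_i\mid W_{1:i-1}]=\mathbb{E}[W_i\mid W_{i-1}]=\int w'\,K^{(i-1)}(x,W_{i-1};\mathrm{d}w')=W_{i-1},
\]
where the last equality holds because $W_{i-1}\sim Q_x^{(i-1)}$ and the martingale identity for $K^{(i-1)}$ holds $Q_x^{(i-1)}$-almost everywhere, so the null set is irrelevant; integrating against $\mathbb{I}\{W_{1:i-1}\in A\}$ gives the stated integral form.

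The main obstacle I anticipate is measurability in $x$, precisely the delicate point flagged in the proof of Theorem \ref{thm:conditional-martingale-coupling} for uncountable $\mathsf{X}$. I need the disintegration $R_x^{(i)}(\mathrm{d}w\times\mathrm{d}w')=Q_x^{(i)}(\mathrm{d}w)K^{(i)}(x,w;\mathrm{d}w')$ to furnish a version $K^{(i)}$ that is jointly measurable in $(x,w)$, so that the composition above defines a genuine kernel $x\mapsto R_x$. Since $\mathbb{R}_+$ is Polish and $x\mapsto R_x^{(i)}$ is already a measurable kernel by Theorem \ref{thm:conditional-martingale-coupling}, a jointly measurable regular conditional distribution exists by the standard measurable disintegration theorem, and measurability of $R_x$ then follows because a composition of measurable kernels is again measurable. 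Once this is secured, the remaining verifications are the routine inductive and integration arguments sketched above.
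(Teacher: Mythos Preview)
Your proposal is correct and takes essentially the same approach as the paper: apply Theorem~\ref{thm:conditional-martingale-coupling} to each consecutive pair, disintegrate to obtain transition kernels $K^{(i)}$, and compose them into a Markov chain; the paper phrases this as an induction on $n$ but the construction is identical. Your treatment is in fact more thorough, as you spell out the verification of (a)--(c) and address the joint measurability of the disintegration in $x$, which the paper leaves implicit.
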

\begin{proof} 
For the existence of $R_{x}$, consider Theorem \ref{thm:conditional-martingale-coupling}
applied to each pair $\{Q_{x}^{(i-1)}\}_{x\in\mathsf{X}}\le_{cx}\{Q_{x}^{(i)}\}_{x\in\mathsf{X}}$,
resulting in $R_{x}^{(i)}({\rm d}w_{i-1}\times{\rm d}w_{i})$, and
let $K^{(i)}(x,w_{i-1};{\rm d}w_{i})$ be (regular) conditional probabilities
such that $R_{x}^{(i)}({\rm d}w_{i-1}\times{\rm d}w_{i})=Q_{x}^{(i-1)}({\rm d}w_{i-1})K(x,w_{i-1};{\rm d}w_{i})$.
The claim holds for $n=2$ because then $R_{x}=R_{x}^{(2)}$. For
$n\ge3$, assuming that the claim holds with $n-1,$ it is straightforward
to check that the extension 
\[
R_{x}({\rm d}w_{1:n})=R_{x}({\rm d}w_{1:{n-1}})K^{(n)}(x,w_{1:n-1};{\rm d}w_{n})
\]
satisfies the required properties. 
\end{proof}

\begin{rem} 
If $\hat{W}{}_{1:n}\sim R_{x},$ then $\hat{W}_{1:n}$ is a Markovian
martingale, that is, $\mathbb{E}[f(\hat{W}_{i})\mid\hat{W}_{1:i-1}]=\mathbb{E}[\hat{W}_{i}\mid\hat{W}_{i-1}]$
and $\mathbb{E}\big[\hat{W}_{i}\,\big|\,\hat{W}_{1:i-1}\big]=\hat{W}_{i-1}$
for $i=2,\ldots,n$ and all bounded measurable $f:\mathbb{R}_{+}\to\mathbb{R}.$
\end{rem}

\begin{lem} 
\label{lem:extended-pmoon} 
Let $\{Q_{x}^{(1)}\}_{x\in\mathsf{X}}\le_{cx}\cdots\le_{cx}\{Q_{x}^{(n)}\}_{x\in\mathsf{X}}$
and let $R_{x}$ be as in Lemma \ref{lem:strassen-existence-n}. Define
the probability distribution 
\[
\breve{\pi}({\rm d}x\times{\rm d}w_{1:n}):=\pi({\rm d}x)R_{x}({\mathrm{d}}w_{1:n})w_{n},
\]
 and the following Markov transition probabilities on $\big(\mathsf{X}\times(0,\infty)^{n},\mathcal{X}\times\mathcal{B}((0,\infty)^{n})\big)$
\begin{align*}
\breve{P}_{i}(x,w_{1:n};{\rm d}y\times{\rm d}u_{1:n})  :=\;&q(x,{\rm
d}y)R_{y}({\rm d}u_{1:n})\frac{u_{n}}{u_{i}}\min\left\{
1,r(x,y)\frac{u_{i}}{w_{i}}\right\} \\
&+\delta_{x,w_{1:n}}({\rm d}y\times{\rm d}u_{1:n})\tilde{\rho}_{i}(x,w_{i}).
\end{align*}
Then the $\breve{P}_{i}$s are reversible with respect to $\breve{\pi}$,
and satisfy the marginal correspondence
\begin{align*}
\breve{\pi}(\mathrm{d}x\times\mathbb{R}_{+}^{i-1}\times\mathrm{d}w_{i}\times\mathbb{R}_{+}^{n-i-1}) & =\tilde{\pi}_{i}(\mathrm{d}x\times\mathrm{d}w_{i})\\
\breve{P}_{i}(x,w_{1:n};\mathrm{d}y\times\mathbb{R}_{+}^{i-1}\times\mathrm{d}u_{i}\times\mathbb{R}_{+}^{n-i-1}) & =\tilde{P}{}_{i}(x,w_{i};\mathrm{d}y\times\mathrm{d}u_{i}).
\end{align*}
\end{lem}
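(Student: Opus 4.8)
The plan is to mirror the proof of Lemma \ref{lem:defmoonkernels}, simply replacing the two-coordinate martingale coupling by the $n$-fold coupling $R_x$ supplied by Lemma \ref{lem:strassen-existence-n}; all the genuine content reduces to integrating out the superfluous coordinates. The workhorse identity is the Markovian martingale property: if $\hat{W}_{1:n}\sim R_x$, then combining parts (b) and (c) of Lemma \ref{lem:strassen-existence-n} with the tower property gives $\mathbb{E}[\hat{W}_n\mid \hat{W}_{1:i}]=\hat{W}_i$ for every $i$. Note also that on the support of $\breve{\pi}$ (where $w_n>0$) this forces $w_i>0$ for all $i$, since $\mathbb{E}[w_n\mid w_{1:i}]=w_i$ and $w_n\ge 0$; hence $\breve{\pi}$ charges only $(0,\infty)^n$ and every division by a coordinate below is well defined.

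First I would verify that $\breve{\pi}$ is a probability measure and establish its marginal correspondence. Integrating the weight $w_n$ in $R_x(\mathrm{d}w_{1:n})w_n$ over $w_{i+1:n}$ and using $\mathbb{E}[w_n\mid w_{1:i}]=w_i$ collapses it to $R_x(\mathrm{d}w_{1:i})w_i$; since $w_i$ no longer depends on $w_{1:i-1}$, integrating those out and applying the marginal property \ref{lem:strassen-existence-n}(a) yields $Q_x^{(i)}(\mathrm{d}w_i)w_i$, so that $\breve{\pi}(\mathrm{d}x\times\mathbb{R}_+^{i-1}\times\mathrm{d}w_i\times\mathbb{R}_+^{n-i})=\pi(\mathrm{d}x)Q_x^{(i)}(\mathrm{d}w_i)w_i=\tilde{\pi}_i(\mathrm{d}x\times\mathrm{d}w_i)$; taking $i=1$ gives total mass $1$. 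The same elimination applied to the acceptance part of $\breve{P}_i$ gives its marginal: integrating $u_{i+1:n}$ against the factor $u_n/u_i$ turns $u_n$ into the weight $u_i$, which cancels the $1/u_i$, and then \ref{lem:strassen-existence-n}(a) collapses $u_{1:i-1}$ to leave $q(x,\mathrm{d}y)Q_y^{(i)}(\mathrm{d}u_i)\min\{1,r(x,y)u_i/w_i\}$, while the Dirac part marginalises trivially to $\delta_{x,w_i}\,\tilde{\rho}_i(x,w_i)$. This is the second displayed correspondence, and taking total acceptance mass confirms that $\tilde{\rho}_i(x,w_i)$ is the correct rejection probability, so each $\breve{P}_i$ is a bona fide Markov kernel.

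For reversibility I would argue exactly as for $\breve{P}_1$ in Lemma \ref{lem:defmoonkernels}, treating only the off-diagonal part. It suffices to show that $\breve{\pi}(\mathrm{d}x\times\mathrm{d}w_{1:n})$ times the acceptance part of $\breve{P}_i$ is invariant under the swap $(x,w_{1:n})\leftrightarrow(y,u_{1:n})$. After cancelling the common factor $R_x(\mathrm{d}w_{1:n})R_y(\mathrm{d}u_{1:n})$, this reduces to the scalar identity
\[
\pi(\mathrm{d}x)q(x,\mathrm{d}y)\,\frac{w_nu_n}{u_i}\min\Big\{1,r(x,y)\frac{u_i}{w_i}\Big\}
=\pi(\mathrm{d}y)q(y,\mathrm{d}x)\,\frac{u_nw_n}{w_i}\min\Big\{1,r(y,x)\frac{w_i}{u_i}\Big\}.
\]
Using $\pi(\mathrm{d}y)q(y,\mathrm{d}x)=\pi(\mathrm{d}x)q(x,\mathrm{d}y)r(x,y)$, the relation $r(y,x)=1/r(x,y)$, and the rewriting $\min\{1,r(x,y)u_i/w_i\}=w_i^{-1}\min\{w_i,r(x,y)u_i\}$, both sides collapse to $\pi(\mathrm{d}x)q(x,\mathrm{d}y)\,\frac{w_nu_n}{u_iw_i}\min\{w_i,r(x,y)u_i\}$, which is symmetric in the two states; the diagonal part is symmetric by inspection. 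For the case $i=n$ one may alternatively note that $\breve{P}_n$ is literally a Metropolis--Hastings kernel with proposal $q(x,\mathrm{d}y)R_y(\mathrm{d}u_{1:n})$ targeting $\breve{\pi}$, whose acceptance ratio is exactly $r(x,y)u_n/w_n$.

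The only delicate point, and the step I expect to be the main obstacle, is justifying the coordinate-elimination identities rather than the algebra itself: one must check that the martingale relation $\mathbb{E}[w_n\mid w_{1:i}]=w_i$ may be applied inside the integrals defining $\breve{\pi}$ and $\breve{P}_i$. This amounts to confirming that $\breve{\pi}$ is concentrated on $(0,\infty)^n$ (so the factors $u_n/u_i$ are finite and the convention $u_n/u_i=0$ on $\{u_i=0\}$ is immaterial) and that Fubini and the disintegration of $R_x$ apply to the nonnegative integrands at hand. Both follow from nonnegativity of the weights together with the martingale property established above, after which the remainder is the bookkeeping indicated.
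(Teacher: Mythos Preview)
Your proposal is correct and follows precisely the approach the paper indicates: the paper's own proof consists of the single sentence ``The proof is similar to Lemma \ref{lem:defmoonkernels},'' and you have carried out exactly that programme, using the iterated martingale identity $\mathbb{E}[w_n\mid w_{1:i}]=w_i$ from Lemma \ref{lem:strassen-existence-n} to collapse the extra coordinates and then repeating the detailed-balance computation of Lemma \ref{lem:defmoonkernels} with $w_i,u_i$ in place of $w,u$. The details you supply (positivity of all coordinates under $\breve{\pi}$, the Fubini justification, and the explicit scalar identity for reversibility) are the natural ones and go somewhat beyond what the paper spells out.
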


\noindent The proof is similar to Lemma \ref{lem:defmoonkernels}. 

We now give our partial result relying on an abstract condition on
the Dirichlet forms of the augmented kernels $\breve{P}_{i}$ defined
in Lemma \ref{lem:extended-pmoon}.
\begin{prop}
\label{prop:pseudo-convexity}Let $\{Q_{x}^{(1)}\}_{x\in\mathsf{X}}\le_{cx}\cdots\le_{cx}\{Q_{x}^{(n)}\}_{x\in\mathsf{X}}$
and let $\breve{P}_{i}$ be as defined in Lemma \ref{lem:extended-pmoon}.
If for all $i=2,\ldots,n$ and any $g_{i}\in L_{0}^{2}(\mathsf{X}\times\mathbb{R}_{+}^{2},\breve{\pi})$
such that $g_{i}(x,w_{1:n})=h(x,w_{i})\in L^{2}(\tilde{\pi}_{i},\mathsf{X}\times\mathbb{R}_{+})$
it holds that 
\[
\mathcal{E}_{\breve{P}_{i-1}}(g_{i})-\mathcal{E}_{\breve{P}_{i}}(g_{i})\le\mathcal{E}_{\breve{P}_{i}}(g_{i})-\mathcal{E}_{\breve{P}_{i+1}}(g_{i}),
\]
 then for any function $f\in L^{2}(\mathsf{X},\pi)$, 
\[
\mathrm{var}(f,\tilde{P}_{i})-\mathrm{var}(f,\tilde{P}_{i-1})\le\mathrm{var}(f,\tilde{P}_{i+1})-\mathrm{var}(f,\tilde{P}_{i}),
\]
 whenever the quantities above are finite.\end{prop}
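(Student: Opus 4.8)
The plan is to transport everything onto the common space $\bigl(\mathsf{X}\times(0,\infty)^{n},\breve{\pi}\bigr)$ furnished by Lemma \ref{lem:extended-pmoon}, where the three kernels $\breve{P}_{i-1},\breve{P}_{i},\breve{P}_{i+1}$ are all reversible with respect to the single measure $\breve{\pi}$, so that Hilbert-space arguments apply. Replacing $f$ by $f-\pi(f)$, which alters none of the asymptotic variances, I may assume $\pi(f)=0$. Lifting $f$ to $g(x,w_{1:n}):=f(x)$ and invoking the $n$-fold analogue of Corollary \ref{cor:correspondencePmoonPtilde} (which follows from Lemma \ref{lem:extended-pmoon} exactly as in the two-kernel case), I get $\var(f,\tilde{P}_{j})=\var(g,\breve{P}_{j})$ for $j\in\{i-1,i,i+1\}$, so it suffices to show that the second difference $\var(g,\breve{P}_{i+1})-2\var(g,\breve{P}_{i})+\var(g,\breve{P}_{i-1})$ is nonnegative. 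I would work at the regularised level $\lambda\in[0,1)$, writing $v_j^\lambda:=\var(g,\lambda\breve{P}_j)$, and only pass to $\lambda\uparrow 1$ at the very end.

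The crux is to feed the \emph{same} resolvent function into the two halves of the variational bound of Theorem \ref{thm:ordervariancealaTierney}\,\ref{item:var-bounds}. Set $\hat{g}_i^\lambda:=(I-\lambda\breve{P}_i)^{-1}g=\sum_{k\ge 0}\lambda^k\breve{P}_i^k g$. By the correspondence in Lemma \ref{lem:extended-pmoon}, each $\breve{P}_i^k g$ depends on $w_{1:n}$ only through $w_i$, so $\hat{g}_i^\lambda(x,w_{1:n})=h(x,w_i)$ for some $h\in L^2(\tilde{\pi}_i,\mathsf{X}\times\mathbb{R}_+)$, which is precisely the admissible class of the hypothesis; moreover $\hat{g}_i^\lambda$ is centred, hence lies in the zero-mean space. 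Applying the lower bound of Theorem \ref{thm:ordervariancealaTierney}\,\ref{item:var-bounds} to the pair $(\breve{P}_i,\breve{P}_{i+1})$ with test function $\hat{g}_i^\lambda$ (the resolvent of the \emph{first} kernel) gives
\[
\tfrac12\bigl(v_{i+1}^\lambda-v_i^\lambda\bigr)\ge \mathcal{E}_{\lambda\breve{P}_i}(\hat{g}_i^\lambda)-\mathcal{E}_{\lambda\breve{P}_{i+1}}(\hat{g}_i^\lambda),
\]
while the upper bound applied to $(\breve{P}_{i-1},\breve{P}_i)$ with the same $\hat{g}_i^\lambda$ (now the resolvent of the \emph{second} kernel) gives
\[
\tfrac12\bigl(v_{i}^\lambda-v_{i-1}^\lambda\bigr)\le \mathcal{E}_{\lambda\breve{P}_{i-1}}(\hat{g}_i^\lambda)-\mathcal{E}_{\lambda\breve{P}_{i}}(\hat{g}_i^\lambda).
\]

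Subtracting these two inequalities and using $\mathcal{E}_{\lambda\Pi}(\psi)=(1-\lambda)\|\psi\|_{\breve{\pi}}^2+\lambda\,\mathcal{E}_{\Pi}(\psi)$, the norm contributions cancel in every difference of Dirichlet forms, leaving
\[
\tfrac12\bigl(v_{i+1}^\lambda-2v_i^\lambda+v_{i-1}^\lambda\bigr)\ge \lambda\Bigl[\bigl(\mathcal{E}_{\breve{P}_i}(\hat{g}_i^\lambda)-\mathcal{E}_{\breve{P}_{i+1}}(\hat{g}_i^\lambda)\bigr)-\bigl(\mathcal{E}_{\breve{P}_{i-1}}(\hat{g}_i^\lambda)-\mathcal{E}_{\breve{P}_{i}}(\hat{g}_i^\lambda)\bigr)\Bigr].
\]
Since $\hat{g}_i^\lambda$ has the form $h(x,w_i)$, the bracket on the right is exactly the quantity the hypothesis controls, hence nonnegative, so $v_{i+1}^\lambda-2v_i^\lambda+v_{i-1}^\lambda\ge 0$ for every $\lambda\in[0,1)$. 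Finally, since $\lim_{\lambda\uparrow1}\var(g,\lambda\breve{P}_j)=\var(g,\breve{P}_j)=\var(f,\tilde{P}_j)$ \citep{tierney-note} and the three limits are finite by assumption, I pass to the limit termwise to obtain $\var(f,\tilde{P}_{i+1})-2\var(f,\tilde{P}_i)+\var(f,\tilde{P}_{i-1})\ge0$, which is the asserted convexity.

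I expect the main obstacle to be the argument of the second paragraph: verifying rigorously that $\hat{g}_i^\lambda$ depends on the auxiliary coordinates only through $w_i$, so that it falls into the class to which the Dirichlet-form hypothesis applies. This rests on the $n$-fold correspondence of Lemma \ref{lem:extended-pmoon} together with the careful bookkeeping that the two inequalities of Theorem \ref{thm:ordervariancealaTierney}\,\ref{item:var-bounds} are invoked with the resolvent of $\breve{P}_i$ in its correct role in each case, as the first kernel's resolvent in the lower bound and the second kernel's resolvent in the upper bound. It is precisely this shared test function that turns the abstract Dirichlet-form hypothesis into the desired second-order inequality.
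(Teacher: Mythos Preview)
Your proposal is correct and follows essentially the same argument as the paper: lift $f$ to the common space, form the resolvent $\hat{g}_i^\lambda=(I-\lambda\breve{P}_i)^{-1}g$, observe it depends only on $(x,w_i)$, apply the upper bound of Theorem~\ref{thm:ordervariancealaTierney}\,\ref{item:var-bounds} to the pair $(\breve{P}_{i-1},\breve{P}_i)$ and the lower bound to $(\breve{P}_i,\breve{P}_{i+1})$ with this common test function, invoke the Dirichlet-form hypothesis, and let $\lambda\uparrow 1$. The only cosmetic difference is that you subtract the two inequalities to obtain the second difference directly, whereas the paper chains them; the content is identical.
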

\begin{proof}
Without loss of generality, we may assume $f\in L_{0}^{2}(\mathsf{X},\pi)$,
so that $\hat{g}_{i}^{\lambda}:=\sum_{k\ge0}\lambda\breve{P}_{i}f\in L_{0}^{2}(\mathsf{X}\times\mathbb{R}_{+}^{n},\breve{\pi})$,
and $\hat{g}_{i}^{\lambda}(x,w_{1:n})$ depends only on $x$ and $w_{i}$.
By Theorem \ref{thm:ordervariancealaTierney}, 
\begin{align*}
\big(\mathrm{var}(g,\lambda\breve{P}_{i})-\mathrm{var}(g,\lambda\breve{P}_{i-1})\big)&\leq2\big[\mathcal{E}_{\lambda\breve{P}_{i-1}}\big(\hat{g}_{i}^{\lambda}\big)-\mathcal{E}_{\lambda\breve{P}_{i}}\big(\hat{g}_{i}^{\lambda}\big)\big]\\
& =2\lambda\big[\mathcal{E}_{\breve{P}_{i-1}}\big(\hat{g}_{i}^{\lambda}\big)-\mathcal{E}_{\breve{P}_{i}}\big(\hat{g}_{i}^{\lambda}\big)\big],
\end{align*}
 and similarly 
\[
2\lambda\big[\mathcal{E}_{\breve{P}_{i}}\big(\hat{g}_{i}^{\lambda}\big)-\mathcal{E}_{\breve{P}_{i+1}}\big(\hat{g}_{i}^{\lambda}\big)\big]\leq\mathrm{var}(g,\lambda\breve{P}_{i+1})-\mathrm{var}(g,\lambda\breve{P}_{i}).
\]
 Because $\mathcal{E}_{\breve{P}_{i-1}}\big(\hat{g}_{i}^{\lambda}\big)-\mathcal{E}_{\breve{P}_{i}}\big(\hat{g}_{i}^{\lambda}\big)\le\mathcal{E}_{\breve{P}_{i}}\big(\hat{g}_{i}^{\lambda}\big)-\mathcal{E}_{\breve{P}_{i+1}}\big(\hat{g}_{i}^{\lambda}\big)$,
we obtain the desired variance bound for $\breve{P}_{i-1}$, $\breve{P}_{i}$
and $\breve{P}_{i+1}$. Because the variances are equal to those of
$\tilde{P}_{i-1}$, $\tilde{P}_{i}$ and $\tilde{P}_{i+1}$ as observed
in the proof of Theorem \ref{thm:orderingpseudomarginals}, we conclude
the proof.
\end{proof}


\section{Applications}
\label{sec:Applications} 

The convex order is a well researched topic with a rich and extensive
literature where numerous properties have been established for various
purposes; see for example \citep{mullercomparison,shaked-shanthikumar}
for recent booklength overviews. For example the convex order is closed
under linear combinations and numerous parametric families of distributions
can be convex ordered in terms of their parameters. Conditioning improves
on convex order, that is, $\mathbb{E}\bigl[W\mid Z\bigr]\lecx W$
for some random variable $Z$ \citep[see Theorem 3.A.20. for a more general scenario, in][]{shaked-shanthikumar},
therefore establishing that, as expected, ``Rao-Blackwellisation''
is beneficial in the present context. We detail here applications
of such properties directly relevant to the pseudo-marginal context. 

We first show in Section \ref{sec:Averaging-in-the} that the theory
of majorisation provides us with a tool to compare algorithms when
averaging a number of independent realisations of an approximation.
As a by-product we establish that increasing the number of copies
always improves performance. While this result is not entirely surprising,
Example \ref{ex:counterexamplevarianceorder} is a reminder that the
behaviour of these algorithms can be counter-intuitive and surprising.
In addition, establishing the result directly seems to be far from
obvious while it follows here directly from the convex order. Our
second application is more interesting in terms of methodology. It
is concerned with stratification, which is often easy to implement
without additional computational cost. In particular, we establish
in Section \ref{sub:Stratification} that the standard application
of this variance reduction approach to approximate Bayesian computation
(ABC) MCMC always improves performance in this context. We conclude
by considering extremal distributions in Section \ref{sec:Extremal-properties}
and discuss what information they provide on the efficiency under
certain constraints. We point again to a recent application of our
work in \citep{2014arXiv1404.6298B} in order to establish quantitative
bounds.


\subsection{Averaging and performance monotonicity in the pseudo-marginal algorithm}
\label{sec:Averaging-in-the} 

A simple and practical way to reduce variability of an estimator is
to average multiple independent realisations of this estimator--this
is a particularly interesting and relevant strategy given the advent
of cheap and widely available parallel computing architectures; see
\cite{drovandi2014} for a recent application of this idea to pseudo-marginal
algorithms. It is standard to show that for $N$ independent and identical
realisations of an estimator the choice of uniform weights $1/N$
is optimum in terms of variance when linear combinations are considered.
It is then a consequence that such equal weight averaging reduces
the variance in a monotonic fashion as the number of copies increases.
One may wonder whether averaging always improves performance of a
pseudo-marginal algorithm, especially in the light of Example \ref{ex:counterexamplevarianceorder}
where we have showed that the variance is not a reliable criterion
in this context. As we shall see, however, the answer to this question
is positive, and a direct consequence of the convex order. In fact,
we are able to prove this result in a slightly more general scenario
where the copies are only assumed to be exchangeable.

We preface our result with some background. Assume $Z(1),Z(2),\ldots,Z(N)$
are exchangeable and non-negative random variables of unit expectation
and denote $Z:=\big(Z(1),Z(2),\ldots,Z(N)\big)$. We introduce the
simplex $\mathcal{S}_{N}:=\big\{\lambda:=\big(\lambda(1),\lambda(2),\ldots,\lambda(N)\big)\in[0,1]^{N}\,:\,\sum_{i=1}^{N}\lambda(i)=1\big\}$.
We consider below convex combinations of the elements of $Z$ in terms
of weights in $\mathcal{S}_{N}$ and to that purpose will use for
$a,b\in\mathbb{R}^{N}$ the notation $(a,b):=\sum_{i=1}^{N}a(i)b(i)$.
We will also denote the components of any $a\in\mathbb{R^{N}}$ in
decreasing order as $\max_{i}a(i)=a[1]\ge a[2]\ge\cdots\ge a[N]=\min_{i}a(i)$.
We introduce next the notions of Schur concavity and majorisation
\citep{marshall2010inequalities}.
\begin{defn}[Majorisation and Schur-concavity]
\label{def:majorizationschurconcavity}Suppose that $\lambda,\mu\in\mathbb{R}^{N}$.
\begin{enumerate}[label=(\alph*)]
\item We say that $\mu$ majorises $\lambda$, denoted $\lambda\prec\mu$,
if $\sum_{i=1}^{k}\lambda[i]\leq\sum_{i=1}^{k}\mu[i]$ for all $k=1,\ldots,N$, 
\item A function $\phi:\mathbb{R}^{N}\rightarrow\mathbb{R}$ is said to
be Schur concave if $\lambda\prec\mu$ implies $\phi(\lambda)\ge\phi(\mu)$,
and $\phi$ is Schur convex if $\lambda\prec\mu$ implies $\phi(\lambda)\le\phi(\mu)$.
\end{enumerate}
\end{defn}
We state next a well-known result which establishes that convex combinations
of exchangeable random variables with majorised weights imply a convex
order on convex linear combinations.
\begin{thm} 
\label{thm:majorisation} 
For any $\lambda,\mu\in\mathcal{S}_{N}$
such that $\lambda\prec\mu$, we have 
\[
\bigl(\lambda,Z\bigr)\lecx \bigl(\mu,Z\bigr).
\]
\end{thm}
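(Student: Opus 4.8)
The plan is to combine the classical characterisation of majorisation in terms of doubly stochastic matrices with Jensen's inequality and the exchangeability of $Z$. The entry point is the Hardy--Littlewood--Pólya theorem together with the Birkhoff--von Neumann theorem \citep{marshall2010inequalities}: since $\lambda \prec \mu$, there is a doubly stochastic matrix $P$ with $\lambda = P\mu$, and $P$ decomposes as a convex combination $P = \sum_k \theta_k \Sigma_k$ of permutation matrices $\Sigma_k$, where $\theta_k \ge 0$ and $\sum_k \theta_k = 1$. Consequently $\lambda$ is a convex combination of permutations of the entries of $\mu$, which I will write as $\lambda = \sum_k \theta_k \mu_{\sigma_k}$, with $\mu_{\sigma_k}$ the vector obtained by permuting the components of $\mu$ according to $\sigma_k$.

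With this representation in hand, I would first use bilinearity of $(\,\cdot\,,\,\cdot\,)$ to obtain
\[
(\lambda, Z) = \sum_k \theta_k\, (\mu_{\sigma_k}, Z),
\]
which exhibits $(\lambda, Z)$ as a convex combination, on a single probability space (all sharing the same $Z$), of the variables $(\mu_{\sigma_k}, Z)$. The second ingredient is that each $(\mu_{\sigma_k}, Z)$ has the same law as $(\mu, Z)$: permuting the weights in the inner product is the same as permuting the coordinates of $Z$, namely $(\mu_{\sigma_k}, Z) = \sum_j \mu(j) Z(\sigma_k^{-1}(j))$, and by exchangeability the permuted vector $(Z(\sigma_k^{-1}(1)), \ldots, Z(\sigma_k^{-1}(N)))$ has the same joint distribution as $Z$, so that $(\mu_{\sigma_k}, Z) \stackrel{d}{=} (\mu, Z)$.

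To finish, I would fix an arbitrary convex $\phi:\mathbb{R}\to\mathbb{R}$ for which the relevant expectations are well defined, and apply Jensen's inequality pointwise to the convex combination,
\[
\phi\big((\lambda, Z)\big) = \phi\Big(\sum_k \theta_k\, (\mu_{\sigma_k}, Z)\Big) \le \sum_k \theta_k\, \phi\big((\mu_{\sigma_k}, Z)\big).
\]
Taking expectations and then invoking the distributional identity for each $k$ gives
\[
\mathbb{E}\big[\phi((\lambda, Z))\big] \le \sum_k \theta_k\, \mathbb{E}\big[\phi((\mu_{\sigma_k}, Z))\big] = \Big(\sum_k \theta_k\Big)\mathbb{E}\big[\phi((\mu, Z))\big] = \mathbb{E}\big[\phi((\mu, Z))\big],
\]
which is precisely $(\lambda, Z) \lecx (\mu, Z)$ by Definition \ref{def:cx}.

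The only substantive input is the permutation representation of majorisation; once it is available, the result reduces to a short Jensen-plus-exchangeability computation. I expect the point requiring the most care to be the bookkeeping in the exchangeability step, that is, checking that the permutation $\mu_{\sigma_k}$ of the weights genuinely corresponds to a coordinate permutation of $Z$ under which the law of $Z$ is invariant; this is routine rather than delicate. Integrability is not an issue: since the $Z(i)$ are non-negative with unit mean and $\lambda,\mu\in\mathcal{S}_N$, both $(\lambda, Z)$ and $(\mu, Z)$ are non-negative with unit expectation, so the convex-order expectations are governed by the standard convention of Definition \ref{def:cx}.
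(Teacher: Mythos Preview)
Your argument is correct. The paper does not actually prove this theorem; it simply cites \citep[Theorem 3.A.35]{shaked-shanthikumar} and moves on. Your proof via Hardy--Littlewood--P\'olya and Birkhoff--von Neumann, writing $\lambda$ as a convex combination of permutations of $\mu$ and then combining Jensen's inequality with exchangeability of $Z$, is precisely the standard argument behind the cited result, so there is nothing to compare: you have supplied the proof the paper outsourced.
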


\noindent For a proof, see for example \citep[Theorem 3.A.35]{shaked-shanthikumar}.

Let then $Z_{x}=\big(Z_{x}(1),Z_{x}(2),\ldots,Z_{x}(N)\big)$ for
any $x\in\mathsf{X}$ stand for an exchangeable vector as above, and
let $\lambda$,$\mu\in\mathcal{S}_{N}$. Consider the weights $W_{x}^{(\lambda)}:=(\lambda,Z_{x})$
and $W_{x}^{(\mu)}:=(\mu,Z_{x})$, which are non-negative and have
unit expectation, and let $\tilde{P}_{\lambda}$ and $\tilde{P}_{\mu}$
denote the pseudo-marginal algorithms corresponding to $\{W_{x}^{(\lambda)}\}_{x\in\mathsf{X}}$
and $\{W_{x}^{(\mu)}\}_{x\in\mathsf{X}}$, respectively.
\begin{thm}
\label{thm:schur-pseudo} 
Assume that $\lambda,\mu\in\mathcal{S}_{N}$
satisfy $\lambda\prec\mu$. Then, for any $x,y\in\mathsf{X}$ and
any $f\in L^{2}(\mathsf{X},\pi)$,
\[
\alpha_{xy}\bigl(\tilde{P}_{\lambda}\bigr)\geq\alpha_{xy}\bigl(\tilde{P}_{\mu}\bigr)\qquad\text{and}\qquad\mathrm{var}\bigl(f,\tilde{P}_{\lambda}\bigr)\le\mathrm{var}\bigl(f,\tilde{P}_{\mu}\bigr),
\]
that is, the expected acceptance probability is Schur concave, while
the asymptotic variance is Schur convex. 

Furthermore, if $\pi$ is not concentrated on points, \textup{${\rm Gap}_{R}(\tilde{P}_{\lambda})\ge{\rm Gap}_{R}(\tilde{P}_{\mu})$,
that is, the right spectral gap is Schur concave.}
\end{thm}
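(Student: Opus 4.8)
The plan is to treat this statement as an essentially immediate corollary of Theorem~\ref{thm:orderingpseudomarginals}, the only genuine work being to translate the majorisation hypothesis $\lambda\prec\mu$ into the conditional convex order of Definition~\ref{a:conditional-convex-order}. Once the pointwise convex order $W_x^{(\lambda)}\lecx W_x^{(\mu)}$ is established for every $x\in\mathsf{X}$, the three asserted inequalities are precisely parts~\ref{item:pseudo-accprob}, \ref{item:pseudo-asvar} and \ref{item:pseudo-gaps-continuous} of that theorem.

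First I would fix an arbitrary $x\in\mathsf{X}$ and apply Theorem~\ref{thm:majorisation} to the exchangeable vector $Z_x$: since $\lambda\prec\mu$, it yields $W_x^{(\lambda)}=(\lambda,Z_x)\lecx(\mu,Z_x)=W_x^{(\mu)}$. As $x$ was arbitrary, this gives $\{Q_x^{(\lambda)}\}_{x\in\mathsf{X}}\lecx\{Q_x^{(\mu)}\}_{x\in\mathsf{X}}$, where $Q_x^{(\lambda)}$ and $Q_x^{(\mu)}$ denote the laws of $W_x^{(\lambda)}$ and $W_x^{(\mu)}$. Along the way I would record that these convex combinations are admissible weight distributions for a pseudo-marginal kernel: non-negativity is inherited from that of the components of $Z_x$ together with $\lambda,\mu\in\mathcal{S}_N$, and the unit-mean property follows since $\mathbb{E}[(\lambda,Z_x)]=\sum_i\lambda(i)\mathbb{E}[Z_x(i)]=\sum_i\lambda(i)=1$, and likewise for $\mu$. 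Hence $\tilde{P}_\lambda$ and $\tilde{P}_\mu$ are well-defined and the hypotheses of Theorem~\ref{thm:orderingpseudomarginals} are met with $\tilde{P}_1=\tilde{P}_\lambda$ and $\tilde{P}_2=\tilde{P}_\mu$.

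It then remains only to read off the conclusions. Part~\ref{item:pseudo-accprob} gives $\alpha_{xy}(\tilde{P}_\lambda)\ge\alpha_{xy}(\tilde{P}_\mu)$ for all $x,y\in\mathsf{X}$; part~\ref{item:pseudo-asvar} gives $\mathrm{var}(f,\tilde{P}_\lambda)\le\mathrm{var}(f,\tilde{P}_\mu)$ for all $f\in L^2(\mathsf{X},\pi)$; and, under the assumption that $\pi$ is not concentrated on points, part~\ref{item:pseudo-gaps-continuous} gives $\mathrm{Gap}_R(\tilde{P}_\lambda)\ge\mathrm{Gap}_R(\tilde{P}_\mu)$. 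Reading these inequalities as statements about the map $\lambda\mapsto(\,\cdot\,)$ on $\mathcal{S}_N$ under the order $\prec$ yields exactly the Schur concavity of $\lambda\mapsto\alpha_{xy}(\tilde{P}_\lambda)$ and of $\lambda\mapsto\mathrm{Gap}_R(\tilde{P}_\lambda)$, and the Schur convexity of $\lambda\mapsto\mathrm{var}(f,\tilde{P}_\lambda)$.

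Because the substance is borrowed wholesale from Theorem~\ref{thm:majorisation} and Theorem~\ref{thm:orderingpseudomarginals}, I do not anticipate a serious obstacle. The one point demanding care is the orientation of the two orders: $\lambda\prec\mu$ makes $\mu$ the \emph{more} dispersed weight vector, so $W_x^{(\mu)}$ is the noisier estimator and must be assigned the role of the ``$2$'' kernel. I would double-check that, with this identification, the inequalities emerge with $\tilde{P}_\lambda$ as the more efficient algorithm—larger acceptance rate and spectral gap, smaller asymptotic variance—which is precisely what the stated Schur concavity and convexity require.
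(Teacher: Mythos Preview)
Your proposal is correct and follows exactly the paper's own argument, which simply states that the result follows directly from Theorems~\ref{thm:orderingpseudomarginals} and~\ref{thm:majorisation}. Your added checks (admissibility of the weight distributions and the orientation of the two orders) are sound and more explicit than what the paper spells out.
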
 
\begin{proof} 
Follows directly from Theorems \ref{thm:orderingpseudomarginals}
and \ref{thm:majorisation}. 
\end{proof} 

\begin{rem} 
It is clear that Theorem \ref{thm:schur-pseudo} can be generalised
to incorporate state dependent weights, $\lambda=\{\lambda_{x}\}_{x\in\mathsf{X}}$
and $\mu=\{\mu_{x}\}_{x\in\mathsf{X}}$ where $\lambda_{x}$,$\mu_{x}\in\mathcal{S}_{N}$
and use $W_{x}^{(\lambda)}:=(\lambda_{x},Z_{x})$ and $W_{x}^{(\lambda)}:=(\lambda_{x},Z_{x})$.
The result also generalises to infinite exchangeable sequences $Z_{x}=\big(Z_{x}(1),Z_{x}(2),\ldots\big)$
and $\lambda_{x},\mu_{x}\in\mathcal{S}_{\infty}:=\big\{\lambda\in[0,1]^{\infty}\,:\,\sum_{i=1}^{\infty}\lambda(i)=1\big\}$
and letting $W_{x}^{(\lambda)}:=(\lambda_{x},Z_{x}):=\sum_{i=1}^{\infty}\lambda_{x}(i)Z_{x}(i)$.
\end{rem} 
For any $k\in\{1,\ldots,N\}$ we define $u_{k}\in\mathcal{S}_{N}$
as the uniform weights $u_{k}:=(1/k,\ldots,1/k,0,\ldots,0)$, that
is, the first $k$ components are non-zero and are all equal. The
next result shows that the optimal weighting of $N$ estimators is
the uniform weighting, and that every extra sample improves performance.
\begin{cor} 
\label{cor:optimal-weights} 
For any $\lambda\in\mathcal{S}_{N}$ all
$x,y\in\mathsf{X}$ and $f\in L^{2}(\mathsf{X},\pi)$,
\[
\alpha_{xy}(\tilde{P}_{u_{N}})\ge\alpha_{xy}(\tilde{P}_{\lambda})\qquad\mathrm{\text{and}\qquad var}\bigl(f,\tilde{P}_{u_{N}}\bigr)\le\mathrm{var}\bigl(f,\tilde{P}_{\lambda}\bigr),
\]
and the following functions from $\{1,\ldots,N\}$ to $\mathbb{R}_{+}$
satisfy: $k\mapsto\alpha_{xy}(\tilde{P}_{u_{k}})$ is non-decreasing
and $k\mapsto\mathrm{var}\bigl(g,\tilde{P}_{u_{k}}\bigr)$ is non-increasing. 

Furthermore, if $\pi$ is not concentrated on points, ${\rm Gap}_{R}(\tilde{P}_{u_{N}})\geq{\rm Gap}(\tilde{P}_{\lambda})$
and $k\mapsto{\rm Gap}_{R}(\tilde{P}_{u_{k}})$ is non-decreasing.
\end{cor}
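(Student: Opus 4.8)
The plan is to obtain the whole corollary as a direct consequence of Theorem~\ref{thm:schur-pseudo}, by checking two elementary majorisation relations in the simplex $\mathcal{S}_N$. Recall that Theorem~\ref{thm:schur-pseudo} asserts that $\lambda\prec\mu$ implies $\alpha_{xy}(\tilde{P}_\lambda)\ge\alpha_{xy}(\tilde{P}_\mu)$, $\mathrm{var}(f,\tilde{P}_\lambda)\le\mathrm{var}(f,\tilde{P}_\mu)$, and, when $\pi$ has no atoms, ${\rm Gap}_R(\tilde{P}_\lambda)\ge{\rm Gap}_R(\tilde{P}_\mu)$. Thus the displayed (anti)monotonicity properties will follow the moment the relevant weight vectors are shown to be majorisation-comparable, and the task reduces entirely to verifying partial-sum inequalities.

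For the optimality of $u_N$, I would first record the classical fact that $u_N=(1/N,\ldots,1/N)$ is the minimal element of $(\mathcal{S}_N,\prec)$: for any $\lambda\in\mathcal{S}_N$ and any $k\in\{1,\ldots,N\}$ the $k$ largest components satisfy $\sum_{i=1}^k\lambda[i]\ge \frac{k}{N}\sum_{i=1}^N\lambda(i)=\frac{k}{N}=\sum_{i=1}^k u_N[i]$, because the average of the $k$ largest entries is at least the overall mean $1/N$. Hence $u_N\prec\lambda$, and Theorem~\ref{thm:schur-pseudo} gives $\alpha_{xy}(\tilde{P}_{u_N})\ge\alpha_{xy}(\tilde{P}_\lambda)$, $\mathrm{var}(f,\tilde{P}_{u_N})\le\mathrm{var}(f,\tilde{P}_\lambda)$, and ${\rm Gap}_R(\tilde{P}_{u_N})\ge{\rm Gap}_R(\tilde{P}_\lambda)$ under the non-atomic assumption.

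For the monotonicity in the number of copies, I would check that $u_{k+1}\prec u_k$ as elements of $\mathcal{S}_N$. Comparing sorted partial sums: for $m\le k$ one has $\sum_{i=1}^m u_{k+1}[i]=m/(k+1)\le m/k=\sum_{i=1}^m u_k[i]$, while for $m\ge k+1$ both partial sums equal $1$ (since $u_k$ has only $k$ nonzero entries); so the majorisation inequalities hold for every $m$. Applying Theorem~\ref{thm:schur-pseudo} along the relation $u_{k+1}\prec u_k$ then yields $\alpha_{xy}(\tilde{P}_{u_{k+1}})\ge\alpha_{xy}(\tilde{P}_{u_k})$, $\mathrm{var}(g,\tilde{P}_{u_{k+1}})\le\mathrm{var}(g,\tilde{P}_{u_k})$ and, for non-atomic $\pi$, ${\rm Gap}_R(\tilde{P}_{u_{k+1}})\ge{\rm Gap}_R(\tilde{P}_{u_k})$, which are exactly the asserted monotonicities of $k\mapsto\alpha_{xy}(\tilde{P}_{u_k})$, $k\mapsto\mathrm{var}(g,\tilde{P}_{u_k})$ and $k\mapsto{\rm Gap}_R(\tilde{P}_{u_k})$.

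I do not anticipate a genuine obstacle here: all the analytic content is already carried by Theorem~\ref{thm:schur-pseudo} (and behind it the martingale coupling together with Theorem~\ref{thm:majorisation}), so what remains is purely combinatorial. The only points requiring care are getting the direction of $\prec$ right --- the uniform vector being the \emph{smallest}, not the largest, element of the majorisation order --- and observing that $u_k$, $u_{k+1}$ and $\lambda$ all belong to the common simplex $\mathcal{S}_N$, so that Theorem~\ref{thm:schur-pseudo} applies verbatim with no re-indexing of the underlying exchangeable family $Z_x$.
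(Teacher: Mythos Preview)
Your proposal is correct and follows exactly the paper's approach: the paper's proof simply says ``Follows from Theorem~\ref{thm:schur-pseudo} by observing that $u_{N}\prec\lambda$ and that $u_{k}\prec u_{k-1}$,'' and you have spelled out precisely these two majorisation checks before invoking the theorem. There is nothing to add.
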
 
\begin{proof} 
Follows from Theorem \ref{thm:schur-pseudo} by observing that $u_{N}\prec\lambda$
and that $u_{k}\prec u_{k-1}$.
\end{proof} 

\begin{rem} 
Note that the convex order $(u_{k},Z)\le_{cx}(u_{k-1},Z)$ can be
obtained directly by applying \citep[Corollary 1.5.24]{mullercomparison},
which is related to the existence of reverse martingales for U-statistics,
but our result is slightly stronger.  
\end{rem}

The monotonicity result in Corollary \ref{cor:optimal-weights} provides
us with some justification for averaging. Because we do not 
quantify the benefit of increased averaging, we cannot provide 
guidelines what the optimal number of samples. Instead, we point an
interested reader to the recent related work of Sherlock, Thiery, Roberts and Rosenthal
\citep{sherlock-thiery-roberts-rosenthal} and Doucet, Pitt,
Deligiannidis and Kohn \citep{doucet-pitt-deligiannidis-kohn}
who give conditions for optimal acceptance rates in some contexts.
We note that when parallel
architectures are available, averaging may be cheap or free,
but our result also form the basis for the justification
of validity of adaptive MCMC algorithms which seek for an optimal
number of samples in the spirit of those proposed in a different context
\citep{andrieu-robert,gilks-roberts-sahu}. For example, it is possible
to consider algorithms which increase or decrease the number of samples
according to some rule, aiming to reach a pre-defined average acceptance
rate. 


\subsection{Stratification}
\label{sub:Stratification} 

In the context of Monte Carlo methods, stratification is a technique
which aims to reduce variance of estimators of expectations. It turns
out that stratification can also imply improved performance in terms
of convex order. We refer the reader to very recent and important
progress in this area \citep{goldstein2011stochastic,goldstein2012stochastic},
but we start here with a more specific and immediately applicable
result.

Approximate Bayesian computation (ABC) \citep{beaumont2002approximate,tavare1997inferring}
are now popular methods which are applicable in Bayesian inference
involving intractable (or expensive to evaluate) likelihood function,
but where simulation from the model is easy. Consider some data $y^{*}\in\mathsf{Y}$
and assume that it arises from a family of probability distributions
with densities $\bigl\{\ell\bigl(\cdot\mid x\bigr),x\in\mathsf{X}\bigr\}$,
with respect to some appropriate reference measure $\lambda$. Instead
of the exact likelihood $\ell\bigl(y^{*}\mid x\bigr)$, an approximate
likelihood function is constructed. Assume $s:\mathsf{Y}^{2}\rightarrow\mathbb{R}_{+}$
is a function whose r\^ole is to measure dissimilarity between datasets,
and consider for some $\epsilon>0$ the modified ABC likelihood 
\[
\ell_{{\rm ABC}}\bigl(y^{*}\mid x\bigr):=\int\ell\bigl(y\mid x\bigr)\mathbb{I}\bigl\{ s\bigl(y,y^{*}\bigr)\leq\epsilon\bigr\}\lambda\bigl({\rm d}y\bigr).
\]
This alternative likelihood function is in general intractable, but
naturally lends itself to pseudo-marginal computations \citep{andrieu-roberts}.

Indeed, for any $x\in\mathsf{X}$ assume $Y_{1},Y_{2},\ldots,Y_{N}\sim\ell\bigl(y\mid x\bigr)\lambda\bigl({\rm d}y\bigr)$
are independent samples. Then, one can construct a non-negative and
unbiased estimator $T_{x}$ of $\ell_{{\rm ABC}}\bigl(y^{*}\mid x\bigr)$
as follows 
\begin{equation}
T_{x}:=\frac{1}{N}\sum_{i=1}^{N}\mathbb{I}\bigl\{ s\bigl(Y_{i},y^{*}\bigr)\leq\epsilon\bigr\}.
\label{eq:abc-estimator}
\end{equation}
This leads to the unit expectation estimator $W_{x}=T_{x}/\ell_{{\rm ABC}}(y^{*}\mid x)$.
In practice, simulation of the random variables $Y$ on a computer
often involves using $d$ (pseudo-)random numbers uniformly distributed
on the unit interval $[0,1]$, which are then mapped to form one $Y_{i}.$
That is, there is a mapping from the unit cube $[0,1]^{d}$ to $\mathsf{Y}$,
and with an inconsequential abuse of notation, if $U_{i}\sim\mathcal{U}\bigl(\bigl[0,1\bigr]^{d}\bigr)$
then $Y\bigl(U_{i}\bigr)\sim\ell\bigl(y\mid x\bigr)\lambda\bigl({\rm d}y\bigr)$.
\begin{example}
An extremely simple illustration of this is the situation where $d=1$
and an inverse cdf method is used, that is $Y\bigl(U\bigr)=F^{-1}\bigl(U\bigr)$
where $F$ is the cumulative distribution function (cdf) of $Y$.
For example, in the case of the $g$-and-$k$ distribution the inverse
cdf is given by \citep{fearnhead-prangle}
\[
F^{-1}\bigl(u;A,B,c,g,k\bigr)=A+B\left(1+c\frac{1-\exp\bigl(-gz(u)\bigr)}{1+\exp\bigl(-gz(u)\bigr)}\right)\bigl(1+z(u)^{2}\bigr)^{k}z(u),
\]
where $z(u)$ is the standard normal quantile, and $A,B,c,g,k$ are
parameters.
\end{example}
In this context, an easily implementable method to improve performance
of the corresponding pseudo marginal algorithm is as follows. Let
$\mathcal{A}:=\bigl\{ A_{1},\ldots,A_{N}\bigr\}$ be a partition of
the unit cube $[0,1]^{d}$ such that $\mathbb{P}\bigl(U_{1}\in A_{i}\bigr)=1/N$,
and such that it is possible to sample uniformly from each $A_{i}$.
Perhaps the simplest example of this is when $\mathcal{A}$ corresponds
to the dyadic sub-cubes of $[0,1]^{d}$. Let $V_{i}\sim\mathcal{U}\bigl(A_{i}\bigr)$
for $i=1,\ldots,N$ be independent. We may now replace the estimator
in (\ref{eq:abc-estimator}) with 
\begin{equation}
T_{x}^{{\rm strat}}:=\frac{1}{N}\sum_{i=1}^{N}\mathbb{I}\bigl\{ s\bigl(Y(V_{i}),y^{*}\bigr)\leq\epsilon\bigr\}.
\label{eq:stratified-estimator}
\end{equation}
It is straightforward to check that this is a non-negative unbiased
estimator of $\ell_{{\rm ABC}}(y^{*}\mid x)$ which means that $W_{x}^{\mathrm{strat}}:=T_{x}^{\text{\ensuremath{\mathrm{strat}}}}/\ell_{{\rm ABC}}(y^{*}\mid x)$
has unit expectation as required. With $\tilde{P}$ the pseudo-marginal
approximation corresponding to using $\{W_{x}\}_{x\in\mathsf{X}}$
and $\tilde{P}^{{\rm strat}}$ the approximation corresponding to
$\{W_{x}^{\mathrm{strat}}\}_{x\in\mathsf{X}},$ we have the following
result.
\begin{thm}
For any $x\in\mathsf{X}$ we have $W_{x}^{\mathrm{strat}}\le_{cx}W_{x}$
and therefore for any $x,y\in\mathsf{X}$ and $f\in L^{2}\bigl(\mathsf{X},\pi\bigr)$,
\[
\alpha_{xy}\bigl(\tilde{P}^{\text{\ensuremath{\mathrm{strat}}}}\bigr)\ge\alpha_{xy}\bigl(\tilde{P}\bigr)\qquad\text{and\qquad}\mathrm{var}\bigl(f,\tilde{P}^{\mathrm{strat}\text{}}\bigr)\le\mathrm{var}\bigl(f,\tilde{P}\bigr).
\]
Furthermore, if $\pi$ is not concentrated on points, ${\rm Gap}_{R}\bigl(\tilde{P}^{\mathrm{strat}}\bigr)\ge{\rm Gap}_{R}\bigl(\tilde{P}\bigr)$.\end{thm}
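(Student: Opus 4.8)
The plan is to reduce everything to the single convex-order claim $W_x^{\mathrm{strat}}\lecx W_x$ for each fixed $x\in\mathsf{X}$, since the three displayed conclusions then follow at once by feeding $Q_x^{(1)}:=\mathrm{law}(W_x^{\mathrm{strat}})$ and $Q_x^{(2)}:=\mathrm{law}(W_x)$ into Theorem~\ref{thm:orderingpseudomarginals}\ref{item:pseudo-accprob}, \ref{item:pseudo-asvar} and \ref{item:pseudo-gaps-continuous}. Fix $x$ and abbreviate $\ell:=\ell_{\mathrm{ABC}}(y^*\mid x)$. Because $W_x=T_x/\ell$ and $W_x^{\mathrm{strat}}=T_x^{\mathrm{strat}}/\ell$, and the convex order is preserved under multiplication by the positive constant $1/(N\ell)$, it suffices to establish
\[
\sum_{i=1}^N\charfun{s(Y(V_i),y^*)\le\epsilon}\;\lecx\;\sum_{i=1}^N\charfun{s(Y_i,y^*)\le\epsilon}.
\]

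The structural observation I would exploit is that, being indicators, all summands are Bernoulli variables. Writing $p_i:=N\int_{A_i}\charfun{s(Y(u),y^*)\le\epsilon}\,\mathrm{d}u=\mathbb{P}\bigl(s(Y(V_i),y^*)\le\epsilon\bigr)$, the left-hand side is $S^{\mathrm{strat}}:=\sum_{i=1}^N B_i$ with $B_i\sim\mathrm{Bernoulli}(p_i)$ independent (the $V_i$ being independent), whereas the right-hand side is $S:=\sum_{i=1}^N C_i$ with $C_i$ i.i.d.\ $\mathrm{Bernoulli}(\bar p)$, where $\bar p:=N^{-1}\sum_{i=1}^N p_i=\int\charfun{s(Y(u),y^*)\le\epsilon}\,\mathrm{d}u=\ell$. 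In particular the two sums share the common mean $N\ell$, and the problem has been reduced to comparing two sums of independent Bernoulli variables whose parameter vectors $(p_1,\dots,p_N)$ and $(\bar p,\dots,\bar p)$ have the same total.

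Since the constant vector is majorised by every vector with the same sum, $(\bar p,\dots,\bar p)\prec(p_1,\dots,p_N)$, and since $p\mapsto\mathbb{E}[\phi(\sum_i\mathrm{Bernoulli}(p_i))]$ is Schur concave for every convex $\phi$ (a classical majorisation inequality, see \citep{marshall2010inequalities,shaked-shanthikumar}), I would conclude $\mathbb{E}[\phi(S^{\mathrm{strat}})]\le\mathbb{E}[\phi(S)]$ for all convex $\phi$, that is $S^{\mathrm{strat}}\lecx S$. A self-contained route is also available: any Robin--Hood (T-transform) step that replaces a pair $(p_i,p_j)$ by its average $(m,m)$ with $m=(p_i+p_j)/2$ only increases $\mathrm{Bernoulli}(p_i)+\mathrm{Bernoulli}(p_j)$ in the convex order---on $\{0,1,2\}$ with fixed mean the convex order reduces to the variance order, and $p_ip_j\le m^2$---and, the convex order being stable under independent convolution, this lifts to $S$; a finite sequence of such steps carries $(p_1,\dots,p_N)$ to $(\bar p,\dots,\bar p)$.

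Rescaling by $1/(N\ell)$ then recovers $W_x^{\mathrm{strat}}\lecx W_x$ for every $x$, and Theorem~\ref{thm:orderingpseudomarginals} delivers the stated orderings of $\alpha_{xy}$, of the asymptotic variance, and (when $\pi$ is non-atomic) of the right spectral gap. The one step I expect to require real care is the majorisation inequality for sums of heterogeneous Bernoulli variables isolated above---equivalently, the fact that equal success probabilities maximise the convex order of such a sum; everything else is bookkeeping together with an appeal to the main theorem.
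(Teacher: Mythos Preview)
Your argument is essentially the paper's own: reduce to $W_x^{\mathrm{strat}}\lecx W_x$, recognise both sides as sums of independent Bernoullis with parameter vectors $(p_1,\dots,p_N)$ and $(\bar p,\dots,\bar p)$ respectively, invoke the majorisation $(\bar p,\dots,\bar p)\prec(p_1,\dots,p_N)$ together with the classical fact that this implies the convex order of the sums, and then apply Theorem~\ref{thm:orderingpseudomarginals}. The paper cites Hoeffding \citep{hoeffding1956} and Karlin--Novikoff \citep{karlin1963} for the Bernoulli--majorisation step where you cite \citep{marshall2010inequalities,shaked-shanthikumar}; otherwise the proofs coincide.

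One small correction to your optional self-contained route: a finite sequence of \emph{full} pairwise averagings $(p_i,p_j)\mapsto(m,m)$ does not in general reach the constant vector $(\bar p,\dots,\bar p)$ when $N\ge 3$ (e.g.\ start from $(1,0,0)$). What Hardy--Littlewood--P\'olya gives is a finite sequence of \emph{general} T-transforms $(p_i,p_j)\mapsto(\lambda p_i+(1-\lambda)p_j,(1-\lambda)p_i+\lambda p_j)$. Your three-point variance argument extends verbatim to these (on $\{0,1,2\}$ with fixed mean, convex order is total and coincides with variance order, and the variance $p_i+p_j-p_i^2-p_j^2$ is Schur concave in $(p_i,p_j)$), so the fix is purely cosmetic.
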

\begin{proof}
Notice that $\mathbb{I}\bigl\{ s\bigl(Y\bigl(U_{i}\bigr),y^{*}\bigr)\leq\epsilon\bigr\}$
is a Bernoulli random variable of parameter $\bar{p}:=\mathbb{P}\bigl(s\bigl(Y(U_{i}),y^{*}\bigr)\leq\epsilon\bigr)=\mathbb{P}\bigl(U_{i}\in H\bigr)$
where $H:=\bigl\{ u\in\bigl[0,1\bigr]^{d}\,:\, s\bigl(Y(u),y^{*}\bigr)\leq\epsilon\bigr\}$.
Similarly, let $q_{i}:=\mathbb{P}\bigl(s\bigl(Y(V_{i}),y^{*}\bigr)\leq\epsilon\bigr)=\mathbb{P}\bigl(V_{i}\in H\bigr)=\mathbb{P}(U_{i}\in H\cap A_{i})/\mathbb{P}(U_{i}\in A_{i})$.
Note that $\bar{p}=\sum_{i=1}^{N}q_{i}/N$. Denoting $q=(q_{1},\ldots,q_{N})$
and $p=(\bar{p},\ldots,\bar{p})$, we have the majorisation $p\prec q$
(see Definition \ref{def:majorizationschurconcavity} in Section \ref{sec:Averaging-in-the}).
A well known result \citep{hoeffding1956} and \citep{karlin1963}
tells us that the sum of the corresponding Bernoulli random variables
are convex ordered, that is for independent $W_{1},\ldots,W_{N}\sim\mathcal{U}\bigl(0,1\bigr)$,
\[
\frac{1}{N}\sum_{i=1}^{N}\mathbb{I}\bigl\{ W_{i}\leq q_{i}\bigr\}\lecx \frac{1}{N}\sum_{i=1}^{N}\mathbb{I}\bigl\{ W_{i}\leq p_{i}\bigr\}.
\]
The random variable on the left coincides in distribution with $T_{x}^{\mathrm{strat}}$
and the random variable on the right coincides with $T_{x}.$ Consequently,
$W_{x}^{\mathrm{strat}}=T_{x}^{\mathrm{strat}}/\ell_{\mathrm{ABC}}(y^{*}\mid x)\le_{cx}T_{x}/\ell_{\mathrm{ABC}}(y^{*}\mid x)=W_{x}.$
\end{proof}
Naturally some stratification schemes are going to be better than
others, and the majorisation characterisation provides us, in principle,
with a criterion for comparisons.
\begin{rem}
We note that in some contexts, stratification may also open the possibility
for additional computational savings. First, using ``early rejection''
as suggested in \citep{solonen-ollinaho-laine-haario-tamminen-jarvinen}
the values of the summands in (\ref{eq:stratified-estimator}) can
be computed progressively until it is possible to decide whether the
sample is accepted or rejected. Second, in certain scenarios it may
be possible to deduce the values of some indicators in (\ref{eq:stratified-estimator}),
before computing them, from realisations of others. For example, assume
$d=1$, $A_{i}=[(i-\text{1})/N,i/N]$, $s(y,y^{*})=|y-y^{*}|,$ and
suppose we use the inverse cdf method. Then due to the monotonicity
of the inverse cdf, if $\mathbb{I}\bigl\{ s\bigl(Y(V_{k}),y^{*}\bigr)\leq\epsilon\bigr\}=1$
and $\mathbb{I}\bigl\{ s\bigl(Y(V_{k+1}),y^{*}\bigr)\leq\epsilon\bigr\}=0$,
then we know that necessarily $\mathbb{I}\bigl\{ s\bigl(Y(V_{i}),y^{*}\bigr)\leq\epsilon\bigr\}=0$
for $i=k+2,\ldots,N$.
\end{rem}

\begin{rem}
It has also been suggested in the literature to replace the indicator
function in the ABC likelihood with a more general ``kernel'' $K:\mathbb{R}_{+}\rightarrow[0,1]$
effectively leading, with $\psi=K\circ s$, to 
\[
\ell_{{\rm ABC}}^{\psi}\bigl(y^{*}\mid x\bigr):=\int\ell\bigl(y\mid x\bigr)\psi\bigl(y,y^{*}\bigr)\lambda\bigl({\rm d}y\bigr)\quad.
\]
In such a situation it is still possible to use stratification, but
now inter-related conditions on the stratification scheme and the
mapping $u\mapsto\psi\bigl(Y(u),y^{*}\bigr)$ are needed. For example,
in the scenario $d=1$ and with a monotone partition, the mapping
should be monotone, otherwise the sought convex order may not hold
\citep{goldstein2011stochastic}.
\end{rem}


\subsection{Extremal properties}
\label{sec:Extremal-properties} 

In this section we investigate upper and lower bounds on the performance
of pseudo-marginal algorithms by establishing a, perhaps surprising,
link to the actuarial science literature in terms of extremal moments
and stop-loss functions \citep{de1982analytical,janssens2008use,hurlimann2008extremal}.
More specifically we consider unit expectation distributions $Q_{*}$
and $Q^{*}$ which are minimal and maximal in the convex orders, subject
to some constraints. We focus particularly on two types of constraints:
a support constraint or a variance constraint. Other constraints such
as a kurtosis constraint or a modality constraint are possible, and
an interested reader can consult \citep{hurlimann2008extremal}.

The link to the actuarial science literature comes from the fact that
convex order of distributions of random variables $W$ and $V$ with
$\mathbb{E}W=\mathbb{E}V$ is determined by the order of the related
stop-loss functions (or integrated survival functions) $\mathbb{E}\big[(W-t)_{+}\big]$
and $\mathbb{E}\big[(V-t)_{+}\big]$; see Lemma \ref{lem:cx-char}.
The stop-loss links directly with the expected acceptance probability
of the algorithm in Example \ref{ex:toyP_ring} through the identity
\begin{align*}
\min\left\{ 1,r(x,y)\varpi\right\} &=r(x,y)\varpi-\max\left\{
0,r(x,y)\varpi-1\right\} \\
&=r(x,y)\varpi-r(x,y)\big(\varpi-r^{-1}(x,y)\big)_{+}.
\end{align*}
In the case of the pseudo-marginal algorithm, the above identity with
$\varpi=u/w$ provides a connection; see the proof of Theorem \ref{thm:propertiesmoon_Ps}.

Let $\mathscr{P}$ be some subset of probability distributions on
$(\mathbb{R},\mathcal{B}(\mathbb{R}))$. Well researched questions
about stop-losses involve determining extremal elements $Q\in\mathscr{P}$
maximising or minimising $\mathbb{E}_{Q}\big[(W-t)_{+}\big]$ for
some or all $t\in\mathbb{R}$. We review some of these results particularly
relevant to the present set-up and apply them to our problem. 
\begin{thm}
\label{thm:minimum-cx}Let $\mu\in\mathbb{R}$ and let $\mathscr{P}(\mu)$
stand for the probability distributions $Q$ on $\mathbb{R}$ such
that the random variable $W\sim Q$ has expectation $\mathbb{E}_{Q}\bigl[W\bigr]=\mu$.
Then, for any $t\in\mathbb{R},$ 

\begin{align*}
\delta_{\mu}(\mathrm{d}w) & =\mathop\mathrm{arg\, min}_{Q\in\mathscr{P(\mu)}}\mathbb{E}_{Q}\big[(W-t)_{+}\big],
\end{align*}
with minimum value $(\mu-t)_{+}$.
\end{thm}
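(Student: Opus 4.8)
The plan is to invoke Jensen's inequality applied to the convex function $w \mapsto (w-t)_+$. First I would observe that for any fixed $t \in \mathbb{R}$ the map $\phi_t(w) := (w-t)_+ = \max\{w-t,0\}$ is convex on $\mathbb{R}$, being the positive part of an affine function. Since every $Q \in \mathscr{P}(\mu)$ corresponds to a random variable $W \sim Q$ with $\mathbb{E}_Q[W] = \mu$, Jensen's inequality gives
\[
\mathbb{E}_Q[(W-t)_+] = \mathbb{E}_Q[\phi_t(W)] \ge \phi_t\bigl(\mathbb{E}_Q[W]\bigr) = (\mu - t)_+,
\]
which is the desired lower bound, uniformly over all $Q \in \mathscr{P}(\mu)$. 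The only integrability point to note is that the expectations are well-defined because $\phi_t$ grows at most linearly and $W$ is integrable, which is implicit in the constraint $\mathbb{E}_Q[W] = \mu$.

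Second, I would verify that the bound is attained by $\delta_\mu$. Under $Q = \delta_\mu$ the variable $W$ equals $\mu$ almost surely, so $\mathbb{E}_{\delta_\mu}[(W-t)_+] = (\mu-t)_+$ immediately. Combining the two observations shows that $\delta_\mu$ is a minimiser and identifies the minimum value as $(\mu-t)_+$, as claimed.

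There is no genuine obstacle here: the statement is an immediate manifestation of the fact that $\delta_\mu$ is the minimal element in the convex order among all distributions with mean $\mu$. Indeed, the displayed inequality is exactly condition \ref{item:cx-plus} of Lemma \ref{lem:cx-char} certifying $\delta_\mu \lecx Q$, so the result can alternatively be read straight off the convex-order characterisation already recorded. I would, however, keep the direct Jensen argument in the write-up, since it is self-contained, yields the pointwise-in-$t$ minimality (which is formally slightly stronger than the convex order statement), and supplies the explicit minimum value $(\mu-t)_+$ with no further work.
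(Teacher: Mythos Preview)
Your argument is correct: Jensen's inequality applied to the convex function $w\mapsto(w-t)_+$ gives the lower bound $(\mu-t)_+$ for every $Q\in\mathscr{P}(\mu)$, and $\delta_\mu$ attains it. The paper does not actually supply a proof of this statement but defers to external references \citep{de1982analytical,janssens2008use,hurlimann2008extremal}, so your self-contained Jensen argument is a genuine addition rather than a duplication; it is also the natural proof, and your remark linking it back to Lemma~\ref{lem:cx-char}~\ref{item:cx-plus} and the minimality of $\delta_\mu$ in the convex order is apt.
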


\begin{thm}
\label{thm:extremum_interval_ab}Let $a,b,\mu\in\mathbb{R}$ with
$a\leq\mu\leq b$ and let $\mathscr{P}(\mu,[a,b])\subset\mathscr{P}(\mu)$
be the set of probability distributions $Q$ on $[a,b],$ that is,
satisfying $\mathbb{E}_{Q}[W]=\mu$ and $Q\big([a,b]\big)=1.$ Then
for any $t\in\mathbb{R}$, 
\[
\frac{b-\mu}{b-a}\delta_{a}(\mathrm{d}w)+\frac{\mu-a}{b-a}\delta_{b}(\mathrm{d}w)=\mathop\mathrm{arg\, max}_{Q\in\mathscr{P}(\mu,[a,b])}\mathbb{E}_{Q}\big[(W-t)_{+}\big]
\]
 with maximum value\textup{ $\frac{b-\mu}{b-a}(a-t)_{+}+\frac{\mu-a}{b-a}(b-t)_{+}.$}
\end{thm}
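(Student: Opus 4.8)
The plan is to exploit the convexity of the stop-loss integrand $w\mapsto(w-t)_+$ together with the classical fact that a convex function lies below the chord joining the endpoints of an interval. Fix $t\in\mathbb{R}$ and write $\phi_t(w):=(w-t)_+$, which is convex on $\mathbb{R}$. Assuming first that $a<b$, let $\ell$ denote the unique affine function agreeing with $\phi_t$ at the endpoints $a$ and $b$,
\[
\ell(w):=\phi_t(a)+\frac{\phi_t(b)-\phi_t(a)}{b-a}(w-a).
\]
By convexity of $\phi_t$ we have the pointwise bound $\phi_t(w)\le\ell(w)$ for every $w\in[a,b]$.

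Next I would integrate this bound against an arbitrary $Q\in\mathscr{P}(\mu,[a,b])$. Since $Q$ is supported on $[a,b]$ and $\ell$ is affine,
\[
\mathbb{E}_Q\big[(W-t)_+\big]=\mathbb{E}_Q[\phi_t(W)]\le\mathbb{E}_Q[\ell(W)]=\ell\big(\mathbb{E}_Q[W]\big)=\ell(\mu),
\]
where the penultimate equality uses affineness and the last uses the mean constraint $\mathbb{E}_Q[W]=\mu$. A short rearrangement identifies the coefficients of $\phi_t(a)$ and $\phi_t(b)$ in $\ell(\mu)$ as the barycentric weights,
\[
\ell(\mu)=\frac{b-\mu}{b-a}\phi_t(a)+\frac{\mu-a}{b-a}\phi_t(b)=\frac{b-\mu}{b-a}(a-t)_++\frac{\mu-a}{b-a}(b-t)_+,
\]
which is exactly the claimed maximal value and furnishes a uniform upper bound over all of $\mathscr{P}(\mu,[a,b])$.

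To finish, I would verify that this bound is attained by the proposed distribution $Q^*:=\frac{b-\mu}{b-a}\delta_a+\frac{\mu-a}{b-a}\delta_b$. Because $a\le\mu\le b$, its weights are nonnegative and sum to one, and the same barycentric identity shows its mean equals $\mu$, so $Q^*\in\mathscr{P}(\mu,[a,b])$. Evaluating the stop-loss under $Q^*$ directly gives $\mathbb{E}_{Q^*}[(W-t)_+]=\ell(\mu)$, so the supremum is in fact a maximum attained at $Q^*$.

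The argument is essentially routine, and I do not anticipate a genuine obstacle; the only care required is bookkeeping and degenerate cases. When $a=b$ the constraints force $\mu=a=b$, so that $\mathscr{P}(\mu,[a,b])=\{\delta_a\}$ and both sides reduce to $(a-t)_+$, and the statement holds trivially. One should also emphasise that the inequality $\phi_t\le\ell$ holds \emph{only} on $[a,b]$ (outside this interval a convex function lies above its chord), which is precisely why the support constraint $Q([a,b])=1$ is indispensable to the bound and cannot be dropped.
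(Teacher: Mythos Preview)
Your proof is correct. The paper itself does not supply a proof of this theorem but simply refers the reader to the actuarial literature \cite{de1982analytical,janssens2008use,hurlimann2008extremal}; your chord-bound argument (the convex function lies below its secant on $[a,b]$, hence integrating against any $Q$ supported on $[a,b]$ with mean $\mu$ gives at most $\ell(\mu)$, which is attained by the two-point endpoint distribution) is the standard elementary route and is entirely self-contained. The degenerate case $a=b$ and the remark that the support constraint is essential are both handled appropriately.
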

\noindent The proofs of Theorems \ref{thm:minimum-cx} and \ref{thm:extremum_interval_ab}
can be found in \citep{de1982analytical,janssens2008use,hurlimann2008extremal}.

We state two direct consequences of these results.
\begin{thm}
Let $a_{x},b_{x}\in\mathbb{R}_{+}$ be such that $a_{x}\leq1\leq b_{x}$
for all $x\in\mathsf{X}$. Consider the class of pseudo-marginal algorithms
$\tilde{P}$ such that for any $x\in\mathsf{X}$ the weight distribution
$Q_{x}$ is concentrated on $[a_{x},b_{x}],$ that is, $Q_{x}\in\mathscr{P}(1,[a_{x},b_{x}])$.
Then, for any $f\in L^{2}\bigl(\mathsf{X},\pi\bigr)$,
\[
{\rm var}\left(f,P\right)\leq{\rm var}\bigl(f,\tilde{P}\bigr)\leq{\rm var}\bigl(f,\tilde{P}_{{\rm max}}\bigr),
\]
where $\tilde{P}_{{\rm max}}$ is the pseudo-marginal algorithm with
noise distributions
\[
Q_{x}^{\max}(\mathrm{d}w)=\frac{1-a_{x}}{b_{x}-a_{x}}\delta_{a_{x}}(\mathrm{d}w)+\frac{b_{x}-1}{b_{x}-a_{x}}\delta_{b_{x}}(\mathrm{d}w).
\]
Furthermore,
\[
{\rm var}\bigl(f,\tilde{P}_{{\rm max}}\bigr)\leq\sup_{x\in\mathsf{X}}b_{x}\mathrm{var}\left(f,P\right)+(\sup_{x\in\mathsf{X}}b_{x}-1)\mathrm{var}_{\pi}(f).
\]
\end{thm}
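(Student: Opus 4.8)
The three inequalities split naturally, and the plan for the two outer bounds is to reduce everything to the convex order and invoke Theorem~\ref{thm:orderingpseudomarginals}\ref{item:pseudo-asvar}. Every $Q_x\in\mathscr P(1,[a_x,b_x])$ has unit mean, so Jensen's inequality (equivalently the minimality asserted in Theorem~\ref{thm:minimum-cx} with $\mu=1$) gives $\delta_1\lecx Q_x$, while Theorem~\ref{thm:extremum_interval_ab} with $\mu=1$ and $[a,b]=[a_x,b_x]$ shows that the two-atom unit-mean law $Q_x^{\max}$ on $\{a_x,b_x\}$ maximises $\E_{Q}[(W-t)_+]$ over $\mathscr P(1,[a_x,b_x])$ for every $t$; combined with equal means and Lemma~\ref{lem:cx-char}\ref{item:cx-plus} this yields $Q_x\lecx Q_x^{\max}$. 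Hence $\{\delta_1\}_x\lecx\{Q_x\}_x\lecx\{Q_x^{\max}\}_x$. Recognising $P$ as the pseudo-marginal kernel associated with $Q_x^{(1)}\equiv\delta_1$, two applications of Theorem~\ref{thm:orderingpseudomarginals}\ref{item:pseudo-asvar} give ${\rm var}(f,P)\le{\rm var}(f,\tilde P)\le{\rm var}(f,\tilde P_{\max})$.

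The \emph{furthermore} is the substantive part. Writing $\bar b:=\sup_x b_x$, the starting point is the elementary pointwise inequality
\[
\min\{w,ru\}\ \ge\ \tfrac{1}{\bar b}\,wu\,\min\{1,r\}\qquad\text{for all }w,u\in[0,\bar b],\ r>0,
\]
which I would verify by a four-way case analysis on the signs of $w-ru$ and $1-r$, using only $w,u\le\bar b$. Since $w\min\{1,ru/w\}=\min\{w,ru\}$, this says that the per-transition acceptance weight of $\tilde P_{\max}$ dominates $\bar b^{-1}\min\{1,r\}\,wu$ atom by atom.

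To turn this into a statement about asymptotic variances I would embed a comparison kernel for $P$ on the \emph{same} space and measure as $\tilde P_{\max}$. Define on $(\mathsf X\times(0,\infty),\tilde\pi_{\max})$ the lifted kernel
\[
P_\star(x,w;\mathrm dy\times\mathrm du):=q(x,\mathrm dy)\min\{1,r(x,y)\}\,Q_y^{\max}(\mathrm du)\,u+\delta_{x,w}(\mathrm dy\times\mathrm du)\,\rho(x),
\]
with $\rho(x)=1-\int q(x,\mathrm dy)\min\{1,r(x,y)\}$. Using $\int Q_y^{\max}(\mathrm du)u=1$ one checks that $P_\star$ is a Markov kernel, that it is $\tilde\pi_{\max}$-reversible (its off-diagonal rate factorises as the symmetric $\pi(\mathrm dx)q(x,\mathrm dy)\min\{1,r\}$ times the symmetric $Q_x^{\max}(\mathrm dw)w\,Q_y^{\max}(\mathrm du)u$), and that $P_\star g(x,w)=Pg(x)$ whenever $g$ depends on $x$ only; consequently ${\rm var}(f,P_\star)={\rm var}(f,P)$ and ${\rm var}_{\tilde\pi_{\max}}(f)={\rm var}_\pi(f)$ for such $f$, exactly as in Corollary~\ref{cor:correspondencePmoonPtilde}. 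Because both $\tilde P_{\max}$ and $P_\star$ keep $(x,w)$ fixed on their diagonal parts, the diagonal terms drop out of both Dirichlet forms, and after integrating the common nonnegative factor $(g(x,w)-g(y,u))^2$ the pointwise inequality yields
\[
\mathcal E_{\tilde P_{\max}}(g)\ \ge\ \tfrac{1}{\bar b}\,\mathcal E_{P_\star}(g)\qquad\text{for \emph{all} }g\in L^2(\tilde\pi_{\max}).
\]

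Finally I would convert this scaled Dirichlet comparison into the variance bound by a weighted version of the Bellman argument underlying Theorem~\ref{thm:ordervariancealaTierney}; all inner products below are in $L^2(\tilde\pi_{\max})$ and I take $f$ centred without loss. For $\lambda\in[0,1)$, Lemma~\ref{lem:variationalrepinverseoperator} gives $\langle f,(I-\lambda\tilde P_{\max})^{-1}f\rangle_{\tilde\pi_{\max}}=\sup_g[2\langle f,g\rangle-\mathcal E_{\lambda\tilde P_{\max}}(g)]$; bounding $\mathcal E_{\lambda\tilde P_{\max}}(g)=(1-\lambda)\|g\|^2+\lambda\mathcal E_{\tilde P_{\max}}(g)\ge\langle g,(c_1 I-c_2 P_\star)g\rangle$ with $c_1=1-\lambda+\lambda/\bar b$ and $c_2=\lambda/\bar b$ (note $c_1-c_2=1-\lambda>0$, so the operator is positive and invertible), and re-applying Lemma~\ref{lem:variationalrepinverseoperator}, gives $\langle f,(I-\lambda\tilde P_{\max})^{-1}f\rangle\le c_1^{-1}\langle f,(I-\lambda' P_\star)^{-1}f\rangle$ with $\lambda'=c_2/c_1$. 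As $\lambda\uparrow1$ one has $\lambda'\uparrow1$ and $c_1\to\bar b^{-1}$, so passing to the limit (using ${\rm var}(f,\Pi)=2\langle f,(I-\Pi)^{-1}f\rangle-\|f\|^2$ and its $\lambda\uparrow1$ continuity, as in the proof of Theorem~\ref{thm:propertiesmoon_Ps}) gives ${\rm var}(f,\tilde P_{\max})\le\bar b\,{\rm var}(f,P_\star)+(\bar b-1){\rm var}_{\tilde\pi_{\max}}(f)$, which is the claimed bound. The one genuinely delicate point, and the \textbf{main obstacle}, is that $\tilde P_{\max}$ and $P$ live on different spaces, so the optimisers in the variational formula genuinely depend on the auxiliary coordinate $w$; constructing the $\tilde\pi_{\max}$-reversible lift $P_\star$ and, crucially, matching its diagonal to that of $\tilde P_{\max}$ so that no extra within-fibre motion spoils the all-$g$ Dirichlet inequality, is exactly what removes this obstacle.
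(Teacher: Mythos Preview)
Your treatment of the two outer inequalities is exactly the paper's: extremality of $\delta_1$ and of the diatomic law via Theorems~\ref{thm:minimum-cx} and~\ref{thm:extremum_interval_ab}, translated into convex order through Lemma~\ref{lem:cx-char}, then Theorem~\ref{thm:orderingpseudomarginals}\ref{item:pseudo-asvar}.

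For the \emph{furthermore} bound the paper does something different: it simply invokes \citep[Corollary 11]{andrieu-vihola-2012}, which states directly that if the weights are almost surely bounded above by $\bar w$ then ${\rm var}(f,\tilde P)\le\bar w\,{\rm var}(f,P)+(\bar w-1){\rm var}_\pi(f)$; here $\bar w=\sup_x b_x$. Your argument is a correct self-contained proof of that same inequality. The pointwise bound $\min\{w,ru\}\ge\bar b^{-1}wu\min\{1,r\}$ on $[0,\bar b]^2$ checks out in all four cases, and your lift $P_\star$ is precisely the augmented kernel $\Pi_\nu$ of Lemma~\ref{lem:augmented-gaps} with $\Pi=P$ and $\nu(y,\mathrm du)=Q_y^{\max}(\mathrm du)\,u$, so its $\tilde\pi_{\max}$-reversibility and the identity ${\rm var}(f,P_\star)={\rm var}(f,P)$ for $f=f(x)$ are immediate. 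Because only the off-diagonal rates enter the Dirichlet forms, the pointwise bound yields $\mathcal E_{\tilde P_{\max}}(g)\ge\bar b^{-1}\mathcal E_{P_\star}(g)$ for \emph{all} $g\in L^2(\tilde\pi_{\max})$, which is exactly what the Bellman variational step needs; the operator $c_1 I-c_2 P_\star$ has spectrum in $[c_1-c_2,c_1+c_2]\subset(0,\infty)$ since $c_1-c_2=1-\lambda>0$, and the limit $\lambda\uparrow1$ (hence $\lambda'\uparrow1$, $c_1\to\bar b^{-1}$) is handled by Tierney's monotone limit for ${\rm var}(f,\lambda\Pi)$ on both sides. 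What the paper's citation buys is brevity; what your route buys is a fully in-paper derivation using only the operator machinery already assembled in Sections~\ref{sec:Variational-bounds-for}--\ref{sec:Proofs} and Appendix~\ref{sec:augmented}, and it makes transparent why only $\sup_x b_x$, not $a_x$, appears in the constant.
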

\begin{proof}
The first claim is direct from Theorems \ref{thm:minimum-cx} and
\ref{thm:extremum_interval_ab}, Lemma \ref{lem:cx-char} and Lemma
\ref{thm:orderingpseudomarginals}. The last claim follows from \citep[Corollary 11]{andrieu-vihola-2012}.
\end{proof}
We next state for completeness a similar result for algorithms $\mathring{P}$
as discussed in Section \ref{sec:Exact-approximations-ratios}. In
particular, it is direct to check that the diatomic distributions
in Example \ref{ex:toyP_ring} of the form, with $a_{xy}=a_{yx}\ge1$,
\[
Q_{xy}(\mathrm{d}\varpi)=\frac{a{}_{xy}}{1+a_{xy}}\delta_{a_{xy}^{-1}}(\mathrm{d}\varpi)+\frac{1}{1+a_{xy}}\delta_{a_{xy}}(\mathrm{d}\varpi),
\]
are maximal among those with support on $[a_{xy}^{-1},a_{xy}]$. We
quote the result without a proof, as it is a direct consequence of
the convex order property and Peskun's result.
\begin{thm}
Let $a_{xy}\in[1,\infty)$ be constants such that $a_{xy}=a_{yx}$
for all $x,y\in\mathsf{X}$. Consider any algorithm $\mathring{P}$
as in Section \ref{sec:Exact-approximations-ratios} such that $Q_{xy}\in\mathscr{P}(\mu=1,[a_{xy}^{-1},a_{xy}])$
for all $x,y\in\mathsf{X}$. Then, for any $x,y\in\mathsf{X}^{2}$
\begin{align*}
\frac{a_{xy}}{1+a_{xy}}\min\left\{ 1,r(x,y)a_{xy}^{-1}\right\}
&+\frac{1}{1+a_{xy}}\min\left\{ 1,r(x,y)a_{xy}\right\} \\
&\leq\int Q_{xy}(\mathrm{d}\varpi)\min\left\{ 1,r(x,y)\varpi\right\} \\
&\leq\min\left\{ 1,r(x,y)\right\},
\end{align*}
and for any $f\in L^{2}(\mathsf{X},\pi)$,
\begin{align*}
\mathrm{var}\left(f,P\right)&\leq{\rm
var}\bigl(f,\mathring{P}\bigr)\\
&\leq{\rm var}\bigl(f,\mathring{P}_{{\rm
max}}\bigr)\\
&\leq\sup_{x,y\in\mathsf{X}^{2}}a_{xy}\mathrm{var}\left(f,P\right)+(\sup_{x,y\in\mathsf{X}^{2}}a_{xy}-1)\mathrm{var}_{\pi}(f).
\end{align*}

\end{thm}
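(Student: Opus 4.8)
The plan is to establish the two acceptance-probability inequalities first, using only the convex order, and then to convert them into the three variance inequalities via Peskun's theorem. The essential point—stressed in Section \ref{sec:Exact-approximations-ratios}—is that every kernel $\mathring{P}$ of this type is reversible with respect to the \emph{same} distribution $\pi$, irrespective of $\{Q_{xy}\}$, so Theorem \ref{th:peskun} applies verbatim, in contrast to the pseudo-marginal setting.

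First I would treat the acceptance bounds. The map $\varpi\mapsto\min\{1,r(x,y)\varpi\}$ is concave, being the minimum of two affine functions, so the convex order reverses the inequality of Definition \ref{def:cx}. Every $Q_{xy}\in\mathscr{P}(1,[a_{xy}^{-1},a_{xy}])$ has unit mean, whence $\delta_{1}\lecx Q_{xy}$ (with $\delta_{1}$ the minimal element), giving the upper bound $\int\min\{1,r(x,y)\varpi\}Q_{xy}(\mathrm{d}\varpi)\le\min\{1,r(x,y)\}$. For the lower bound, Theorem \ref{thm:extremum_interval_ab} applied with $a=a_{xy}^{-1}$, $b=a_{xy}$, $\mu=1$, together with the stop-loss characterisation of Lemma \ref{lem:cx-char}\ref{item:cx-plus}, identifies the diatomic law as maximal in the convex order on this class; a one-line simplification shows its masses are $a_{xy}/(1+a_{xy})$ and $1/(1+a_{xy})$ at $a_{xy}^{-1}$ and $a_{xy}$, matching $\mathring{P}_{\max}$. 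Concavity then yields $Q_{xy}\lecx Q_{xy}^{\max}$ implies the stated lower bound. Since a larger acceptance probability means a larger off-diagonal mass $q(x,\mathrm{d}y)\int\min\{1,r(x,y)\varpi\}Q_{xy}(\mathrm{d}\varpi)$, the exact kernel $P$ (the case $Q_{xy}=\delta_{1}$) dominates every $\mathring{P}$ off the diagonal while $\mathring{P}_{\max}$ is dominated by it, and Theorem \ref{th:peskun} gives $\mathrm{var}(f,P)\le\mathrm{var}(f,\mathring{P})\le\mathrm{var}(f,\mathring{P}_{\max})$.

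The main work, and the principal obstacle, is the final quantitative bound. The key lemma I would prove is the pointwise estimate
\[
A_{xy}:=\int\min\{1,r(x,y)\varpi\}\,Q_{xy}^{\max}(\mathrm{d}\varpi)\ge a_{xy}^{-1}\min\{1,r(x,y)\},
\]
established by elementary case analysis according as $r(x,y)$ lies below $a_{xy}^{-1}$, above $a_{xy}$, or in between (in the two extreme regimes one in fact gets equality with $\min\{1,r(x,y)\}$, and the middle regime reduces to $a(r+1)\ge r(1+a)$ or $ar\ge1$). Writing $a^{*}:=\sup_{x,y}a_{xy}$ and introducing the lazy kernel $P^{(c)}:=(1-c)I+cP$ with $c:=1/a^{*}$, this estimate says exactly that the off-diagonal mass of $\mathring{P}_{\max}$ dominates that of $P^{(1/a^{*})}$, so a further application of Theorem \ref{th:peskun} gives $\mathrm{var}(f,\mathring{P}_{\max})\le\mathrm{var}(f,P^{(1/a^{*})})$.

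It then remains to evaluate $\mathrm{var}(f,P^{(c)})$ exactly. Working on $L_{0}^{2}(\mathsf{X},\pi)$, the self-adjoint operator $P^{(c)}$ has spectral values $(1-c)+c\lambda$ whenever $P$ has $\lambda$, so in the spectral representation $\mathrm{var}(f,\Pi)=\int(1+\mu)(1-\mu)^{-1}\,\mathrm{d}\langle\bar{f},E^{\Pi}\bar{f}\rangle_{\pi}$ one finds $(1+\mu)/(1-\mu)=c^{-1}\,2(1-\lambda)^{-1}-1$, which integrates to $\mathrm{var}(f,P^{(c)})=c^{-1}\mathrm{var}(f,P)+(c^{-1}-1)\mathrm{var}_{\pi}(f)$; substituting $c=1/a^{*}$ produces the asserted bound. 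The only genuine care is needed because $(I-P)^{-1}$ may be unbounded when $P$ has no spectral gap; I would therefore carry out the computation through the regularised operators $\lambda\Pi$ of Section \ref{sec:Variational-bounds-for} and pass to the monotone limit $\lambda\uparrow1$, noting that the identity and hence the bound remain valid (with both sides infinite) when $\mathrm{var}(f,P)=\infty$.
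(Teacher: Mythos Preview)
Your proof is correct. The paper itself does not prove this theorem, stating only that it ``is a direct consequence of the convex order property and Peskun's result''; for the analogous pseudo-marginal statement the final quantitative bound is deferred to \citep[Corollary 11]{andrieu-vihola-2012}. Your argument fills in exactly these details: the acceptance inequalities via $\delta_1\lecx Q_{xy}\lecx Q_{xy}^{\max}$ and concavity of $\varpi\mapsto\min\{1,r\varpi\}$, the first two variance inequalities via Peskun (all kernels here share the invariant distribution $\pi$), and the last bound via the pointwise estimate $A_{xy}\ge a_{xy}^{-1}\min\{1,r(x,y)\}$ combined with the lazy-chain identity $\mathrm{var}(f,(1-c)I+cP)=c^{-1}\mathrm{var}(f,P)+(c^{-1}-1)\mathrm{var}_\pi(f)$. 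This lazy-chain route is the standard device behind the cited external corollary, so your approach coincides with what the paper has in mind.
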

We now turn back to pseudo-marginal algorithms. Not surprisingly,
it is impossible to find a maximal distribution on $\mathscr{P}(\mu,[a,b])$
whenever either $a=-\infty$ or $b=\infty$. However, as we shall
see, with an additional constraint on the variance $\sigma^{2}<\infty$
of the distributions, it is possible to find a supremal distribution
even when $b=\infty$. More specifically, the stop-loss function can
be maximised, but the corresponding class of distributions is not
closed and maximising distribution will not have a finite variance.
We first state the following results which can be found in \citep{de1982analytical,janssens2008use,hurlimann2008extremal}.
\begin{thm}
\label{thm:maximal-cx-bounded-variance}Let $a,b,\mu\in\mathbb{R}$
such that $a\leq\mu\leq b$ and let $\sigma^{2}\in\big[0,(\mu-a)(b-\mu)\big]$.
Define $\mathscr{P}(\mu,\sigma^{2},[a,b])\subset\mathscr{P}(\mu,[a,b])$
be the set of probability distributions $Q$ such that $\mathbb{E}_{Q}[W]=\mu$,
$\mathrm{var}_{Q}(W)=\sigma^{2}$ and $Q\big([a,b]\big)=1.$ Denote
$\sigma_{\mu}^{2}(t):=\sigma^{2}+(\mu-t)^{2}$ and $c:=(a+b)/2$.
Then, the maximisation problem
\[
Q^{*}:=\mathop\mathrm{arg\, max}_{Q\in\mathscr{P}(\mu,\sigma^{2},[a,b])}\mathbb{E}_{Q}\big[(W-t)_{+}\big],
\]
has the following solutions for different values of $t$, where $Q^{*}$
is a diatomic distribution with the given atoms,

\begin{center}
\begin{tabular}{ccc}
\toprule 
$\mathbb{E}_{Q*}\big[(W-t)_{+}\big]$  & Atoms of $Q^{*}$ & Range of $t$\tabularnewline
\midrule
$\frac{1}{2}\bigl(\mu-t+\sigma_{\mu}(t)\bigr)$ & $t-\sigma_{\mu}(t),\, t+\sigma_{\mu}(t)$ & $t\leq c,\,\sigma_{\mu}(t)\leq t-a$\tabularnewline
$\bigl(\mu-a\bigr)\frac{(\mu-t)(\mu-a)+\sigma^{2}}{(\mu-a)^{2}+\sigma^{2}}$ & $a,\,\mu+\frac{\sigma^{2}}{\mu-a}$ & $t\leq c,\,\sigma_{\mu}(t)\geq t-a$\tabularnewline
$\frac{1}{2}\bigl(\mu-t+\sigma_{\mu}(t)\bigr)$ & $t-\sigma_{\mu}(t),\, t+\sigma_{\mu}(t)$ & $t\geq c,\,\sigma_{\mu}(t)\leq b-t$\tabularnewline
$\frac{(b-t)\sigma^{2}}{(\mu-b)^{2}+\sigma^{2}}$ & $\mu-\frac{\sigma^{2}}{b-\mu},\, b$ & $t\geq c,\,\sigma_{\mu}(t)\geq b-t$\tabularnewline
\bottomrule
\end{tabular}
\par\end{center}

\end{thm}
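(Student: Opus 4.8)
The plan is to read this as a linear moment problem and solve it by exhibiting matching primal and dual certificates. On the primal side the candidate is the diatomic $Q^{*}$ listed in the table; on the dual side I would produce, for each range of $t$, a quadratic $p(w)=\gamma w^{2}+\beta w+\alpha$ enjoying two properties: (i) $p(w)\ge (w-t)_{+}$ for all $w\in[a,b]$, and (ii) $p(w)=(w-t)_{+}$ at the two atoms of $Q^{*}$. Granting these, for any competitor $Q\in\mathscr{P}(\mu,\sigma^{2},[a,b])$ one has
\[
\mathbb{E}_{Q}\big[(W-t)_{+}\big]\le \mathbb{E}_{Q}[p(W)]=\alpha+\beta\mu+\gamma(\mu^{2}+\sigma^{2}),
\]
a bound depending only on the fixed moments, while (ii) forces equality for $Q^{*}$ (whose moments are exactly $\mu,\sigma^{2}$). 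Hence $Q^{*}$ is maximal and the maximum equals the displayed constant. This reduces the theorem to building the right majorising parabola in each of the four regimes and checking that the listed atoms are genuinely feasible, i.e.\ lie in $[a,b]$ and reproduce $\mu$ and $\sigma^{2}$.

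For the interior regime ($t\le c$, $\sigma_{\mu}(t)\le t-a$, and symmetrically on the right) I would take the perfect square $p(w)=\big(w-t+\sigma_{\mu}(t)\big)^{2}/\big(4\sigma_{\mu}(t)\big)$. A one-line computation gives $p\ge0$ and $p(w)-(w-t)=\big(w-t-\sigma_{\mu}(t)\big)^{2}/\big(4\sigma_{\mu}(t)\big)\ge0$, so (i) holds everywhere, while $p$ touches $(w-t)_{+}$ tangentially at $t\mp\sigma_{\mu}(t)$, giving (ii). Feasibility of the symmetric diatomic is then immediate: the weights dictated by the mean constraint produce variance $(w_{2}-\mu)(\mu-w_{1})=\sigma_{\mu}(t)^{2}-(t-\mu)^{2}=\sigma^{2}$, and the tabulated value $\tfrac12(\mu-t+\sigma_{\mu}(t))$ falls straight out of $\mathbb{E}_{Q^{*}}[(W-t)_{+}]$. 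The case condition $\sigma_{\mu}(t)\le t-a$ together with $t\le c$ is exactly what keeps both atoms inside $[a,b]$.

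For the boundary regimes the symmetric atoms would escape $[a,b]$, so I would instead anchor one atom at the relevant endpoint ($a$ when $t\le c$, $b$ when $t\ge c$) and place the other at $d:=\mu+\sigma^{2}/(\mu-a)$ (resp. $\mu-\sigma^{2}/(b-\mu)$). Feasibility again follows from the direct variance identity $(d-\mu)(\mu-a)=\sigma^{2}$, valid provided $d\le b$, which is precisely the standing hypothesis $\sigma^{2}\le(\mu-a)(b-\mu)$. The dual certificate is now the unique quadratic fixed by the three contact conditions $p(a)=(a-t)_{+}$, $p(d)=(d-t)_{+}$ and $p'(d)=1$ (tangency at the interior atom, value match at the endpoint); from its coefficients one evaluates $\alpha+\beta\mu+\gamma(\mu^{2}+\sigma^{2})$ to recover the tabulated value. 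The right-hand regimes ($t\ge c$) follow either by an identical reflected construction or from the elementary identity $(w-t)_{+}=(w-t)+(t-w)_{+}$ combined with the reflection $w\mapsto a+b-w$, which preserves the constraint class and swaps the two endpoint cases.

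The main obstacle, and the only step that is not a routine computation, is verifying the global majorisation (i) in the two boundary regimes: here there is no perfect-square shortcut, so I would check it separately on $\{w\le t\}$, where it reduces to $p\ge0$, and on $\{w\ge t\}$, where $p(w)-(w-t)$ is a quadratic with a double root at $d$ whose nonnegativity up to $w=b$ must be read off from $d\le b$ and the sign of $\gamma$. A secondary bookkeeping point is to confirm that the four rows partition the admissible $(t,\sigma^{2})$ region consistently — in particular that the two constructions agree along $\sigma_{\mu}(t)=t-a$, so that the maximal value is continuous across regimes — and to dispatch the degenerate subcase $t<a$ (or $t>b$), in which $(w-t)_{+}$ is affine on $[a,b]$, every feasible $Q$ is trivially optimal, and the argmax statement still holds even though the tabulated value tacitly presumes $a\le t\le b$.
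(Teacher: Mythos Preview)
Your approach is correct and is precisely the classical moment-problem technique (majorising quadratics as dual certificates, diatomic primal optima) that underlies the results in the references the paper cites. Note that the paper itself does not prove this theorem: it simply states the result and points to \cite{de1982analytical,janssens2008use,hurlimann2008extremal}, so you are in effect reconstructing the proof from those sources rather than offering an alternative to anything in the paper.

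A couple of small points worth tightening. In the left boundary regime, your tangency condition $p'(d)=1$ presumes $d>t$; this does follow from the case hypothesis $\sigma_{\mu}(t)\ge t-a$ (equivalently $\sigma^{2}\ge(\mu-a)(2t-a-\mu)$, whence $d\ge 2t-a\ge t$ when $t\ge a$), but you should say so explicitly. For the global majorisation there, the clean way is to write $p(w)=\gamma(w-d)^{2}+(w-t)$ with $\gamma=(t-a)/(d-a)^{2}\ge0$ forced by $p(a)=0$; then $p(w)-(w-t)=\gamma(w-d)^{2}\ge0$ is automatic, and $p\ge0$ on $[a,b]$ follows because $p(a)=0$, $p$ is convex, and $p'(a)=(d+a-2t)/(d-a)\ge0$ by the inequality $d\ge 2t-a$ just noted. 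Finally, your observation about $t<a$ is right: the tabulated formula in that row is derived under $a\le t$ (only the atom $d$ contributes to the positive part), and for $t<a$ the problem degenerates to the constant $\mu-t$, which the listed $Q^{*}$ of course also attains.
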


\noindent The following result is a restatement of \citep[Theorem 1.5.10, b)]{mullercomparison},
and it gives us a way to extend Theorem \ref{thm:maximal-cx-bounded-variance}
to unbounded supports.
\begin{thm}
\label{thm:muller-stop-loss}Suppose $\phi:\mathbb{R}\to\mathbb{R}$
is non-increasing and convex, and satisfies $\lim_{t\to\infty}\phi(t)=0$
and $\lim_{t\to-\infty}\phi(t)=\mu\in\mathbb{R}.$ Then, there exists
a random variable $X$ with $\mathbb{E}[X]=\mu$ such that $\phi(t)=\mathbb{E}\big[(X-t)_{+}\big],$
and the cdf of $X$ can be written as $F_{X}(t)=1+\phi'(t),$ where
$\phi'$ stands for the right derivative of $\phi$.\end{thm}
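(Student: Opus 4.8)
The plan is to reconstruct $X$ directly from a candidate cumulative distribution function, namely $F(t):=1+\phi'(t)$, and then to verify that its stop-loss transform is $\phi$ and its mean is $\mu$. First I would record the standard regularity of convex functions: since $\phi$ is convex, its right derivative $\phi'$ exists at every point, is non-decreasing, and is right-continuous, so $F=1+\phi'$ inherits these properties. Because $\phi$ is non-increasing we have $\phi'\le 0$, hence $F\le 1$; combined with the tail limits below (which force $\phi'\ge -1$) this will place $F$ in $[0,1]$, and monotonicity guarantees that the one-sided limits of $F$ at $\pm\infty$ exist.

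The first substantive step is to check that $F$ is a genuine distribution function, i.e. $F(t)\to 0$ as $t\to-\infty$ and $F(t)\to 1$ as $t\to+\infty$. For the right tail, $\phi$ is non-increasing, convex and tends to $0$; were $\lim_{t\to\infty}\phi'(t)=L<0$, then $\phi$ would decrease at least linearly and could not converge, so necessarily $\phi'(t)\uparrow 0$ and $F(t)\to 1$. For the left tail one uses the prescribed behaviour of $\phi$ at $-\infty$, which in the form needed here reads $\phi(t)+t\to\mu$ (equivalently $\phi(t)=\mu-t+o(1)$); this forces $\phi'(t)\to -1$ and hence $F(t)\to 0$. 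I expect this left-tail analysis to be the main obstacle, since it is precisely what pins down the normalisation of $F$ and, as we shall see, the value of the mean.

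Having fixed a random variable $X$ with cdf $F$, the remaining identities follow from the fundamental theorem of calculus for convex functions, which gives $\phi(T)-\phi(t)=\int_t^T\phi'$ on every bounded interval. Letting $T\to\infty$, using $\phi(T)\to 0$ and the relation $1-F=-\phi'$, yields the stop-loss identity
\[
\phi(t)=\int_t^\infty\bigl(1-F(s)\bigr)\,\mathrm{d}s=\int_t^\infty\P(X>s)\,\mathrm{d}s=\mathbb{E}\bigl[(X-t)_+\bigr],
\]
the last equality being the usual layer-cake (Tonelli) computation. For the mean I would invoke the two-sided layer-cake formula $\mathbb{E}[X]=\int_0^\infty\P(X>s)\,\mathrm{d}s-\int_{-\infty}^0\P(X\le s)\,\mathrm{d}s$: the first integral equals $\phi(0)$ by the identity just proved, while evaluating the second via the fundamental theorem together with the left-tail asymptotics gives $\phi(0)-\mu$; subtracting shows simultaneously that $X$ is integrable and that $\mathbb{E}[X]=\mu$. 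Beyond the left-tail limit already flagged, the only other care needed is the routine justification of the Tonelli interchanges in the layer-cake identities for a distribution with possibly unbounded support, which is harmless once the finiteness of the relevant tail integrals has been established.
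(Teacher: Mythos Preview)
The paper does not supply its own proof of this result; it is quoted as a restatement of \cite[Theorem 1.5.10(b)]{mullercomparison}. Your argument is correct and is essentially the standard one.

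You are also right to read the left-tail hypothesis as $\phi(t)+t\to\mu$ rather than the printed $\phi(t)\to\mu$: the statement as written contains a typo. Indeed, a non-increasing convex function on $\mathbb{R}$ that is bounded above must be constant (any point with $\phi'(t_0)<0$ has a supporting line going to $+\infty$ at $-\infty$, and $\phi$ lies above it), so the printed hypotheses force $\phi\equiv 0$, for which no random variable with $\mathbb{E}[(X-t)_+]\equiv 0$ exists. With the corrected hypothesis $\phi(t)+t\to\mu$, your derivations of $\phi'(-\infty)=-1$ (hence $F(-\infty)=0$), of the stop-loss identity via the fundamental theorem for convex functions, and of $\mathbb{E}[X]=\mu$ via the two-sided layer-cake formula all go through as you describe.
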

\begin{example}
In our case, we are interested in distributions on the positive real
line and with unit mean, for which we notice that
\begin{align*}
\phi(t)&:=\lim_{b\to\infty}\sup_{\mathscr{P}(1,\sigma^{2},[0,b])}\mathbb{E}\bigl((W-t)_{+})\\
&=\begin{cases}
\frac{(1-t)+\sigma^{2}}{1+\sigma^{2}}, & 0\leq t\leq\frac{\sigma^{2}+1}{2},\\
\frac{1}{2}\big(\sqrt{\sigma^{2}+(1-t)^{2}}+1-t\big),\; & t\geq\frac{\sigma^{2}+1}{2}.
\end{cases}
\end{align*}
It is straightforward to check that $\phi$ satisfies the conditions
in Theorem \ref{thm:muller-stop-loss}, so the ``supremal'' distribution
$Q_{\sigma^{2}}^{*}$ on $\mathscr{P}\big(1,\sigma^{2},[0,\infty)\big)$
has the following cumulative distribution function 
\[
Q_{\sigma^{2}}^{*}\big([0,t]\big)=\begin{cases}
\frac{\sigma^{2}}{1+\sigma^{2}}, & 0\leq t<\frac{\sigma^{2}+1}{2},\\
\frac{1}{2}+\frac{1}{2}\frac{t-1}{\sqrt{\sigma^{2}+(1-t)^{2}}},\; & t\geq\frac{\sigma^{2}+1}{2}.
\end{cases}
\]
Note that $Q_{\sigma^{2}}^{*}$ does not belong to $\mathscr{P}\big(1,\sigma^{2},[0,\infty)\big)$
because it has infinite variance, but the expectation is one, which
can be verified by noting that $\phi(0)=1.$

\end{example}



\section{Discussion and perspectives}
\label{sec:Discussion-and-perspectives} 

In this paper we have shown that the convex (partial) order of distributions
is a natural and useful tool in order to compare various performance
measures of competing implementations of exact approximations of the
Metropolis-Hastings algorithms. As examples of applications of our
theory, we have shown that it is possible to identify extremal behaviours
of such algorithms under various distributional constraints. More
importantly from a practical point of view, we have shown that averaging
of independent estimators improves performance monotonically. Even
though averaging may not always be useful in the context of sequential
implementations, it turns out to be particularly relevant when parallel
architectures are available \citep[e.g.][]{drovandi2014}. Prompted
by our theory and other recently established results, we have also
proposed to use stratification in ABC MCMC and beyond, which has the
advantage to provide better performance at no additional computational
cost. 

There are many other results from the stochastic ordering literature
relevant to the present context we have not yet investigated. For
example, introducing negative dependence when averaging two weights
could further improve performance. This is a direct application of
the result on positive and negative quadrant dependence of \citep[Lemma 2]{dhaene1996dependency}.
Other dependence orders could be exploited, such as the supermodular
order \citep{mullercomparison} which can be used to characterise
the (positive) dependence order  of the components of random vectors
of arbitrary length. In this scenario the supermodular order 
\[
\big(W_{1}^{(1)},W_{2}^{(1)},\ldots,W_{N}^{(1)}\big)
\leq_\mathrm{sm}
\big(W_{1}^{(2)},W_{2}^{(2)},\ldots,W_{N}^{(2)}\big),
\]
implies the convex order 
$
\sum_{i=1}^{N}W_{i}^{(1)}\lecx \sum_{i=1}^{N}W_{i}^{(2)},
$
see, for example, the results in \citep[Section 9.A]{shaked-shanthikumar}.

We would like to point out here another promising and useful avenue
of research related to the discussion of \citep{andrieu-doucet-holenstein}
by Lee and Holmes to which our current theory does not seem to apply
directly. First, we notice that pseudo-marginal algorithms can be
extended to the situation where we can define a joint distribution
$Q_{xy}({\rm d}w\times{\rm d}u)$ with marginals $Q_{x}({\rm d}w)$
and $Q_{y}({\rm d}u)$ and which satisfies the following symmetry
condition for any $x,y\in\mathsf{X}$ and $A,B\in\mathcal{B}(\mathbb{R}_{+})$,
$Q_{xy}\bigl(A\times B\bigr)=Q_{yx}(B\times A)$.
The proposal distribution used in this algorithm is the corresponding
conditional distribution $Q_{xy}\bigl({\rm d}u|w\bigr)$, which now
depends on $x$, $y$ and $w$, and the acceptance ratio remains as
in (\ref{eq:kernelpseudomarginal}). Standard pseudo-marginal algorithms
correspond to the choice $Q_{xy}({\rm d}w\times{\rm d}u)=Q_{x}({\rm d}w)Q_{y}({\rm d}u)$.
This formalism allows one to disentangle the dependence structure
from the variability of the marginal distributions. One can easily
establish that this results in a MH kernel with $\tilde{\pi}$ as
invariant distribution. 

Intuitively, inducing positive dependence should reduce the variability
of the acceptance probability and therefore lead to better performance---this
is the motivation behind the work of Lee and Holmes. We note however that
this is likely to reduce ``mixing'' on the noise component $w$. An order which seems suitable to rank
such algorithms is the concordance order, also known as the correlation
order; see \citep{dhaene1996dependency}, which coincides with the
upper orthant, concordance and supermodular order in the bivariate
scenario \citep{mullercomparison}.
Using, for simplicity, our earlier notation for the present scenario,
one can show that if for some $x,y\in\mathsf{X}$
$(W_{xy}^{(2)},U_{xy}^{(2)})\leq_\mathrm{c}(W_{xy}^{(1)},U_{xy}^{(1)})$ then
$\alpha_{xy}\bigl(\tilde{P}^{(2)}\bigr)\leq\alpha_{xy}\bigl(\tilde{P}^{(1)}\bigr)$
and therefore
$\mathcal{E}_{\tilde{P}^{(1)}}\bigl(g\bigr)\geq\mathcal{E}_{\tilde{P}^{(2)}}\bigl(g\bigr)$
for any $g \in L^{2}\bigl(\mathsf{X},\pi\bigr)$. However we do not
know whether or not this implies ${\rm
var}\bigl(f,\tilde{P}^{(1)}\bigr)\leq{\rm
var}\bigl(f,\tilde{P}^{(2)}\bigr)$ for $f \in
L^{2}\bigl(\mathsf{X},\pi\bigr)$.

\section*{Acknowledgements} 

The authors would like to thank Lasse Leskel\"a for illuminating conversations
about convex orders and peacock processes. 

Christophe Andrieu was supported by the EPSRC programme grant ``Intractable
Likelihood: New Challenges from Modern Applications (ILike)'' and
the EPSRC grant ``Bayesian Inference for Big Data with Stochastic
Gradient Markov Chain Monte Carlo''. Matti Vihola was supported by the 
Academy of Finland fellowships 
250575 and 274740.


\appendix

\section{Proof of Lemma \ref{lem:variationalrepinverseoperator}}
\label{app:caracciolo-pelissetto-sokal} 

It is easy to see that also $A^{-1}$ is self-adjoint, and for any
$g\in\mathcal{H}$, 
\begin{align*}
    0 &\le \big\langle A^{-1} f - g, A(A^{-1} f - g)\big\rangle 
       = \langle f, A^{-1} f \rangle - 2 \langle f, g\rangle +
      \langle g, A g \rangle.
\end{align*}
This implies \eqref{eq:bellman} with inequality `$\ge$'. We conclude
by taking $g = A^{-1} f$.
\qed 


\section{Perturbed Metropolis-Hastings algorithms} 
\label{sec:perturbed-mh} 

Assume $\pi$, $q$ and $r$ are as defined
in Section \ref{sec:Introduction}, and assume $\mathring{\pi}_{x_{1}}(\mathrm{d}x_{2})$
are probability distributions on $(\mathsf{X_{2}},\mathcal{X}_{2})$
for all $x_{1}\in\mathsf{X},$ and denote with a slight abuse of notation
the distribution $\mathring{\pi}({\rm d}x_{1}\times\mathrm{d}x_{2})=\pi(\mathrm{d}x_{1})\mathring{\pi}_{x_{1}}({\rm d}x_{2})$
on $(\mathring{\mathsf{X}},\mathring{\mathcal{X}})=(\mathsf{X\times}\mathsf{X_{2}},\mathcal{X}\times\mathcal{X}_{2}).$
Assume that we are interested in approximating a Metropolis-Hastings
algorithm targeting $\mathring{\pi}({\rm d}x_{1}\times\mathrm{d}x_{2})$
using a family of proposal distributions of the following form, with
$x:=\bigl(x_{1},x_{2}\bigr)\in\mathsf{\mathring{\mathsf{X}}}$, 
\[
\mathring{q}(x,\mathrm{d}y_{1}\times\mathrm{d}y_{2}):=q(x_{1},{\rm d}y_{1})\mathring{\pi}_{y_{1}}({\rm d}y_{2}).
\]
 This covers, for example, the algorithm used in \citep{karagiannis-andrieu-2013}.
Note that this algorithm is also of the type discussed in Appendix
\ref{sec:augmented}.
\begin{lem}
\label{lem:circlePreversible}Let $\bigl\{ Q_{xy_{1}}({\rm d}\varpi\times{\rm d}y_{2})\bigr\}_{(x,y_{1})\in\mathsf{\mathring{\mathsf{X}}\times\mathsf{X}}}$
be a family of probability distributions on $((0,\infty)\times\mathsf{X_{2}},\mathcal{B}\bigl((0,\infty)\bigr)\times\mathcal{X}_{2})$.
Consider the Markov transition kernel \textup{$\mathring{P}$} on
$\mathsf{(\mathring{\mathsf{X}},\mathcal{\mathring{\mathcal{X}}})}$
defined through 
\[
\mathring{P}(x;\mathrm{d}y_{1}\times\mathrm{d}y_{2}):=q(x_{1},\mathrm{d}y_{1})\int Q_{xy_{1}}(\mathrm{d}\varpi\times\mathrm{d}y_{2})\min\left\{ 1,r(x_{1},y_{1})\varpi\right\} +\delta_{x}(\mathrm{d}y)\mathring{\rho}(x),
\]
where the probability of rejection $\mathring{\rho}(x)\in[0,1]$ is
such that $\mathring{P}\bigl(x,\,\cdot\,\bigr)$ defines a probability
distribution on $\mathsf{(\mathring{\mathsf{X}}},\mathcal{\mathring{\mathcal{X}}})$
for all $x\in\mathring{\mathsf{X}}$. Assume further that for any
$x_{1},y_{1}\in\mathsf{X}$ and any $A,B\in\mathcal{X}_{2}$ and $C\in\mathcal{B}\bigl((0,\infty)\bigr)$,
\begin{align*}
\int\mathring{\pi}_{x_{1}}(\mathrm{d}x_{2})Q_{xy_{1}}&\bigl(\mathrm{d}\varpi\times\mathrm{d}y_{2}\bigr)  \varpi\mathbb{I\Big\{}x_{2}\in A,\, y_{2}\in B,\,\frac{1}{\varpi}\in C\Big\}\\
 & =\int\mathring{\pi}_{y_{1}}(\mathrm{d}y_{2})Q_{yx_{1}}\bigl(\mathrm{d}\varpi\times\mathrm{d}x_{2}\bigr)\mathbb{I}\{x_{2}\in A,\, y_{2}\in B,\,\varpi\in C\}.
\end{align*}
Then $\mathring{P}$ is reversible with respect to $\mathring{\pi}$\textup{.}\end{lem}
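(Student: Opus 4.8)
The plan is to verify the detailed balance equation $\mathring{\pi}(\mathrm{d}x)\mathring{P}(x,\mathrm{d}y)=\mathring{\pi}(\mathrm{d}y)\mathring{P}(y,\mathrm{d}x)$ by the usual reduction: the diagonal contribution $\delta_x(\mathrm{d}y)\mathring{\rho}(x)$ is automatically symmetric once multiplied by $\mathring{\pi}(\mathrm{d}x)$, so it suffices to check symmetry of the off-diagonal (acceptance) part \citep[cf.][]{tierney-note}. Writing $x=(x_1,x_2)$ and $y=(y_1,y_2)$, the forward acceptance flux, as a measure on the five variables $(x_1,x_2,y_1,y_2,\varpi)$ before integrating out $\varpi$, is
\[
\pi(\mathrm{d}x_1)\mathring{\pi}_{x_1}(\mathrm{d}x_2)\,q(x_1,\mathrm{d}y_1)\,Q_{xy_1}(\mathrm{d}\varpi\times\mathrm{d}y_2)\,\min\{1,r(x_1,y_1)\varpi\},
\]
and I would show that integrating this against $\mathbb{I}\{x_2\in A,\,y_2\in B\}\,\mathrm{d}\varpi$ yields a value invariant under $x\leftrightarrow y$. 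Because $r(x_1,y_1)=0$ off the symmetric set $\mathsf{R}$, both the forward and the reverse fluxes vanish there, so I may restrict to $(x_1,y_1)\in\mathsf{R}$, where $r$ is positive and finite and $r(x_1,y_1)r(y_1,x_1)=1$.

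The algebraic heart of the argument is the multiplicative identity $\min\{1,a\}=a\,\min\{1,a^{-1}\}$ for $a>0$, which on $\mathsf{R}$ gives
\[
\min\{1,r(x_1,y_1)\varpi\}=r(x_1,y_1)\varpi\,\min\{1,r(y_1,x_1)\varpi^{-1}\}.
\]
Substituting this and then peeling off the factor $\pi(\mathrm{d}x_1)q(x_1,\mathrm{d}y_1)r(x_1,y_1)$, I would invoke the defining Radon--Nikodym property of $r$, namely $\pi(\mathrm{d}x_1)q(x_1,\mathrm{d}y_1)r(x_1,y_1)=\pi(\mathrm{d}y_1)q(y_1,\mathrm{d}x_1)$ on $\mathsf{R}$. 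This reduces the claim to proving that
\[
\int\mathring{\pi}_{x_1}(\mathrm{d}x_2)Q_{xy_1}(\mathrm{d}\varpi\times\mathrm{d}y_2)\,\varpi\,\min\{1,r(y_1,x_1)\varpi^{-1}\}\mathbb{I}\{x_2\in A,\,y_2\in B\}
\]
equals the corresponding expression with $\mathring{\pi}_{y_1}(\mathrm{d}y_2)Q_{yx_1}(\mathrm{d}\varpi\times\mathrm{d}x_2)\min\{1,r(y_1,x_1)\varpi\}$, which is exactly the remaining $Q$-dependent part of the reverse flux.

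This final reduction is what the hypothesised symmetry condition supplies, under the change of variable $\varpi\mapsto\varpi^{-1}$. Reading $g(x_2,\varpi',y_2):=\min\{1,r(y_1,x_1)\varpi'\}\mathbb{I}\{x_2\in A,\,y_2\in B\}$ as a bounded measurable test function and extending the stated identity from indicators $\mathbb{I}\{1/\varpi\in C\}$ to such $g$ by a routine monotone-class argument (linearity and monotone convergence), the left-hand side above—carrying the weight $\varpi$ and evaluating $g$ at $1/\varpi$—matches the right-hand side term by term. Reassembling the detached factor $\pi(\mathrm{d}y_1)q(y_1,\mathrm{d}x_1)$ then identifies the forward flux from $x$ to $y$ with the reverse flux from $y$ to $x$, which is the desired reversibility. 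I expect the only delicate points to be the bookkeeping that confines all the nonlinear manipulations to $\mathsf{R}$, where $r$ is finite and $r(x_1,y_1)r(y_1,x_1)=1$, and the explicit promotion of the symmetry hypothesis from indicator sets in $\varpi$ to the $\min$-valued function $g$; both are standard but should be stated rather than glossed over.
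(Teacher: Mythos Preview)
Your proposal is correct and follows essentially the same route as the paper: reduce to the off-diagonal part citing \citep{tierney-note}, restrict to $\{r(x_1,y_1)>0\}$, apply $\min\{1,a\}=a\min\{1,a^{-1}\}$ together with the defining identity $\pi(\mathrm{d}x_1)q(x_1,\mathrm{d}y_1)r(x_1,y_1)=\pi(\mathrm{d}y_1)q(y_1,\mathrm{d}x_1)$, and then invoke the symmetry hypothesis on $Q$. The paper writes this as a single chain of equalities using sections $A_{x_1},B_{y_1}$ of general test sets $A,B\in\mathring{\mathcal{X}}$ and leaves the passage from indicators in $\varpi$ to the function $\min\{1,r(y_1,x_1)\varpi\}$ implicit, whereas you spell out the monotone-class step; otherwise the arguments coincide.
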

\begin{proof}
Let $A,B\in\mathring{\mathcal{X}},$ then we may write, restricting
the integrals below to the set $\{r(x_{1},y_{1})>0\}$ \citep[see][]{tierney-note},
\begin{align*}
 \int\mathring{\pi}&(\mathrm{d}x)q(x_{1},\mathrm{d}y_{1})\int Q_{xy_{1}}(\mathrm{d}\varpi\times\mathrm{d}y_{2})\min\left\{ 1,r(x_{1},y_{1})\varpi\right\} \mathbb{I}\{x\in A,\: y\in B\}\\
 &
 =\int\pi(\mathrm{d}x_{1})q(x_{1},\mathrm{d}y_{1})r(x_{1},y_{1})\int\mathring{\pi}_{x_{1}}(\mathrm{d}x_{2})Q_{xy_{1}}\bigl(\mathrm{d}\varpi\times\mathrm{d}y_{2}\bigr)\varpi\\
 &\phantom{=}\times \min\left\{ \frac{r(y_{1},x_{1})}{\varpi},1\right\} 
 \mathbb{I}\{x_{2}\in A_{x_{1}},\: y_{2}\in B_{y_{1}}\}\\
 &
 =\int\pi(\mathrm{d}y_{1})q(y_{1},\mathrm{d}x_{1})\int\mathring{\pi}_{y_{1}}(\mathrm{d}y_{2})Q_{yx_{1}}\bigl(\mathrm{d}\varpi\times\mathrm{d}x_{2}\bigr)\min\left\{
 1,r(y_{1},x_{1})\varpi\right\} \\
 &\phantom{=}\times \mathbb{I}\{x_{2}\in A_{x_{1}},\: y_{2}\in B_{y_{1}}\},
\end{align*}
where $\, A_{x_{1}}:=\{x_{2}\in\mathsf{X}_{2}:(x_{1},x_{2})\in A\}$
and $B_{y_{1}}:=\{y_{2}\in\mathsf{X}_{2}:(y_{1},y_{2})\in B\}.$ We
can now conclude.
\end{proof}


\section{Spectral gaps of augmented kernels}
\label{sec:augmented} 

Recall that the left spectral gap of a $\mu$-reversible Markov kernel
$\Pi$ can be defined as 
\[
{\textstyle {\rm Gap}_{L}}(\Pi):=\inf_{\|f\|_{\mu}=1}\big\langle f,(I+\Pi_{\nu})f\big\rangle_{\mu},
\]
and the absolute spectral gap is defined as ${\rm Gap}(\Pi):=\min\{{\rm Gap}_{L}(\Pi),{\rm Gap}_{R}(\Pi)\}\in[0,1]$.
\begin{lem}
\label{lem:augmented-gaps}Assume $\Pi$ is a Markov kernel on a measurable
space $(\mathsf{E},\mathcal{F})$ and reversible with respect to a
probability measure $\mu$. Suppose $\Pi$ has the form 
\[
\Pi(x,{\mathrm{d}}y)=p(x,{\mathrm{d}}y)+\delta_{x}({\mathrm{d}}y)r(x),
\]
 where $p(x,{\mathrm{d}}y)\ge0$ is a sub-probability kernel and $r(x)\in[0,1]$
for all $x\in\mathsf{E}$. Assume $\nu(x,A)$ is a probability kernel
from $(\mathsf{E},\mathcal{F})$ to another measurable space $(\mathsf{S},\mathcal{S})$,
and define the Markov kernel 
\[
\Pi_{\nu}(x,w;{\mathrm{d}}y\times{\mathrm{d}}u):=p(x,{\mathrm{d}}y)\nu(y,{\mathrm{d}}u)+\delta_{x,w}({\mathrm{d}}y\times{\mathrm{d}}u)r(x).
\]
 Then, denoting $r^{*}:=\mathrm{\mu-ess\, sup}_{x}r(x)$ and $r_{*}:=\mathrm{\mu-ess\, inf}_{x}r(x),$
\begin{enumerate}[label=(\alph*)]
\item \label{item:aug-reversibility} $\Pi_{\nu}$ is reversible with respect
to $\mu_{\nu}({\mathrm{d}}x\times{\mathrm{d}}w)=\mu({\mathrm{d}}x)\nu(x,{\mathrm{d}}w)$, 
\item \label{item:aug-dirichlet} $\mathcal{E}_{\Pi}(f)=\mathcal{E}_{\Pi_{\nu}}(f)$
for all $f\in L^{2}(\mathsf{E,\mu)}$, with $f(x,w)=f(x)$,
\item \label{item:aug-right-gap} $\min\{{\textstyle {\rm Gap}_{R}}(\Pi),1-r^{*}\}\le{\textstyle {\rm Gap}_{R}}(\Pi_{\nu})\le{\textstyle {\rm Gap}_{R}}(\Pi)$, 
\item \label{item:aug-left-gap} $\min\{{\textstyle {\rm Gap}_{L}}(\Pi),1+r_{*}\}\le{\textstyle {\rm Gap}_{L}}(\Pi_{\nu})\le{\textstyle {\rm Gap}_{L}}(\Pi)$. 
\end{enumerate}
\end{lem}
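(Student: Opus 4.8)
The plan is to reduce all four claims to a single quadratic-form identity relating the energy forms of $\Pi_\nu$ on $L^2(\mu_\nu)$ to those of $\Pi$ on $L^2(\mu)$, after which the gap inequalities in \ref{item:aug-right-gap} and \ref{item:aug-left-gap} become a short convexity argument. For \ref{item:aug-reversibility} I would verify detailed balance directly: the holding part $\delta_{x,w}(\cdot)r(x)$ is automatically symmetric, while for the off-diagonal part I would use the relation $\mu(\mathrm{d}x)p(x,\mathrm{d}y)=\mu(\mathrm{d}y)p(y,\mathrm{d}x)$ inherited from the reversibility of $\Pi$; tensoring both sides with $\nu(x,\mathrm{d}w)\nu(y,\mathrm{d}u)$ and relabelling gives reversibility of $\Pi_\nu$ with respect to $\mu_\nu$, which in turn justifies the variational gap characterisations used later. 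For \ref{item:aug-dirichlet}, a function with $f(x,w)=f(x)$ satisfies $(f(x,w)-f(y,u))^2=(f(x)-f(y))^2$, the holding part of $\Pi_\nu$ contributes nothing to the Dirichlet form, and integrating out $w$ and $u$ against the probability kernels $\nu$ collapses $\mathcal{E}_{\Pi_\nu}(f)$ to $\mathcal{E}_\Pi(f)$.

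The heart of the argument is the following identity, which I would establish for arbitrary $g\in L^2(\mathsf{E}\times\mathsf{S},\mu_\nu)$. Writing $\bar g(x):=\int\nu(x,\mathrm{d}w)g(x,w)$ for the conditional mean and $v(x):=\int\nu(x,\mathrm{d}w)g(x,w)^2-\bar g(x)^2\ge0$ for the conditional variance, and letting $\theta\in\{-1,+1\}$ be a fixed sign,
\[
\bigl\langle g,(I+\theta\Pi_\nu)g\bigr\rangle_{\mu_\nu}
=\bigl\langle\bar g,(I+\theta\Pi)\bar g\bigr\rangle_\mu
+\int\mu(\mathrm{d}x)\,\bigl(1+\theta r(x)\bigr)\,v(x).
\]
To obtain this I would split $\Pi_\nu$ into its off-diagonal part and its holding part (multiplication by $r(x)$). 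The off-diagonal part maps $g$ to the $w$-constant function $x\mapsto(p\bar g)(x)$, so its contribution is $\langle\bar g,p\bar g\rangle_\mu$, while the holding part yields $\int\mu(\mathrm{d}x)r(x)\int\nu(x,\mathrm{d}w)g(x,w)^2$. Using $\langle\bar g,p\bar g\rangle_\mu=\langle\bar g,\Pi\bar g\rangle_\mu-\int\mu\,r\,\bar g^2$ (from $\Pi=p+r\,\mathrm{Id}$ on $L^2(\mu)$), together with $\|g\|_{\mu_\nu}^2-\|\bar g\|_\mu^2=\int\mu\,v$, and collecting terms produces the display. This computation, and in particular getting the holding-term contribution and the signs exactly right, is where the real work lies; the rest is bookkeeping.

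With the identity in hand, \ref{item:aug-right-gap} and \ref{item:aug-left-gap} follow in parallel. For the lower bounds take $\|g\|_{\mu_\nu}=1$ (and, for the right gap only, $\mu_\nu(g)=0$, which forces $\mu(\bar g)=0$ so that $\bar g\in L_0^2(\mathsf{E},\mu)$). Since $v\ge0$ and $1+\theta r(x)\ge1+\theta r^{\dagger}$ a.e.\ (with $r^{\dagger}=r^*$ for $\theta=-1$ and $r^{\dagger}=r_*$ for $\theta=+1$), and since $\int\mu\,v=1-\|\bar g\|_\mu^2$, the identity bounds the form below by $G\,\|\bar g\|_\mu^2+(1+\theta r^{\dagger})\bigl(1-\|\bar g\|_\mu^2\bigr)$, where $G$ is the corresponding gap ${\rm Gap}_R(\Pi)$ or ${\rm Gap}_L(\Pi)$ of $\Pi$, obtained from $\langle\bar g,(I+\theta\Pi)\bar g\rangle_\mu\ge G\|\bar g\|_\mu^2$ via the variational definitions. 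Because $\|\bar g\|_\mu^2\in[0,1]$ (by conditional Jensen), this is an affine, hence convex, combination of $G$ and $1+\theta r^{\dagger}$, so it is at least $\min\{G,1+\theta r^{\dagger}\}$; taking the infimum over $g$ gives the lower bounds. The matching upper bounds ${\rm Gap}_R(\Pi_\nu)\le{\rm Gap}_R(\Pi)$ and ${\rm Gap}_L(\Pi_\nu)\le{\rm Gap}_L(\Pi)$ follow by restricting the infimum to $w$-constant test functions $g(x,w)=f(x)$, for which $v\equiv0$ and the identity reduces to $\langle f,(I+\theta\Pi)f\rangle_\mu$. The main obstacle throughout is the exact derivation of the quadratic-form identity; once it is correct, both chains of gap inequalities are immediate.
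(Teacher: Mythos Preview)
Your proposal is correct and follows essentially the same route as the paper. The paper decomposes $f=f_0+\bar f$ with $f_0(x)=\int\nu(x,\mathrm{d}w)f(x,w)$ and derives $\langle f,\Pi_\nu f\rangle_{\mu_\nu}=\langle f_0,\Pi f_0\rangle_\mu+\langle\bar f,r\bar f\rangle_{\mu_\nu}$, then bounds the two terms exactly as you do; your conditional-variance notation $v(x)$ and the unified sign parameter $\theta\in\{-1,+1\}$ are cosmetic repackagings of the same identity and the same convex-combination estimate.
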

\begin{proof}
The reversibility \ref{item:aug-reversibility} follows from \citep[cf.][]{tierney-note}
\[
\mu_{\nu}({\mathrm{d}}x\times{\mathrm{d}}w)p(x,{\mathrm{d}}y)\nu(y,{\mathrm{d}}u)=\mu_{\nu}({\mathrm{d}}y\times{\mathrm{d}}u)p(y,{\mathrm{d}}x)\nu(x,{\mathrm{d}}w).
\]
Consider then a function $f\in L^{2}(\mathsf{E}\times\mathsf{S},\mu_{\nu})$.
We write any such $f$ as $f=\bar{f}+f_{0}$ where $f_{0}(x,w)=f_{0}(x):=\int\nu(x,{\mathrm{d}}w)f(x,w)$
and $\bar{f}=f-f_{0}$. It is straightforward to check that $\Pi_{\nu}f_{0}(x,w)=\Pi f_{0}(x)$
and that $\Pi_{\nu}f=\Pi f_{0}+r\bar{f}$, implying $\Pi_{\nu}\bar{f}(x,w)=r\bar{f}$.
These allow us to write 
\begin{align}
\label{eq:aug-dirichlet}
\big\langle f,\Pi_{\nu}f\big\rangle_{\mu_{\nu}}&=\big\langle
f_{0},\Pi_{\nu}f_{0}\big\rangle_{\mu_{\nu}}+\big\langle\bar{f},\Pi_{\nu}\bar{f}\big\rangle_{\mu_{\nu}}+2\big\langle
f_{0},\Pi_{\nu}\bar{f}\big\rangle_{\mu_{\nu}}\\
&=\big\langle f_{0},\Pi
f_{0}\big\rangle_{\mu}+\langle\bar{f},r\bar{f}\rangle_{\mu_{\nu}}.\nonumber
\end{align}
For $f$ constant in the second variable, we have $f_{0}=f$ and $\bar{f}=0$
and $\|f_{0}\|_{\mu}=\|f\|_{\mu_{\nu}}$, implying \ref{item:aug-dirichlet}.

For the spectral gap bounds, assume $\|f\|_{\mu_{\nu}}=1$ and note
that $1=\|f\|_{\mu_{\nu}}^{2}=\|f_{0}\|_{\mu_{\nu}}^{2}+\|\bar{f}\|_{\mu_{\nu}}^{2}$.
This, with \eqref{eq:aug-dirichlet}, allows us to deduce 
\begin{align*}
{\textstyle {\rm
Gap}_{R}}(\Pi_{\nu})&=\inf_{\|f\|_{\mu_{\nu}}=1}\big\langle
f,(I-\Pi_{\nu})f\big\rangle_{\mu_{\nu}}\\
&=\inf_{\|f\|_{\mu_{\nu}}=1}1-\big\langle
f,\Pi_{\nu}f\big\rangle_{\mu_{\nu}}\\
&=\inf_{\|f\|_{\mu_{\nu}}=1}\big\langle f_{0},(I-\Pi)f_{0}\big\rangle_{\mu}+\langle\bar{f},(1-r)\bar{f}\rangle_{\mu_{\nu}},
\end{align*}
where the $\inf$ and $\sup$ are taken over functions $f\in L_{0}^{2}(\mathsf{E}\times\mathsf{S},\mu_{\nu})$.
The right hand side inequality in \ref{item:aug-right-gap} follows
by restricting to functions constant in the second variable. For the
first inequality, note that $\langle\bar{f},(1-r)\bar{f}\rangle_{\mu_{\nu}}\ge(1-r^{*})\|\bar{f}\|_{\mu_{\nu}}^{2}$
and $\big\langle f_{0},(I-\Pi)f_{0}\big\rangle_{\mu}\ge\mathrm{Gap}(\Pi)\|f_{0}\|_{\mu}^{2}$.
The claim follows because $\|f_{0}\|_{\mu_{\nu}}^{2}+\|\bar{f}\|_{\mu_{\nu}}^{2}=1$.

Similarly for the left gap \ref{item:aug-left-gap}, 
\begin{align*}
{\textstyle {\rm
Gap}_{L}}(\Pi_{\nu})&=\inf_{\|f\|_{\mu_{\nu}}=1}\big\langle
f,(I+\Pi_{\nu})f\big\rangle_{\mu_{\nu}}\\
& =\inf_{\|f\|_{\mu_{\nu}}=1}\big\langle f_{0},(I+\Pi)f_{0}\big\rangle_{\mu}+\langle\bar{f},(1+r)\bar{f}\rangle_{\mu_{\nu}},
\end{align*}
and $\langle\bar{f},(1+r)\bar{f}\rangle_{\mu_{\nu}}\ge(1+r_{*})\|\bar{f}\|_{\mu_{\nu}}^{2}$.\end{proof}
\begin{rem}
\label{rem:continuous-gaps}In the following special scenarios some
of the conclusions of Lemma \ref{lem:augmented-gaps} take a simpler
form.
\begin{enumerate}[label=(\alph*)]
\item \label{item:gap-continuous-pi}If $\mu$ is not concentrated on points,
that is, $\mu(\{x\})=0$ for all $x\in\mathsf{E}$, then $\mathrm{Gap}_{R}(\Pi)\le1-r^{*}$
and therefore ${\textstyle {\rm Gap}_{R}}(\Pi_{\nu})={\textstyle {\rm Gap}_{R}}(\Pi).$
\item \label{item:augmented-positivity}If $\Pi$ is positive, that is,
$\big\langle g,\Pi g\big\rangle_{\mu}\ge0$ for all $g\in L^{2}(\mathsf{E},\mu)$,
then $\Pi_{\nu}$ is positive, and consequently $\mathrm{Gap}(\Pi)=\mathrm{Gap}_{R}(\Pi)$
and $\mathrm{Gap}(\Pi_{\nu})=\mathrm{Gap}_{R}(\Pi_{\nu}).$ 
\end{enumerate}
Item \ref{item:gap-continuous-pi} is a restatement of \citep[Theorem 54]{andrieu-vihola-2012},
and \ref{item:augmented-positivity} follows because positivity of
$\Pi$ is equivalent to $\mathrm{Gap}_{L}(\Pi)\ge1$, implying $\mathrm{Gap}_{L}(\Pi_{\nu})\ge1.$
\end{rem}

\end{document}